\newcommand{\vecc}{\mathbf{c}}
\newcommand{\id}{{\mathbb I}}
\newcommand{\tol}{\textnormal{TOL}}
\newcommand{\real}{{\mathbb R}}
\newcommand{\dx}{\textnormal{d}x}
\newcommand{\textn}[1]{\textnormal{#1}}
\newcommand{\pt}{\partial_t}
\newcommand{\dy}{\textnormal{d}y}
\newcommand{\dzeta}{\textnormal{d}\zeta}
\newcommand{\bse}{\begin{subequations}}
\newcommand{\ese}{\end{subequations}}
\DeclareMathOperator*{\argmin}{arg\,min}
\newcommand{\Cbb}{{\mathbb C}}
\newcommand{\Dbb}{{\mathbb D}}
\newcommand{\Ccal}{\mathcal{C}}
\newcommand{\Bcal}{\mathcal{B}}
\newcommand{\Zcal}{\mathcal{Z}}
\newcommand{\Ncal}{\mathcal{N}}
\newcommand{\Lcal}{\mathcal{L}}
\newcommand{\Fcal}{\mathcal{F}}
\newcommand{\Jcal}{\mathcal{J}}
\newcommand{\Rcal}{\mathcal{R}}
\newcommand{\Dcal}{\mathcal{D}}
\newcommand{\sigmabo}{\boldsymbol{\sigma}}
\newcommand{\Sigmabo}{\boldsymbol{\Sigma}}
\newcommand{\vecN}{\mathbf{N}}
\newcommand{\vecX}{\mathbf{X}}
\newcommand{\dnabla}{\nabla \nabla}
\newcommand{\dnablaii}{\nabla_i \nabla_i}
\newcommand{\dnablajj}{\nabla_j \nabla_j}
\newcommand{\vertiii}[1]{{\vert\kern-0.25ex\vert\kern-0.25ex\vert #1 \vert\kern-0.25ex\vert\kern-0.25ex\vert}}
\newcommand{\norm}[1]{\lVert#1\rVert}
\newcommand{\bnorm}[1]{\Big\lVert#1\Big\rVert}
\newcommand{\Hdiv}{\mathbf{H}(\divvv,\Omega)}
\newcommand{\divvv}{{\textnormal{div}}}
\newcommand{\etaRK}{\eta_{\textnormal{R},K}}
\newcommand{\etaDF}{\eta_{\textnormal{DF},K}}
\newcommand{\etaphih}{\eta_{\Phi}}
\newcommand{\etagh}{\eta_{g,K}}
\newcommand{\etarh}{\eta_{r,K}}
\newcommand{\etaosch}{\eta_{f}}
\newcommand{\etaphihj}{\eta_{\Phi,i}}
\newcommand{\etaghj}{\eta_{g,K,i}}
\newcommand{\etarhj}{\eta_{r,K,i}}
\newcommand{\dist}{\textnormal{dist}}
\crefname{hypothesis}{Hypothesis}{Hypotheses}
\title{Modeling the process of speciation using a multi-scale framework including a posteriori error estimates\thanks{Submitted to the editors DATE.
\funding{This work was funded 
in part by Norwegian Research Council project no.~263149.}}}
\author{
	Mats K. Brun\thanks{CEES, Dept. of Biosciences, University of Oslo, NO-0316 Oslo, Norway. Current address: Expert Analytics, NO-0179 Oslo, Norway 
  			(\email{m.k.brun@ibv.uio.no}/\email{mats@xal.no}).} 
	\and Elyes Ahmed\thanks{SINTEF, NO-0314 Oslo, Norway 
  			(\email{elyes.ahmed@sintef.no}).}
	\and Jan M. Nordbotten\thanks{Department of Mathematics, University of Bergen, NO-5020 Bergen, Norway 
  			(\email{jan.nordbotten@uib.no}).}
	\and Nils Chr. Stenseth\thanks{CEES, Dept. of Biosciences, University of Oslo, NO-0316 Oslo, Norway 
  			(\email{n.c.stenseth@mn.uio.no}).}
}
\DeclareMathOperator{\diag}{diag}
\begin{document}
\nolinenumbers
\maketitle

\begin{abstract}
This paper concerns the modeling and numerical simulation of the process of speciation. In particular, given conditions for which one or more speciation events within an ecosystem occur, our aim is to develop the necessary modeling and simulation tools. Care is also taken to establish a solid mathematical foundation on which our modeling framework is built. This is the subject of the first half of the paper. The second half is devoted to developing a multi-scale framework for eco-evolutionary modeling, where the relevant scales are that of species and individual/population, respectively. The species level model we employ can be considered as an extension of the classical Lotka-Volterra model, where in addition to the species abundance, the model also governs the evolution of the species mean traits and species trait covariances, and in this sense generalizes the purely ecological Lotka-Volterra model to an eco-evolutionary model. Although the model thus allows for evolving species, it does not (by construction) allow for the branching of species, i.e., speciation events. The reason for this is related to that of separate scales; the unit of species is too coarse to capture the fine-scale dynamics of a speciation event. Instead, the branching species should be regarded as a population of individuals moving along a selection of trait axes (i.e., trait-space). For this, we employ a trait-specific population density model governing the dynamics of the population density as a function of evolutionary traits. At this scale there is no a priori definition of species, but both species and speciation may be defined a posteriori as e.g., local maxima and saddle points of the population density, respectively.  
Hence, a system of interacting species can be described at the species level, while for branching species a population level description is necessary. 
Our multi-scale framework thus consists of coupling the species and population level models where speciation events are detected in advance and then resolved at the population scale until the branchin is complete.
Moreover, since the population level model is formulated as a PDE, we first establish the well-posedness in the time-discrete setting, and then derive the a posteriori error estimates which provides a fully computable upper bound on an energy-type error, including also for the case of general smooth distributions (which will be useful for the detection of speciation events). 
Several numerical tests validate our framework in practice.
\end{abstract}

\begin{keywords}
Adaptive dynamics; speciation; multi-scale; a posteriori error estimates.
\end{keywords}

\begin{AMS}
92-10, 92B05, 92B99
\end{AMS}

\section{Introduction}
\label{sec:intro}
Mathematical models have a long history in ecology and evolutionary biology, with the most famous example being the Lotka-Volterra model~\cite{leigh1968ecological}, which describes the interaction of two species (commonly referred to as predator and prey) at a timescale where individual traits remain constant, i.e., no evolution. Going beyond strictly ecological models, the interaction between ecology and evolution has been studied extensively in the so-called ``adaptive dynamics" literature (see e.g.,~\cite{dieckmann1997can,dieckmann2007adaptive,gavrilets2004fitness,geritz2000adaptive}, and the references therein). In particular, the concept known as ``adaptive speciation", i.e., the idea that a series of small adaptive changes in the traits of individuals over long enough time leads to a diversity of species, has received considerable attention~\cite{dieckmann1999origin,dieckmann2004adaptive,doebeli1996quantitative,doebeli2000evolutionary,doebeli2003speciation,rosenzweig1978competitive}. In the current work, we are studying the process of adaptive speciation at the level of model formulation where the \emph{species} is regarded as the fundamental unit of the eco-evolutionary system, and where the species dynamics is emergent from the dynamics at the population/individual level.

The process of \emph{speciation} is a biological phenomenon in which a population within one species gradually evolves into two (or more) distinct species (for classical literature on speciation, see e.g.,~\cite{coyne2004speciation,gavrilets2004fitness,nosil2012ecological,price2008speciation}). As such, speciation is an emergent property of natural selection acting on a population of individual organisms. If the individual is regarded as the fundamental unit of the evolutionary process, the point at which the gradual evolutionary changes has accumulated sufficiently to produce a new species becomes a matter of definition (e.g., reproductive isolation). On the other hand, as species is regarded as the fundamental unit, it becomes important to separate evolution into two categories; \emph{cladogenesis}, which is the splitting of a parent species into new distinct child species, and \emph{anagenesis}, which is the gradual evolution of a species that continues to exists. Modeling the process of speciation using mathematical models therefore presents a different set of challenges whether one takes the individual or the species as the fundamental unit of the eco-evolutionary system. 

The difficulty in modeling the process of speciation using any species interaction-type model is inherent in the model itself, i.e., it is assumed a-priori that distinct species can be identified at all times within the population. Since a speciation event will necessarily imply some ambiguity in the identification of distinct species (at least for a period of time), a breakdown in the underlying assumptions of the model is therefore unavoidable. In mathematical terms; the \emph{coarse-scale} species level model is by construction unable to capture the \emph{fine-scale} population dynamics of a speciation event. The mathematical challenge in modeling speciation as an emergent property is therefore related to the connection of these two scales. Relevant studies concerning this difficulty has covered e.g., evolutionary branching driven by stochastic mutations~\cite{vukics2003speciation}, derivation of an eco-evolutionary model at the species level but not including the variability in species abundance~\cite{debarre2014multidimensional}, and the derivation of a model describing the interaction of different morphs within the same species~\cite{sasaki2011oligomorphic}.  

A fundamental observation of real biological systems is that most of the time individuals are clearly grouped into distinct species. Thus, it is natural to take the species as the fundamental unit of any eco-evolutionary model. However, another observation is that over evolutionary time speciation events do occur. Hence, from the time there is only the parent species to the time when increasingly diverging traits among the constituent individuals has resulted in new child species, the natural unit to consider is instead the individual (or population). These observations form the basis of our developments; most of the time we assume the species level formulation is the correct description of a given biological system (i.e., distinct species evolve as species), and only for the (short) intermediate time interval between the existence of parent and child species, we temporarily discard the species-centric view and instead regard the relevant population as trait specific distributions. 

The species-interaction model we take as the starting point is formulated as a system of ordinary differential equations (ODE) describing the dynamics of a biological system in which any finite number of species interact and evolve adaptively over evolutionary time (referred to in the sequel as the \emph{species level}, SLM, or \emph{macro-scale} model). This model extends the Lotka-Volterra system by representing each species not only by its abundance, but also by its mean traits coordinate and trait covariance matrix. It originates in a recent work by two of the authors~\cite{nordbotten2016asymmetric}, wherein the model equations were derived by applying a moment closure averaging technique on a deterministic trait-specific population density model (referred to in the sequel as the \emph{population level}, PLM, or \emph{micro-scale} model), under the assumptions that (A1) distinct species can be identified at all times, and (A2) that the species distributions are of a known statistical quality (in this case, the normal distribution). See also related derivations in e.g.,~\cite{debarre2014multidimensional,sasaki2011oligomorphic}.  

The population level model is formulated as a partial differential equation (PDE), and governs the abundance density as a function of evolutionary traits. Moreover, the authors showed in~\cite{nordbotten2016asymmetric} that the species and population level models are consistent, i.e., they describe the same system dynamics as long as the system behavior is such that both models are valid. Still, these descriptions are fundamentally different in that they operate on separate scales; the species level model operates with the coarser unit of distinct species, while the population level model operates with the finer unit of the individual (i.e., the category of species is not imposed upon the individuals who make up the population). Thus, the species level model described above approximates the system dynamics governed by the population level model, under the conditions that the population can be grouped into distinct species, and that each species' distribution conforms to a normal distribution. While this requirement that trait distributions of a species are essentially normally distributed may seem restrictive, we note that the shape of a distribution is not an intrinsic quality of the observed system, but is equally dependent on the scale of measurement. This assumption can thus be considered not as much as a limitation on the biological system, as a constraint on the scale of measurement. Note also that due to the deterministic nature of the population level model, this property is also inherited at the species level.

The first part of our developments concerns monitoring the species in the system and estimating the deviation from the true population density function. We do this as follows: Since each species is described by an abundance-trait-covariance tuple, we map these to time dependent distributions. Then, using derived a posteriori error estimates of the population level model, we track the modeling error of each species' associated distribution function. The second part concerns the case of residual blow-up, which implies a model breakdown at the species level, here associated with a speciation event. Concerning the actual speciation event we compare two different approaches with regards to splitting the parent species into new child species;
(1) \emph{Heuristic approach}: When a residual blow-up is detected, we proceed to split the relevant reconstructed distribution function along the trait directions orthogonal to the direction of divergence. Each of these sub-distributions are then mapped back to abundance-trait-variance tuples, thus yielding the new child species to be incorporated at the species level. (2) \emph{Multi-scale approach}: When a residual blow-up is detected, we delegate the relevant species' reconstructed density function to the population level model to be solved in an appropriate local region of trait space while coupled to the species level model which governs the interaction of the remaining species in the system. Simultaneously, we map the local abundance density function to  abundance-trait-variance tuples and measure the distance between the mean trait coordinates. At such time when the new child species are sufficiently separated in trait space (e.g., by a multiple of the maximum trait standard deviation), we incorporate the new child species at the species level and decouple the multi-scale model. We assess the accuracy of each approach by comparing the species parameters (pre- and post-speciation) to the corresponding statistical moments of the reference (global) PLM solution. 

Based on our results, we find that the multi-scale approach gives the better approximation of the reference PLM solution post-speciation, and thus conclude that a multi-scale approach is indeed necessary for robust eco-evolutionary modeling at the level of species interaction, capable of handling speciation events. Although we base our developments on the specific models described above, our methodology is more general; it is applicable on any species interaction model which, including the species abundance, also incorporates the evolution of species mean traits and species trait covariance in some way (thus allowing for the reconstruction of population density distributions), whether the model is based on deterministic or stochastic dynamics. Figure \ref{fig:jan} below shows a schematic representation of the micro/macro-scale coupling strategy for speciation events.

\begin{figure}[h]
\centering
  \includegraphics[width=.6\linewidth]{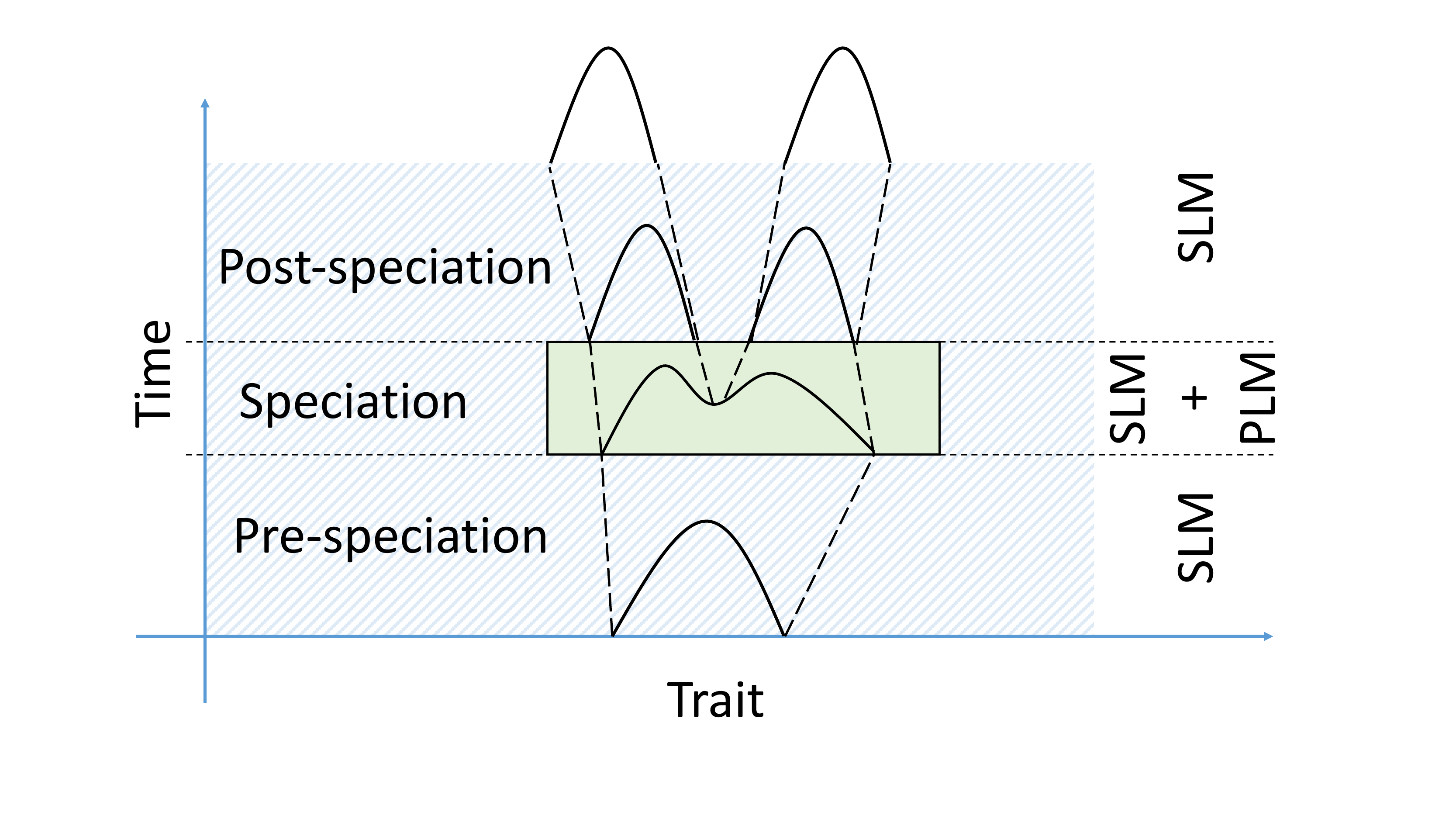}
 \caption{Coupling of SLM and PLM during speciation events. Outside the green area the macro-scale (SLM) model is solved (here, the curve represents the reconstructed distribution function associated with each species' abundance-trait-covariance tuple), the green area indicates the micro-scale (PLM) model is solved (here, the curve represents the abundance density, or, solution function of the PLM).}
\label{fig:jan}
\end{figure}

The article is organized as follows: In Section \ref{sec:models} we present the relevant population and species level models. In Section \ref{sec:timedisc} we introduce the time discrete population level model and proceed to analyze this. In Section \ref{sec:eestimates} we derive the a posteriori error bound of the population level model. In Section \ref{sec:speciation} we present in detail the methodology for species splitting and micro-macro scale coupling. In Section \ref{sec:detection} we refine the derived error bound for the case of individual species.  In Section \ref{sec:ex} we provide numerical examples where we test our multi-scale framework in detail. Finally, in Section \ref{sec:conclusions} we provide some concluding remarks.
\section{Models}
\label{sec:models}
In this section we present both the (micro-scale) population level model (PLM) and (macro-scale) species level model (SLM) as described in~\cite{nordbotten2020dynamics}. We also briefly discuss the consistency of these model formulations.
\subsection{The trait-specific population level model}
The PLM describes the abundance of a population in terms of the population density as a function of evolutionary traits over evolutionary time. The model relates the rate of change in population density to birth/death rate of individuals, interaction between individuals with different evolutionary traits (cooperation/competition), and interaction between individuals with equal evolutionary traits (self-limitation). The evolution of the population, i.e., change in traits from one generation to the next, is incorporated as a diffusion process.  

Given an open, bounded domain $\Omega \subset \real^d$ in trait space (i.e., the number of evolutionary traits considered is equal to $d \geq 1$), and final time $T>0$, let $\Omega_T := \Omega \times (0,T)$ be the space-time domain, and let $n: \Omega_T \rightarrow \real$ be the trait-specific population density, or, number of individuals per measure on $\real^d$. The domain $\Omega$ can then be regarded as all evolutionary traits attainable by the relevant population over evolutionary time, and the total number of individuals present at any time $t \in (0,T)$ is then given by integration over all attainable traits, i.e., $\int_\Omega n(x,t) \dx$, where we require non-negativity of the population density, i.e., 
\begin{equation}
n \geq 0, \textn{ for all } (x,t) \in \Omega_T.
\end{equation}
Furthermore, let $f: \Omega_T \rightarrow \real$ be the source term (net rate of migration/immigration). The model then reads as the following nonlinear and non-local equation:
\begin{equation}
\pt n - rn + \Phi(n)n -\nabla \cdot(g\nabla n) = f. \label{PDE} 
\end{equation}
The ecological processes consists of the growth term, $-rn$, where $r : \Omega_T \rightarrow \real$ is the inherent per-capita growth-rate, and the interaction term, 
\begin{equation}\label{def_interaction}
\Phi(n)n := bn^2 - n \int_\Omega \alpha(x,y,t)n(y,t)\dy,
\end{equation}
where $b : \Omega_T \rightarrow \real_{\geq 0}$ is the local individual limitation coefficient, 
and $\alpha : \Omega \times \Omega_T \rightarrow \real$ is the non-local individual interaction coefficient, i.e., $\alpha(x,y,t)$ is the effect of an individual with trait coordinate $y$ on an individual with trait coordinate $x$. The diffusive term represents the evolutionary process, where $g : \Omega_T \rightarrow \real^{d\times d}$ is the intergenerational trait diffusion tensor. Generally speaking, the model equation \eqref{PDE} belongs to the class of reaction-diffusion type parabolic conservation equations (in this case, with a non-local reaction term). We remark that several generalizations of \eqref{PDE} are possible.  In particular, the diffusive term as stated models incremental evolutionary processes: To allow for rare but possibly non-incremental evolutionary processes more general non-local operators (such as e.g., fractional derivatives), should be considered.

Since homogeneous boundary conditions of Dirichlet type is always justified from a biological perspective given a large enough trait domain $\Omega$, we shall only consider this situation for the present purposes. We let initial data be specified by $n(x,0) = n_0(x)$. The PLM then reads as the following initial/boundary value problem:

Find $n:\Omega_T \rightarrow \real$ such that $n \geq 0$ and
\bse
\begin{alignat}{3}
\pt n - rn + \Phi(n)n - \nabla \cdot (g\nabla n) &= f, \quad &\textnormal{ in } &\Omega_T \label{model1}\\
n &= 0, &\textnormal{ on } &\partial \Omega \times (0,T), \label{modelBC}\\
n &= n_0, &\textnormal{ in } &\Omega \times \{0\}. \label{modelIC}
\end{alignat}
\ese
For additional details regarding the above model, see \cite{nordbotten2016asymmetric}. 
\subsection{The species level model}
The SLM describes the temporal evolution of the abundance, mean traits coordinate, and trait covariance matrix of all species in an eco-evolutionary system. Similarly to the PLM, the rate of change of the species abundance is related to the growth/death rate, self-limitation and cooperation/competition, but here among distinct species. Evolution is thus incorporated in the model by the change in mean traits (i.e., location in trait space) and change in trait covariance (i.e., spread in trait space). 

Indexing the species present in the ecosystem by $i = 1,\cdots,s$ (where $s \geq 1$ is the total number of species), we denote the abundance, mean traits coordinate, and trait covariance matrix for species $i$ at time $t \in (0,T)$ by $n_i(t) \in \real$, $x_i(t) \in \real^d$ and $\upsilon_i(t) \in \real^{d\times d}$, respectively, and let $N_i := (n_i,x_i,\upsilon_i)$ be the tuple representing the $i$'th species, and similarly let $N := (N_1,\cdots,N_s)$ represent the full $s$-species ecosystem. Furthermore, let $F_{0,i}(t) \in \real$ be the source of individuals for species $i$, with associated mean and covariance $F_{1,i}(t) \in \real^d$ and $F_{2,i}(t) \in \real^{d\times d}$, respectively. The model then reads as the following system of ODE's (for $i = 1,\cdots,s$):
\bse
\begin{alignat}{2}
\frac{\textnormal{d}n_i}{\textnormal{d}t} &= \vecN_i(N) 
:= R_in_i - B_in_i^2 + n_i \sum_{\substack{j = 1 \\ j\neq i}}^{s}A_{i,j}n_j + F_{0,i}, 
&t \in (0,T),\label{ODE1}\\
\frac{\textnormal{d}x_i}{\textnormal{d}t} &= \vecX_i(N)
:= \upsilon_i \left(\nabla r_i - \frac{n_i}{2\Lambda(\upsilon_i)}\nabla b_i + \sum_{j=1}^{s} n_j \nabla_i \alpha_{i,j} + F_{1,i}\right), \
&t \in (0,T),\label{ODE2}\\
\frac{\textnormal{d}\upsilon_i}{\textnormal{d}t} 
&= \Sigmabo_i(N) := V_{0,i} + V_{1,i} \upsilon_i + \upsilon_iV_{2,i}\upsilon_i + F_{2,i}, 
&t \in (0,T),\label{ODE3}
\end{alignat}
\ese
where $\Lambda(\upsilon_i) := \sqrt{(2\pi)^d}|\upsilon_i|^{\frac{1}{2}}$ and where (for $i,j = 1,\cdots,s$)
\bse
\begin{alignat}{2}
R_i &:= r_i + \frac{1}{2} \nabla \nabla r_i:\upsilon_i, \\
B_i &:= \frac{1}{\Lambda(\upsilon_i)} \left(b_i + \frac{1}{4} \dnabla b_i : \upsilon_i \right) 
- \left(\alpha_{i,i} + \dnablaii \alpha_{i,i} : \upsilon_i\right), \\
A_{i,j} &:= \alpha_{i,j} + \frac{1}{2} \dnablaii \alpha_{i,j} : \upsilon_i
+ \frac{1}{2} \dnablajj \alpha_{i,j} : \upsilon_j, \\
V_{0,i} &:= 2g_i, \\
V_{1,i} &:= \frac{n_i b_i}{2\Lambda(\upsilon_i)}, \\
V_{2,i} &:= \nabla \nabla r_i + \frac{1}{4} \sum_{j=1}^s \nabla_i \nabla_i \alpha_{i,j}n_j 
+ \frac{n_i}{4\Lambda(\upsilon_i)} \left(\frac{1}{2}(\upsilon_i : \nabla \nabla b_i)\upsilon_i^{-1} - \nabla \nabla b_i\right).
\end{alignat}
\ese
Here, we employed the following notational convention: $r_i(t) := r(x_i(t),t)$, $\nabla r_i(t) := (\nabla r)(x_i(t),t)$ and $\nabla \nabla r_i(t) := (\nabla \nabla r)(x_i(t),t)$, and similarly for the other PLM coefficients. Subscripts on differential operators indicate on which argument the operator is acting. For the full $s$-species ecosystem, we abbreviate the above model as 
\begin{equation}
\Ncal(N) := (\vecN_1(N),\vecX_1(N),\Sigmabo_1(N),\cdots,\vecN_s(N),\vecX_s(N),\Sigmabo_s(N)),
\end{equation}
and let initial data be given by
\begin{equation}
N(0) = N_{0} := (N_{1,0},\cdots,N_{s,0}).
\end{equation}
The species level model then reads as the following initial value problem:

Find $N :(0,T) \rightarrow \real^{s(1+d+d^2)}$ such that $n_i \geq 0$, for $1\leq i \leq s$, and 
\bse
\begin{alignat}{2}
\dfrac{\textnormal{d}N}{\textnormal{d}t} &= \Ncal(N), \quad && t\in(0,T), \label{ODEmodel}\\
N &= N_0, &&t = 0. \label{ODEmodelIC}
\end{alignat}
\ese
In practice, the problem \eqref{ODEmodel}--\eqref{ODEmodelIC} can be reduced to $s(1+d+d(d+1)/2)$ equations due to the symmetry of the trait covariance matrix. For additional details including a derivation of the above model, see~\cite{nordbotten2020dynamics}. Note also that when indexing the components of a point $x\in\Omega$ we use subscripts, i.e., $x = (x_1,\cdots,x_d)$, not to be confused with the species mean trait coordinate $x_i$, where we write $x_i(t) = (x_{i,1}(t),\cdots,x_{i,d}(t))$. 
\subsection{Relationship between population and species level models}
If there is a time interval $J := (t_{0},t_1) \subset (0,T)$ where $s \geq 1$ species can unambiguously be identified, there exists a collection of non-overlapping (possibly time dependent), open and bounded regions in trait space, i.e., $\{B_i \subset \Omega, i \leq s,  B_i \cap B_j = \emptyset, i \neq j\}$, where the population density is compactly supported and contains only one local maximum. Thus, if we let $\mu_{B_i}^k(n)$, be the $k$'th order moment of $n$ (for $k \in \{0,1,2\}$, and assuming $n$ is smooth enough for these moments to exist), within the region $B_i$, i.e.,
\bse
\begin{alignat}{1}
\mu_{B_i}^0(n) &:= \int_{B_i} n(x)\dx,\label{m0}\\ 
\mu_{B_i}^1(n) &:= \frac{1}{\mu_{B_i}^0(n)} \int_{B_i} n(x)x\dx,\label{m1}\\
\mu_{B_i}^2(n) &:= \frac{1}{\mu_{B_i}^0(n)} \int_{B_i} n(x)(x-\mu_{B_i}^1)\otimes (x-\mu_{B_i}^1)\dx,\label{m2}
\end{alignat}
\ese
we can initialize the SLM at $t=t_{0}$ with initial data
\begin{equation}
N_{0} = (\mu_{B_{1}}^0,\mu_{B_{1}}^1,\mu_{B_{1}}^2,\cdots,\mu_{B_{s}}^0,\mu_{B_{s}}^1,\mu_{B_{s}}^2)|_{t=t_0}.
\end{equation}
Solving \eqref{ODEmodel}--\eqref{ODEmodelIC} then amounts to approximating \eqref{m0}--\eqref{m2} for $t \in J$.
\section{Time discrete PLM}\label{sec:timedisc}
In this section we introduce the time discrete PLM, which will serve as the basis for the a posteriori error bounds: 

For $k = 0,\cdots,M$, we let $t_k$ denote the discrete times, such that $0 = t_0 < t_1 < \cdots < t_M := T$, and where $t_k = t_{k-1} + \tau_k = \sum_{j = 1}^k \tau_k$, for some set of time increments $\{\tau_k\}_{k=1}^M$ (in the following, we use superscripts to indicate dependency on the discrete times, e.g., $u^k := u(t_k)$). The time-discrete version of eq.~\ref{PDE} then reads as (for $k\geq 1$)
\begin{equation}
n^k - \tau_k r^k n^k + \tau_k \Phi^k(n^k)n^k - \tau_k \nabla \cdot (g^k \nabla n^k) = \tau_k f + n^{k-1},\label{timediscPDE}
\end{equation}
where we have discretized in time using a standard implicit first order method (i.e., the backward Euler method). While we do not prove convergence of the above discrete scheme to the original continuous problem, the local truncation error for the backward Euler method is known to be of second order, and one would expect first-order convergence of the time-discrete problem to the continuous problem for sufficiently smooth problems. 

We remark that the time-discrete problem has significant biological relevance by itself, as many processes are naturally modeled on a discrete time-scale due to the presence of strong temporal cycles (e.g., day, year, and generation).

In the following, we first establish the existence of a solution to \eqref{timediscPDE} in the weak sense for every discrete time, then we derive an identity for the residual which also provides uniqueness of the time-discrete solution.
\subsection{Notation}
We employ standard notation for function spaces. For a function space $V$ we denote by $\norm{\cdot}_V$ its energy norm, by $V^\ast$ its dual space, and by $\langle \cdot, \cdot \rangle_{V^\ast,V}$ the dual pairing. In particular, for $1 \leq p \leq \infty$, and a domain $D \subset \Omega$ we denote by $L^p(D)$ the Lebesgue-spaces of integrable functions defined on $D$, with associated norm $\norm{\cdot}_{p,D}$ (domain subscript omitted if this is clear from the context). Furthermore, let $L^p_{+}(D) := \{u\in L^p(D) : u \geq 0 \textn{ a.e.}\}$. For the case $p=2$, we let $\norm{u} := \norm{u}_{2} = (u,u)^{\frac{1}{2}}$, where $(\cdot,\cdot)$ is the standard $L^2$ inner product. Moreover, we denote by $\Hdiv$ the space of functions in $[L^2(D)]^d$ admitting a weak divergence, and by $H^1(D)$ the space of functions in $L^2(D)$ admitting weak gradients, with its vanishing trace subspace denoted by $H_0^1(D)$, where $H^{-1}(D) := H_0^1(D)^\ast$. Finally, denote by $H_{0,+}^1(D) := H_0^1(D) \cap L^2_{+}(D)$.

\subsection{Preliminaries}
Regarding the coefficients and source term of the PLM we introduce the following assumption.
\begin{assum}[Coefficients and source term]
\label{assum:coeff}
assume that $r,b,\alpha,g$ and $f$ are defined for all discrete times $t_k$ such that the following holds for all for $1 \leq k \leq M$:
\bse
\begin{enumerate}
\item
$r^k,\, b^k \in L^\infty(\Omega)$, and $\alpha^k \in L^\infty(\Omega \times \Omega)$, such that 
\begin{equation}\label{Rassump}
|(r^ku,u)| \leq R^k\norm{u}^2, \quad \forall u \in L^2(\Omega),
\end{equation}
for some set of positive constants $\{R_k\}_{k=1}^M$.
\item
The time increments $\tau_k$ are chosen such that 
\begin{equation}\label{Rtauassump}
\gamma_k := 1 - \tau_kR_k > 0.
\end{equation}
\item 
$g^k \in [L^\infty(\Omega)]^{d\times d}$, symmetric and uniformly positive definite, and such that 
\begin{equation}\label{Gassump}
 0< G_k|\zeta|^{2}\leq \zeta^{\textn{T}}g^k(x)\zeta, \quad \forall \zeta \in \real^d \setminus \{0\},
\end{equation}
for some set of positive constants $\{G_k\}_{k=1}^M$.
\item $n_0 \in H_{0,+}^1(\Omega)$.
\item $f^k \in H^{-1}(\Omega)$.
\end{enumerate}
\ese
\end{assum}
Due to  \eqref{Rtauassump} and \eqref{Gassump}, we now equip the space $H_{0}^1(\Omega)$ with the following equivalent inner product 
\begin{equation}\label{Xinner}
(u,v)_k := ((1-\tau_kr^k)u,v) + \tau_k(g^k\nabla u, \nabla v), 
\end{equation}
and associated energy norm
\begin{equation}\label{Xnorm}
\vertiii{v}^2_k := (v,v)_k,
\end{equation}
and define the Hilbert space $X_k := (H_{0,+}^1(\Omega),(\cdot,\cdot)_k)$, and corresponding dual space $X_k^\ast$ with respect to $L^2(\Omega)$. For $\xi \in X_k^\ast$, we denote the dual norm by
\begin{equation}\label{defdualnorm}
\vertiii{\xi}_{k^\ast} := 
\sup_{\underset{\vertiii{v}_k=1}{v\in X_k,}}\langle  \xi, v \rangle_k,
\end{equation}
where $\langle  \cdot, \cdot \rangle_k := \langle  \cdot, \cdot \rangle_{X_k^\ast,X_k}$ is the dual pairing. Furthermore, let $\Lcal(X_k,X_k^\ast)$ be the space of all linear maps from $X_k$ into $X_k^\ast$. 
For $\Xi \in \Lcal(X_k,X_k^\ast)$, we denote the operator norm by
\begin{equation}
\vertiii{\Xi}_{k,k^\ast} := 
\sup_{\underset{\vertiii{v}_k=1}{v\in X_k,}} \vertiii{\Xi(v)}_{k^\ast}.
\end{equation}
Due to Assumption~\ref{assum:coeff} and the Poincar\'e inequality, there exists a constant $c_{k,\Omega} > 0$ (depending on $r^k$, $g^k$ and $\tau_k$ in addition to the domain $\Omega$) such that
\begin{equation}\label{Xkcontembedding}
\norm{v} \leq c_{k,\Omega}\vertiii{v}_k, \quad \forall v\in X_k.
\end{equation}
For $1\leq k \leq M$ we now introduce the following assumption regarding the nonlinearities of the PLM:
\begin{assum}[Nonlinear terms]
\label{assum:analysis}
Assume that $\Phi^k(u) \in \Lcal(X_k,X_k^\ast)$ for any $u\in X_k$, such that
\bse
\begin{enumerate}
\item
\begin{equation}
\langle \Phi^k(u)v,v\rangle_k \geq 0, \qquad \forall v\in X_k.
\end{equation}
\item
For every ball $\Bcal_a \subset X_k$ of finite radius $a > 0$, there exists $L^k > 0$ (depending on $k$ and $a$) such that there holds
\begin{equation}\label{PhiLip}
\vertiii{\Phi^k(u) - \Phi^k(v)}_{k,k^\ast} \leq L^k\norm{u -v}_{1} \quad \forall u,v \in \Bcal_a.
\end{equation}
  \item 
  $\Phi^k(u)$ is a monotone operator, i.e., there holds 
  \begin{equation}\label{monotonicity}
  \langle \Phi^k(u)u - \Phi^k(v)v, u - v \rangle_k \geq 0, \quad \forall u,v \in X_k.
\end{equation}
\end{enumerate}
\ese
\end{assum}
With this, we introduce now the weak time-discrete formulation of the problem \eqref{model1}--\eqref{modelIC}: For $k \geq 1$ and given $n^{k-1} \in X_k$ find $n^k \in X_k$ such that the following integral equality holds 
\bse\label{weak_form}
\begin{align}
(n^k,v)_k + \tau_k\langle\Phi^k(n^k)n^k,v\rangle_k = \tau_k\langle f^k,v\rangle_k + (n^{k-1},v), \qquad \forall v \in X_k, \label{weak1}
\end{align}
and such that initial condition \eqref{modelIC} is satisfied in the weak sense, i.e.,  
\begin{equation}\label{weakIC}
(n^0,v) = (n_0,v), \qquad \forall v \in X_k.
\end{equation}
\ese
\begin{remark}[Assumptions]
Assumption \ref{assum:coeff} is, to the best of the authors' understanding, natural from a biological point of view. On the other hand, Assumption \ref{assum:analysis} is introduced for the analysis, and may be more restrictive than desired for certain biological systems. We will return to this issue in the numerical examples (cf. Section~\ref{sec:ex}).
\end{remark}
\subsection{Solvability}
In this section we discuss the solvability of the weak time-discrete PLM \eqref{weak_form}. We summarize the result in the following theorem.
\begin{theorem}[Well posedness]\label{thm:wellposed}
For $d \geq 3$ and given Assumptions \ref{assum:coeff} and \ref{assum:analysis}, then for every $1\leq k \leq M$ there exists a unique solution $n^k \in X_k$ to problem \eqref{weak_form} satisfying
\begin{equation}
\vertiii{n^k}_k \leq \tau_k\vertiii{f^k}_{k^\ast} + c_{k,\Omega}\norm{n^{k-1}}.
\end{equation}

\end{theorem}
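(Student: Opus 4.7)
The strategy is to recast the weak time-discrete equation~\eqref{weak1} as an abstract operator equation in $X_k^\ast$ and invoke the Browder--Minty surjectivity theorem for monotone operators. Define $A_k : X_k \to X_k^\ast$ by
\begin{equation*}
\langle A_k(u), v\rangle_k := (u,v)_k + \tau_k \langle \Phi^k(u)u, v\rangle_k,
\end{equation*}
and $F_k \in X_k^\ast$ by $\langle F_k, v\rangle_k := \tau_k \langle f^k, v\rangle_k + (n^{k-1}, v)$. Both are well defined since $f^k \in H^{-1}(\Omega) = X_k^\ast$, and $n^{k-1} \in X_k \hookrightarrow L^2(\Omega)$ is identified with an element of $X_k^\ast$ through the embedding~\eqref{Xkcontembedding}.

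First I would verify the three hypotheses of Browder--Minty. For \emph{coercivity}, taking $v = u$ and applying item~1 of Assumption~\ref{assum:analysis} yields $\langle A_k(u),u\rangle_k \geq \vertiii{u}_k^2$, so $\langle A_k(u),u\rangle_k/\vertiii{u}_k \to \infty$ as $\vertiii{u}_k \to \infty$. For \emph{strict monotonicity}, item~3 of Assumption~\ref{assum:analysis} gives
\begin{equation*}
\langle A_k(u)-A_k(v), u-v\rangle_k = \vertiii{u-v}_k^2 + \tau_k \langle \Phi^k(u)u-\Phi^k(v)v, u-v\rangle_k \geq \vertiii{u-v}_k^2.
\end{equation*}
For \emph{hemi-continuity}, fix $u,v,w \in X_k$ and split
\begin{equation*}
\langle \Phi^k(u+tv)(u+tv)-\Phi^k(u+t_0v)(u+t_0v), w\rangle_k
\end{equation*}
telescopically into a first piece controlled by the Lipschitz bound~\eqref{PhiLip} (using $X_k \hookrightarrow L^1(\Omega)$ on bounded $\Omega$) and a second piece controlled by the operator norm of $\Phi^k(u+t_0 v)$; both vanish as $t \to t_0$.

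Browder--Minty then yields existence of $n^k \in X_k$ with $A_k(n^k) = F_k$, and strict monotonicity gives uniqueness, since $A_k(u)=A_k(v)$ forces $\vertiii{u-v}_k=0$. For the a priori bound, I would test~\eqref{weak1} with $v = n^k$, drop the nonnegative term $\tau_k\langle \Phi^k(n^k)n^k, n^k\rangle_k$, and apply Cauchy--Schwarz together with the dual norm definition~\eqref{defdualnorm} and~\eqref{Xkcontembedding} to obtain
\begin{equation*}
\vertiii{n^k}_k^2 \leq \tau_k \vertiii{f^k}_{k^\ast}\vertiii{n^k}_k + c_{k,\Omega}\norm{n^{k-1}}\vertiii{n^k}_k,
\end{equation*}
from which division by $\vertiii{n^k}_k$ yields the stated estimate. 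Non-negativity of $n^k$, implicit in the definition of $X_k$, would be propagated inductively from $n_0 \in H^1_{0,+}(\Omega)$ by testing with the negative part $(n^k)^-$ under a compatible sign condition on $f^k$.

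The main obstacle I anticipate is hemi-continuity, and, relatedly, ensuring that $A_k$ actually maps $X_k$ into $X_k^\ast$: the interaction term~\eqref{def_interaction} produces cubic-type growth in $u$, and the restriction $d \geq 3$ enters precisely here, through the Sobolev embedding $H_0^1(\Omega) \hookrightarrow L^{2d/(d-2)}(\Omega)$, which must be invoked to control $\Phi^k(u)u$ in $X_k^\ast$ uniformly on bounded subsets of $X_k$.
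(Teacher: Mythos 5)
Your proof is correct and rests on the same structural ingredients as the paper's (coercivity from item~1 of Assumption~\ref{assum:analysis}, the local Lipschitz property~\eqref{PhiLip}, and the monotonicity~\eqref{monotonicity}), but it routes them through a genuinely different theorem. The paper proceeds in three stages: it first solves a linearized problem by Lax--Milgram, then obtains existence for the nonlinear problem via a Schauder fixed-point argument --- which is where weak compactness, the Rellich--Kondrachov embedding (the paper's stated reason for $d\geq 3$), and~\eqref{PhiLip} are consumed --- and only afterwards invokes monotonicity, through the residual--error identity of Lemma~\ref{lem:edist}, to get uniqueness. You instead fold existence and uniqueness into a single application of Browder--Minty: monotonicity does the heavy lifting for both, \eqref{PhiLip} is demoted to establishing hemicontinuity, and no compactness argument is needed at all, so the dimension restriction plays no role in your fixed-point machinery (your closing remark correctly identifies that $d\geq 3$ would instead be needed to verify that $\Phi^k(u)u$ lands in $X_k^\ast$, but the paper simply postulates $\Phi^k(u)\in\Lcal(X_k,X_k^\ast)$ in Assumption~\ref{assum:analysis}, so this is not a gap). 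Your a priori bound, obtained by testing with $v=n^k$, dropping the nonnegative interaction term, and using~\eqref{Xkcontembedding}, reproduces exactly the estimate the paper extracts from Lax--Milgram. What your route gives up is the residual--error identity itself, which the paper needs anyway as Lemma~\ref{lem:edist} for the a posteriori analysis of Section~\ref{sec:eestimates}; what it buys is a shorter, compactness-free existence proof. One caveat applies equally to both arguments: $X_k=(H^1_{0,+}(\Omega),(\cdot,\cdot)_k)$ is a convex cone, not a linear space, so Browder--Minty (like Lax--Milgram and Schauder) should formally be applied on $H^1_0(\Omega)$ with non-negativity recovered separately --- your final paragraph, which proposes testing with the negative part under a sign condition on the data, is in fact more explicit about this than the paper is.
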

\begin{proof}
The proof follows by a series of calculations done in the next sections. First, we show existence of a linearized problem in~\ref{WPlin}, which we then use to infer existence to the nonlinear problem in~\ref{WPnonlin}. Then, we derive an identity for the dual norm of the residual in~\ref{subsec:residual}, which we then apply to obtain the uniqueness in~\ref{WPuniqueness}. Our existence proof is based on techniques from~\cite{alekseev2016stability}.
\end{proof}
\subsubsection{Linearization}\label{WPlin}
We establish the well-posedness of a linearized problem. Choose $\hat{\Phi} \in \Lcal(X_k,X_k^\ast)$ such that $\langle \hat{\Phi} u, u \rangle_k \geq 0$ for all $u \in X_k$, and define the bilinear form $a^k : X_k \times X_k \rightarrow \real$ by 
\begin{equation}
a^k(u,v) := (u,v)_k + \tau_k\langle\hat{\Phi}u,v\rangle_k.
\end{equation}
The linearized problem then reads: For $k \geq 1$ and given $n^{k-1} \in X_k$, find $\hat{n}^k \in X_k$ such that 
\begin{equation}\label{linearized}
a^k(\hat{n}^k,v) = \tau_k \langle f^k,v\rangle_k + (n^{k-1},v), \qquad \forall v \in X_k.
\end{equation}
The form $a^k$ is continuous and coercive on $X_k$:
\bse
\begin{align}
|a^k(u,v)| &\leq (1 + \tau_k\vertiii{\hat{\Phi}}_{k,k^\ast})\vertiii{u}_k \vertiii{v}_k,\\
a^k(u,u) &\geq \vertiii{u}_k^2.
\end{align}
\ese
Thus, by the Lax-Milgram Theorem~\cite{evans1998partial} there exists a unique solution $\hat{n}^k \in X_k$ to \eqref{linearized} satisfying
\begin{align}
\nonumber \vertiii{\hat{n}^k}_k &\leq \vertiii{\tau_kf^k + n^{k-1}}_{k^\ast} \\
&\leq \tau_k\vertiii{f^k}_{k^\ast} + c_{k,\Omega}\norm{n^{k-1}}, \label{linsol}
\end{align}
since the action of $\tau_kf^k + n^{k-1} \in X_k^\ast$ on $X_k$ is defined by the right hand side of \eqref{weak1}, and where we used the triangle inequality and \eqref{Xkcontembedding} in the second line.
\subsubsection{Nonlinear problem}\label{WPnonlin}
We establish existence of a solution to the nonlinear problem. For given $w \in X_k$, let $\hat{n}^k_w \in X_k$ be the unique solution to the following linear problem
\begin{equation}\label{wlinearized}
(\hat{n}^k_w,v)_k + \tau_k\langle\Phi^k(w)\hat{n}^k_w,v\rangle_k = \tau_k \langle f^k,v\rangle_k + (n^{k-1},v).
\end{equation}
Define the map $\Fcal^k : X_k \rightarrow X_k$ by $\Fcal^k(w) = \hat{n}^k_w$. Due to the result of the previous section this is well defined. Moreover, let 
\begin{equation}
\Bcal^k_a := \{w \in X_k : \vertiii{w}_k \leq \vertiii{\tau_kf^k + n^{k-1}}_{k^\ast} =: a\}.
\end{equation}
Due to \eqref{linsol}, it follows that $\Fcal^k(\Bcal_a^k) \subset \Bcal_a^k$. 

Next step is to show that $\Fcal^k$ is continuous and compact on $\Bcal^k_a$. Let $\{w_j\}_{n=1}^\infty \subset \Bcal^k_a$ be a sequence. Since $X_k$ is a Hilbert space, and since $\{w_j\}_{n=1}^\infty$ is bounded, the Eberlein-\v{S}mulian Theorem~\cite{cheney2013analysis} applies; there exists a subsequence (denoted the same way) and $w \in X_k$ such that $w_j \rightharpoonup w$ weakly in $X_k$. Moreover, since $d \geq 3$, by the Rellich-Kondrachov Theorem~\cite{evans1998partial}, the embedding $X_k \subset L^1(\Omega)$ is compact. Thus, $w_j \rightarrow w$ strongly in $L^{1}(\Omega)$. Now, let $\hat{n}^k_{j} = \Fcal^k(w_j)$. Then $\hat{n}^k_{j}$ solves 
\begin{equation}\label{wjlinearized}
(\hat{n}^k_{j},v)_k + \tau_k\langle\Phi^k(w_j)\hat{n}^k_{j},v\rangle = \tau_k\langle f^k,v\rangle_k + (n^{k-1},v).
\end{equation}
Furthermore, $\hat{n}^k_w = \Fcal^k(w)$ solves \eqref{wlinearized}. This establishes the continuity of $\Fcal^k$. 

Next step is to show $\hat{n}^k_{j} \rightarrow \hat{n}^k_w$ strongly in $X_k$. To this end, subtract \eqref{wlinearized} from \eqref{wjlinearized} to obtain
\begin{equation}\label{lindiff}
(\hat{n}^k_{j} - \hat{n}^k_w,v)_k + \tau_k\langle\Phi^k(w_j)(\hat{n}^k_j - \hat{n}^k_w),v\rangle_k = -\tau_k\langle(\Phi^k(w_j) - \Phi^k(w))\hat{n}^k_j,v\rangle_k.
\end{equation}
Since we have
\begin{align}
\nonumber |\langle(\Phi^k(w_j) - \Phi^k(w))\hat{n}^k_j,v\rangle_k| &\leq \vertiii{(\Phi^k(w_j) - \Phi^k(w))\hat{n}^k_j}_{k^\ast} \vertiii{v}_k \\
\nonumber &\leq \vertiii{\Phi^k(w_j) - \Phi^k(w)}_{k,k^\ast}\vertiii{\hat{n}^k_j}_k\vertiii{v}_k \\
&\leq L^k\norm{w_j - w}_{1}\vertiii{\tau_kf^k + n^{k-1}}_{k^\ast}\vertiii{v}_k, 
\end{align}
it follows that 
\begin{equation}
|\langle(\Phi^k(w_j) - \Phi^k(w))\hat{n}^k_j,v\rangle_k| \rightarrow 0, \qquad \forall v\in X_k,
\end{equation}
and consequently $\hat{n}^k_{j} \rightarrow \hat{n}^k_w$ in $X_k$ as $j \rightarrow \infty$, since the equation \eqref{lindiff} holds for any $v \in X_k$ and is linear in the argument $\hat{n}^k_j - \hat{n}^k_w$. Thus, $\Fcal^k$ is a continuous and compact operator on $\Bcal^k_a$, and the Schauder Theorem~\cite{cheney2013analysis} applies; there exists $n^k \in \Bcal^k_a$ such that $n^k = \Fcal^k(n^k)$, which is the solution to the nonlinear problem \eqref{weak_form}. 
\subsubsection{Residual identity}
\label{subsec:residual}
Based on the previous existence result,  we introduce now the residual operator $\Rcal^k(\psi) \in X_k^\ast$ for a given $\psi \in X_k$ and for all $v \in X_k$ as
\begin{align}
      \label{Residual_operator}
\langle \Rcal^k(\psi), v \rangle_k := \tau_k\langle f^k,v\rangle_k + (n^{k-1}, v) - (\psi,v)_k - \tau_k\langle\Phi^k(\psi)\psi,v\rangle_k.
\end{align}
We also introduce the semi-metric (cf. the monotonicity condition \eqref{monotonicity})
\begin{align}
\nonumber \Jcal^k(n^k,\psi) &:= \tau_k^2\vertiii{\Phi^k(n^k)n^k - \Phi^k(\psi)\psi }^2_{k^\ast} \\
&\qquad+ 2\tau_k\langle\Phi^k(n^k)n^k - \Phi^k(\psi)\psi, n^k - \psi\rangle_k.
\end{align}
The following Lemma now makes precise the connection between the dual norm of the residual and the resulting error. 
\begin{lemma}[Residual-error identity]\label{lem:edist}
For all $\psi \in X_k$ there holds
\begin{equation}\label{res_error}
\vertiii{\mathcal{R}^k(\psi)}^2_{k^\ast} = \vertiii{n^k - \psi}^2_k + \Jcal^k(n^k,\psi).
\end{equation}
\end{lemma}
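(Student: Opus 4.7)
The plan is to express the residual as a functional that can be split into two explicit pieces and then apply the Riesz representation theorem in the Hilbert space $X_k$ to evaluate its dual norm exactly.

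First, I would use the fact that $n^k$ satisfies the weak form \eqref{weak1} to rewrite the definition \eqref{Residual_operator}. Subtracting \eqref{weak1} from the definition of $\langle \Rcal^k(\psi),v\rangle_k$ yields
\begin{equation*}
\langle \Rcal^k(\psi), v\rangle_k = (n^k - \psi, v)_k + \tau_k \langle \Phi^k(n^k)n^k - \Phi^k(\psi)\psi, v\rangle_k, \quad \forall v\in X_k.
\end{equation*}
This already exhibits the residual as the sum of two bounded linear functionals on $X_k$, one of which has an obvious Riesz representative in $X_k$, namely $n^k - \psi$ itself.

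Next, since $X_k$ is a Hilbert space with inner product $(\cdot,\cdot)_k$, the Riesz representation theorem provides a unique $w \in X_k$ such that
\begin{equation*}
\tau_k \langle \Phi^k(n^k)n^k - \Phi^k(\psi)\psi, v\rangle_k = (w, v)_k, \quad \forall v \in X_k,
\end{equation*}
together with the isometry $\vertiii{w}_k = \tau_k \vertiii{\Phi^k(n^k)n^k - \Phi^k(\psi)\psi}_{k^\ast}$. Consequently, the Riesz representative of $\Rcal^k(\psi)$ is $z := (n^k - \psi) + w$, and hence $\vertiii{\Rcal^k(\psi)}_{k^\ast} = \vertiii{z}_k$.

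Finally, I would expand $\vertiii{z}_k^2 = (z,z)_k$ by bilinearity:
\begin{equation*}
\vertiii{z}_k^2 = \vertiii{n^k-\psi}_k^2 + 2(n^k-\psi,w)_k + \vertiii{w}_k^2.
\end{equation*}
The cross term equals $2\tau_k \langle \Phi^k(n^k)n^k - \Phi^k(\psi)\psi, n^k-\psi\rangle_k$ by the defining property of $w$, and the last term is $\tau_k^2 \vertiii{\Phi^k(n^k)n^k - \Phi^k(\psi)\psi}_{k^\ast}^2$. Together these reproduce exactly $\Jcal^k(n^k,\psi)$, giving \eqref{res_error}. There is no real obstacle here; the only subtle point is remembering that the dual norm in \eqref{defdualnorm} is taken over all of $X_k$ (including test functions without a sign), so the Riesz theorem applies cleanly and the supremum in the dual norm is attained, turning what could have been an inequality into an identity.
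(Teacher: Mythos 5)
Your proof is correct and takes essentially the same approach as the paper: the paper's auxiliary function $\phi$, chosen so that $(\phi,v)_k=\tau_k\langle\Phi^k(n^k)n^k-\Phi^k(\psi)\psi,v\rangle_k$ for all $v$, is exactly your Riesz representative $w$, and both arguments conclude by identifying $\vertiii{\Rcal^k(\psi)}_{k^\ast}$ with $\vertiii{(n^k-\psi)+w}_k$ and expanding the square.
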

\begin{proof}
Choose $\phi \in X_k$ such that 
\begin{equation}\label{dualprob}
(\phi,v)_k=\tau_k \langle \Phi^k(n^k)n^k - \Phi^k(\psi)\psi ,v\rangle_k, \qquad \forall v \in X_k.
\end{equation}
Thus, we have the identity
\begin{equation}
\vertiii{\phi}_k = \tau_k\vertiii{\Phi^k(n^k)n^k - \Phi^k(\psi)\psi}_{k^\ast}.
\end{equation}
We can then write the dual norm of the residual as follows
\begin{align}
\nonumber  \vertiii{\mathcal{R}^k(\psi)}_{k^\ast} = &\sup_{\underset{\vertiii{v}_{k}=1}{v\in X_k,}}   \{(n^k-\psi,v)_k+ \tau_k\langle\Phi^k(n^k)n^k - \Phi^k(\psi)\psi,v\rangle_k \}\\
\nonumber =&\vertiii{\phi + n^k-\psi}_k \\
\nonumber =&\Big( \vertiii{\phi}_k^2 + 2(\phi,n^k-\psi)_k  + \vertiii{n^k-\psi}_k^2\Big)^{\frac{1}{2}} \\
\nonumber =&\Big(\tau_k^2 \vertiii{\Phi^k(n^k)n^k - \Phi^k(\psi)\psi }^2_{k^\ast}\\
&\hspace{2cm}+ 2\tau_k\langle\Phi^k(n^k)n^k - \Phi^k(\psi)\psi, n^k-\psi\rangle_k
+ \vertiii{n^k-\psi}_k^2 \Big)^{\frac{1}{2}}.
\end{align}
The assertion follows.
\end{proof}
\begin{remark}[Semi-metric]
Due to the nonlinear nature of the model equation \eqref{weak_form}, the norm of the residual can not be shown to be equal to some energy norm of the error. Instead, it is a combination of a norm and a semi-metric. The semi-metric satisfies non-negativity, symmetry, and identity of indiscernibles, but not the triangle inequality. See e.g., \cite{woodward2000analysis}, where a nonlinear parabolic equation was analyzed and similar terms appear.
\end{remark}
\subsubsection{Uniqueness}\label{WPuniqueness}
It remains to show the uniqueness of $n^k$. To this end, we suppose that
$\psi$ satisfies~\eqref{weak_form}, which implies that $\vertiii{\mathcal{R}^k(\psi)}_{k^\ast}=0$. Then, since the first term on the right hand side of~\eqref{res_error} is a norm and since the second term is convex due to the monotonicity condition~\eqref{monotonicity}, we obtain that $\psi=n^{k}$ in $X_k$. This section concludes the proof of Theorem~\ref{thm:wellposed}.
\section{A posteriori error bound}\label{sec:eestimates}
In this section we adopt energy-type a posteriori error bounds based on the dual norm of the residual for the weak time-discrete PLM, i.e., the problem \eqref{weak_form} using similar techniques as in~\cite{ern2017guaranteed,ern2010posteriori,papevz2018estimating,smears2020simple}. We then provide a guaranteed and fully computable upper bound on the energy--type error in terms of various error estimators, using suitable flux and density reconstructions.

\subsection{Space approximations}
Let $\Zcal_h$ be a partition of the domain $\Omega$, consisting of rectangular  or simplicial elements, such that $\bar{\Omega} = \cup_{K\in \Zcal_h} K$.  We let $h_K$ denote the diameter of the element $K\in \Zcal_h$, and define $h := \max_{K\in \Zcal_h} h_K$ as the maximum diameter over all elements in $\Zcal_h$. For two elements $K, L \in \Zcal_h$, we require their intersection to be either a common face, edge, vertex, or the empty set. We use subscripts to indicate dependency on the discrete mesh $\Zcal_h$, e.g., $u_h$. Our a posteriori error estimates given next will involve  approximations of 
coefficients and source term appearing in~\eqref{weak_form}. Thus, for any function $\varphi^{k}\in \left\{r^{k},\, b^k, \alpha^k,\, g^{k},\, f^{k}\right\}$~(see Assumption~\ref{assum:coeff}), we denote by $\varphi^{k}_{h}$ its approximation satisfying the following orthogonality
\begin{equation}\label{ortho}
 (\varphi^{k}_{h}-\varphi^{k}, v)_{K}=0,\quad \forall v\in L^{2}(\Omega),
\end{equation}
with the approximation of the interaction $\Phi_h^k$ corresponding to \eqref{def_interaction}. Furthermore, for a convex element $K\in\Zcal_h$, there holds the Poincar\'{e}--Friedrichs inequality~(see e.g., \cite{MR2338400}), i.e., 
\begin{equation}
\norm{v - v_k}_K \leq \frac{h_{K}}{\pi} \norm{\nabla v}_K, \quad \forall v \in H^1(K), \label{ineq}
\end{equation}
where $v_k$ is the mean value of $v$ over $K$.
\subsection{First upper bound}
To be agnostic with regards to the scheme (micro or/and macro levels) we use to approximate the weak solution, we make the following definitions:
\begin{df}[Density reconstruction]\label{densityrec}
We call a density reconstruction any function $s_h^k \in X_k$.
\end{df}
\begin{df}[Equilibrated flux and density reconstructions]\label{fluxequi}
We call an equilibrated flux reconstruction any function $\sigmabo_h^k : \Omega \rightarrow \real^d$  which satisfies:
\bse
\begin{alignat}{2}
\label{ass:hdiv_conform} &\sigmabo_h^k\in \Hdiv,&\\
\nonumber &(\tau_k\nabla \cdot \sigmabo^{k}_{h},1)_K &\\
\label{ass:equilibration}  & \ =(\tau_k f^k_h + s_h^{k-1} - (1 - \tau_kr^k_h)s_h^k - \tau_k \Phi^k_h(s_h^k)s_h^k,1)_K, 
\ &1\leq k\leq M, \forall K \in \Zcal_h.
\end{alignat}
\ese
\end{df}
For all $K\in\Zcal_h$,  define the local residual, 
flux, \textit{r}-data, $\Phi$-data, \textit{g}-data, and \textit{f}-data oscillation estimators:
 \bse\label{def:loc_estimators1}
\begin{alignat}{2}
\etaRK^{k} &:= \tilde{\omega}^{k}_{K}\norm{\tau_kf_{h}^k + n^{k-1} - (1 - \tau_kr_h^k)s_h^k -\tau_k\Phi_{h}^k(s_h^k)s_h^k - \tau_k\nabla \cdot \sigmabo^{k}_{h}}_K \label{def:loc_estimators3}\\
\etaDF^k &:= \tau_kG_k^{-\frac{1}{2}}\norm{g_{h}^k \nabla s_h^k + \sigmabo^k_{h}}_K,\label{def:loc_estimators2}\\
\etarh^k &:= \tau_k\tilde{\omega}^{k}_{K}\norm{(r^k - r_h^k)s_h^k}_K,\\
\etaphih^k &:= \tau_k\vertiii{(\Phi^k(s_h^k) - \Phi_{h}^k(s_h^k))s_h^k}_{k^\ast}, \\
\etagh^k &:= \tau_kG_k^{-\frac{1}{2}}\norm{(g^k - g_{h}^k) \nabla s_h^k}_K, \\
\etaosch^{k} &:=  \tau_k\vertiii{f^k -f_h^k}_{k^\ast}.
&  
\end{alignat}
\ese
where we introduced also the following weight
\begin{alignat}{2}
  \tilde{\omega}^{k}_{K}:=\min(\gamma_k^{-\frac{1}{2}}, G_k^{-\frac{1}{2}}\frac{h_{K}}{\pi}).
\end{alignat}
 We  now give the first (abstract)  a posteriori error 
 bound of  the problem~\eqref{weak_form}.
 \begin{theorem}[Energy-error bound]
For $1\leq k \leq M$, let $n^k$ be the (unknown)  exact solution to the weak-time discrete problem~\eqref{weak_form}, $s_h^{k}$ the reconstructed density of Definition~\ref{densityrec}, and $\sigmabo_h^k$ is the reconstructed flux of Definition~\ref{fluxequi}. Then,  the following estimate holds true
\begin{alignat}{2}
&\nonumber\vertiii{n^k - s_h^k}^2_k + \Jcal^k(n^k,s_h^k)& \\\label{first_estimate}
&\qquad \quad\leq \left( \left\{\sum_{K\in \Zcal_h}\left[\etaRK^{k}  + \etarh^k +\etaDF^{k} + \etagh^k \right]^2 \right\}^{\frac{1}{2}} + \etaphih^k + \etaosch^{k}\right)^2.
\end{alignat}
\end{theorem}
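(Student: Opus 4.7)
The strategy is to reduce the claim to bounding the dual norm of the residual. Indeed, Lemma~\ref{lem:edist} gives the identity $\vertiii{n^k - s_h^k}^2_k + \Jcal^k(n^k,s_h^k) = \vertiii{\Rcal^k(s_h^k)}^2_{k^\ast}$, and since $\Jcal^k$ is non-negative by the monotonicity condition~\eqref{monotonicity}, it suffices to bound $\vertiii{\Rcal^k(s_h^k)}_{k^\ast}$ from above by the quantity inside the outer square on the right-hand side of \eqref{first_estimate}; squaring then closes the argument. I would fix a test function $v\in X_k$ with $\vertiii{v}_k=1$ and expand $\langle\Rcal^k(s_h^k),v\rangle_k$ using \eqref{Residual_operator} and \eqref{Xinner}, inserting the data approximations $f_h^k,r_h^k,g_h^k,\Phi_h^k$ by adding and subtracting. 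The globally-defined differences $(f^k-f_h^k)$ and $(\Phi^k(s_h^k)-\Phi_h^k(s_h^k))s_h^k$ are kept in $X_k^\ast$ and controlled by the dual-norm definition \eqref{defdualnorm}, directly yielding $\etaosch^k$ and $\etaphih^k$, while the pointwise data errors in $r$ and $g$ produce $\etarh^k$ and $\etagh^k$ via elementwise Cauchy--Schwarz using the equivalences $\norm{w}_K\le\gamma_k^{-1/2}\vertiii{w}_{k,K}$ and $\tau_k^{1/2}G_k^{1/2}\norm{\nabla w}_K\le\vertiii{w}_{k,K}$ implied by \eqref{Xnorm}.

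Next I would merge the diffusive term with the flux reconstruction. Splitting $-\tau_k(g_h^k\nabla s_h^k,\nabla v)= -\tau_k(g_h^k\nabla s_h^k+\sigmabo_h^k,\nabla v)+\tau_k(\sigmabo_h^k,\nabla v)$, the first piece yields $\etaDF^k$ after elementwise Cauchy--Schwarz, whereas the second piece is integrated by parts against $v\in H_0^1(\Omega)$---legitimate by the $\Hdiv$-conformity \eqref{ass:hdiv_conform} of $\sigmabo_h^k$---turning into $-\tau_k(\nabla\cdot\sigmabo_h^k,v)$, which regroups with the zeroth-order contributions into the elementwise volumetric residual
\begin{equation*}
\sum_{K\in\Zcal_h}\bigl(\tau_k f_h^k + n^{k-1} - (1-\tau_k r_h^k)s_h^k - \tau_k\Phi_h^k(s_h^k)s_h^k - \tau_k\nabla\cdot\sigmabo_h^k,\,v\bigr)_K.
\end{equation*}
Here the equilibration property \eqref{ass:equilibration} plays its essential role: the integrand is $L^2(K)$-orthogonal to constants, so $v$ may be replaced by $v-v_K$ elementwise without affecting the sum. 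A Cauchy--Schwarz step followed by either the Poincar\'e--Friedrichs inequality \eqref{ineq} combined with the gradient norm equivalence, or alternatively the direct embedding bound on $\norm{v}_K$, then supplies the weight $\tilde\omega_K^k$ and assembles the residual estimator $\etaRK^k$.

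Finally, a discrete Cauchy--Schwarz across $K\in\Zcal_h$, using $\sum_K\vertiii{v}_{k,K}^2=\vertiii{v}_k^2=1$, factors the four local estimators into the Euclidean square root shown in \eqref{first_estimate}, after which the two global terms $\etaphih^k$ and $\etaosch^k$ are added by the triangle inequality; taking the supremum over $v$ and squaring, then applying Lemma~\ref{lem:edist}, yields the bound. The principal subtlety I foresee concerns the nonlinear, non-local reaction $\Phi^k(s_h^k)s_h^k$: because of the integral in \eqref{def_interaction}, its data error $(\Phi^k-\Phi_h^k)(s_h^k)s_h^k$ cannot reasonably be localized element by element and must be retained as the single global dual-norm quantity $\etaphih^k$; correspondingly, monotonicity \eqref{monotonicity} is essential so that $\Jcal^k(n^k,s_h^k)\ge 0$ and this term does not need to be subtracted from the right-hand side.
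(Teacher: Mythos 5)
Your proposal follows essentially the same route as the paper's proof: reduce to the dual norm of the residual via Lemma~\ref{lem:edist}, insert the data approximations and the equilibrated flux (integrating the $\Hdiv$-conforming $\sigmabo_h^k$ by parts), exploit the elementwise orthogonality to constants together with the Poincar\'e--Friedrichs inequality alongside the direct $L^2$ bound to obtain the weight $\tilde{\omega}^{k}_{K}$ for the volumetric residual, and finish with a discrete Cauchy--Schwarz over the elements plus dual-norm bounds for the two global terms. The only slips are cosmetic: non-negativity of $\Jcal^k$ is not actually needed since \eqref{res_error} is an equality, and your treatment of the $r$-data term as described yields only the weight $\gamma_k^{-1/2}$ rather than the full minimum $\tilde{\omega}^{k}_{K}$ appearing in the definition of $\etarh^k$ (the paper obtains it by applying to $T_3^k$ the same two-sided estimate used for $T_1^k$).
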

\begin{proof}
We  let $\psi = s_h^k$ in~\eqref{Residual_operator}, then we subtract the terms \\ 
$\tau_k(\sigmabo_{h}^k,\nabla v)+\tau_k(\nabla\cdot\sigmabo^k_{h}, v)=0$, followed by adding and subtracting the  terms \\
$\tau_k(r^k_{h} s_{h}^k,v)$, $\tau_k(\Phi_{h}^k(s_h^k)s_h^k,v)$, $\tau_k(g_{h}^k \nabla s_h^k, \nabla v)$, and $\tau_k(f_{h}^k,v)$, which leads to 
\begin{alignat}{2}
\nonumber \langle \Rcal^k(s^k_{h}), v \rangle_k&=(\tau_k f_{h}^k + n^{k-1} - (1 - \tau_kr_h^k)s_h^k -\tau_k\Phi_{h}^k(s^k_{h})s_{h}^k - \tau_k \nabla \cdot \sigmabo_{h}^k, v). \\
\nonumber&\qquad -\tau_k(g_{h}^k \nabla s_h^k + \sigmabo_{h}^k, \nabla v)
+ \tau_k((r^k - r^k_h)s_h^k,v) \\
\nonumber&\qquad + \tau_k\{(\Phi^k_k(s_h^k)s_h^k,v)  -\langle \Phi_{h}^k(s_h^k)s_h^k,v\rangle_k\} \\
\nonumber &\qquad + \tau_k ((g_h^k - g^k) \nabla s_h^k, \nabla v)
+\tau_k \{\langle f^k,v\rangle_k - (f_{h}^k,v)\} \\
&=: \sum_{i=1}^{6} T_i^k, 
\label{resid_op_detailed}
\end{alignat}
for all $v \in X_k$. Recalling the local estimators~\eqref{def:loc_estimators1}, we  estimate each of the terms $T_1^k$--$T_{6}^k$ on the right hand side of~\eqref{resid_op_detailed}, i.e., for the first term, we can use \eqref{Rtauassump} to get  
\begin{align}\label{res_estim1}
\nonumber T_{1}^k \leq \sum_{K\in \Zcal_h}\gamma^{-\frac{1}{2}}_k\norm{\tau_k f_{h}^k + n^{k-1} - (1 - \tau_kr_h^k)s_h^k -&\tau_k\Phi_{h}^k(s^k_{h})s_{h}^k - \tau_k \nabla \cdot \sigmabo_{h}^k}_K \\
&\cdot \norm{(1 - \tau_kr^k)^{\frac{1}{2}}v}_K.
\end{align}
Also, we can use the equilibration property~\eqref{ass:equilibration} together with the orthogonality \eqref{ortho} to obtain
\begin{alignat}{2}
\nonumber T_{1}^k &= \sum_{K\in \Zcal_h}(\tau_kf_{h}^k + n^{k-1} - (1 - \tau_kr_h^k)s_h^k -\tau_k\Phi_{h}^k(s^k_{h})s_{h}^k - \tau_k\nabla \cdot \sigmabo_{h}^k, v)_{K},\\
\nonumber &= \sum_{K\in \Zcal_h}(\tau_kf_{h}^k + n^{k-1} - (1 - \tau_kr_h^k)s_h^k -\tau_k\Phi_{h}^k(s^k_{h})s_{h}^k \\
&\hspace{6cm}- \tau_k\nabla \cdot \sigmabo_{h}^k, v- v_{K})_{K}.
\end{alignat}
By applying the Poincar\'{e}--Friedrichs inequality \eqref{ineq} we get
\begin{align}
\nonumber T_{1}^k \leq G_k^{-\frac{1}{2}}\frac{h_{K}}{\pi}\sum_{K\in \Zcal_h}\norm{\tau_k f_{h}^k + n^{k-1} - (1 - \tau_kr_h^k)s_h^k -&\tau_k\Phi_{h}^k(s^k_{h})s_{h}^k - \tau_k \nabla \cdot \sigmabo_{h}^k}_K\\
\label{res_estim2} &\cdot\norm{(g^k)^{\frac{1}{2}} \nabla v}_K.
\end{align}
Combining~\eqref{res_estim1} and~\eqref{res_estim2}, we obtain
\begin{equation}
T_{1}^k \leq \sum_{K\in \Zcal_h}\etaRK^k  \vertiii{v}_{K, k}.
\end{equation}
The same procedure applied to the term $T^{k}_{3}$ yields
\begin{equation}
 T_{3}^k \leq \sum_{K\in \Zcal_h}\etarh^k \vertiii{v}_{K,k}.
\end{equation}
On the terms $T_2$ and $T_5$ we use \eqref{Gassump} to obtain
\bse
\begin{align}
T_{2}^k &\leq \sum_{K\in \Zcal_h}\etaDF^k\norm{(g^k)^{\frac{1}{2}} \nabla v}_{K},\\
T_{5}^k &\leq \sum_{K\in \Zcal_h}\etagh^k\norm{(g^k)^{\frac{1}{2}} \nabla v}_{K}.
\end{align}
\ese
Finally, on $T_4$ and $T_6$ we use the definition of the dual norm \eqref{defdualnorm} to obtain
\bse
\begin{align}
T_{4}^k &\leq \etaphih^k\vertiii{v}_{k},\\
T_{6}^k &\leq \etaosch^k\vertiii{v}_{k}.
\end{align}
\ese
Combining the above bounds leads to 
\begin{align}
\nonumber &\langle \Rcal^k(s_h^k), v \rangle_k \\
\nonumber &\quad \leq \sum_{K \in \Zcal_h}\left\{\left[\etaRK^{k}  + \etarh^k +\etaDF^{k} + \etagh^k \right]\vertiii{v}_{K,k}\right\} + (\etaosch^{k} + \etaphih^k) \vertiii{v}_k \\
&\quad\leq \left( \left\{\sum_{K\in \Zcal_h}\left[\etaRK^{k}  + \etarh^k +\etaDF^{k} + \etagh^k \right]^2 \right\}^{\frac{1}{2}} + \etaosch^{k} + \etaphih^k\right)\vertiii{v}_k,
\end{align}
since $\vertiii{v}_{k}^2= \sum_{K \in \Zcal_h}\vertiii{v}_{K,k}^2$. By~\eqref{defdualnorm} and~\eqref{res_error} we prove the assertion.
\end{proof}
\section{Speciation}\label{sec:speciation}
In this section we present our strategies for the actual speciation event. The first is based on splitting the support of the relevant reconstructed density function into subregions of trait space, and mapping back to new species abundance-trait-variance tuples by calculating the statistical moments of the reconstructed density in each subregion. The second approach is based on coupling the PLM and SLM in a multi-scale framework for the duration of the speciation event, where the diverging species is delegated to the PLM to be solved locally in trait space.
\subsection{Trait space density regions}
For any integer $\nu = 1,2,\cdots$, and time step $k \geq 1$, we associate with species $i$ the ($d$-hyper-rectangular) density region, $B_{i,\nu}^k \subseteq \Omega$ centered on $x_i^k$ and with orientation according to the (orthonormal) spectrum of $\upsilon^{k}_i(t)$, and with side lengths equal to $2\nu(\lambda_1^k)^{1/2},\cdots,2\nu(\lambda_d^k)^{1/2}$, where $\{\lambda_j^k, j\leq d\}$ are the eigenvalues of $\upsilon_i^{k}$ (i.e., for any time discrete time $t_k$ the density region $B_{i,\nu}^k$ extends $\nu$ times the $j$'th standard deviation of the trait covariance along the $j$'th trait dimension from the mean traits coordinate $x_i^k$). 
\subsection{Heuristic approach}
If at time $t_k$ there is detected diverging traits within a species $N_i^k$, we split this into two child species, $N_{i,1}^k$ and $N_{i,2}^k$, as follows: The associated density region $B_{i,\nu}^k$ is divided along the directions orthogonal to the largest eigenvector of the trait covariance matrix, $\upsilon_i^k$, into two sub-regions $B_{i,\nu}^{k,1}$ and $B_{i,\nu}^{k,2}$ such that $B_{i,\nu}^{k,1} \cup B_{i,\nu}^{k,2} = B_{i,\nu}^k$, and the new species are initialized by the moments of the reconstructed density function of the parent species in each sub-region, i.e., for $j \in \{1,2\}$, we define
\begin{equation}
N_{i,j}^k := (\mu_{B_{i,\nu}^{k,j}}^0(\Dcal N^k_i),\mu_{B_{i,\nu}^{k,j}}^1(\Dcal N^k_i),\mu_{B_{i,\nu}^{k,j}}^2(\Dcal N^k_i)),
\end{equation}
where $\Dcal : \real^{1+d+d^2} \rightarrow C^\infty(\Omega) \cap X_k$ is the density reconstruction operator (i.e., macro-to-micro scale mapping), such that $\Dcal N^k_i$ is any smooth distribution characterized by a mean and variance, and adhering to the relevant boundary condition (in this case, homogenous Dirichlet). Figure \ref{fig1:bi} below shows the splitting of an example density region in 2D, where $x_i^k = (0,0)$ and $\upsilon_i^k = \diag(\upsilon_{i,11}^k, \upsilon_{i,22}^k)$.
\begin{figure}[h]
\centering
\begin{subfigure}{.45\textwidth}
  \centering
  \includegraphics[width=1.\linewidth]{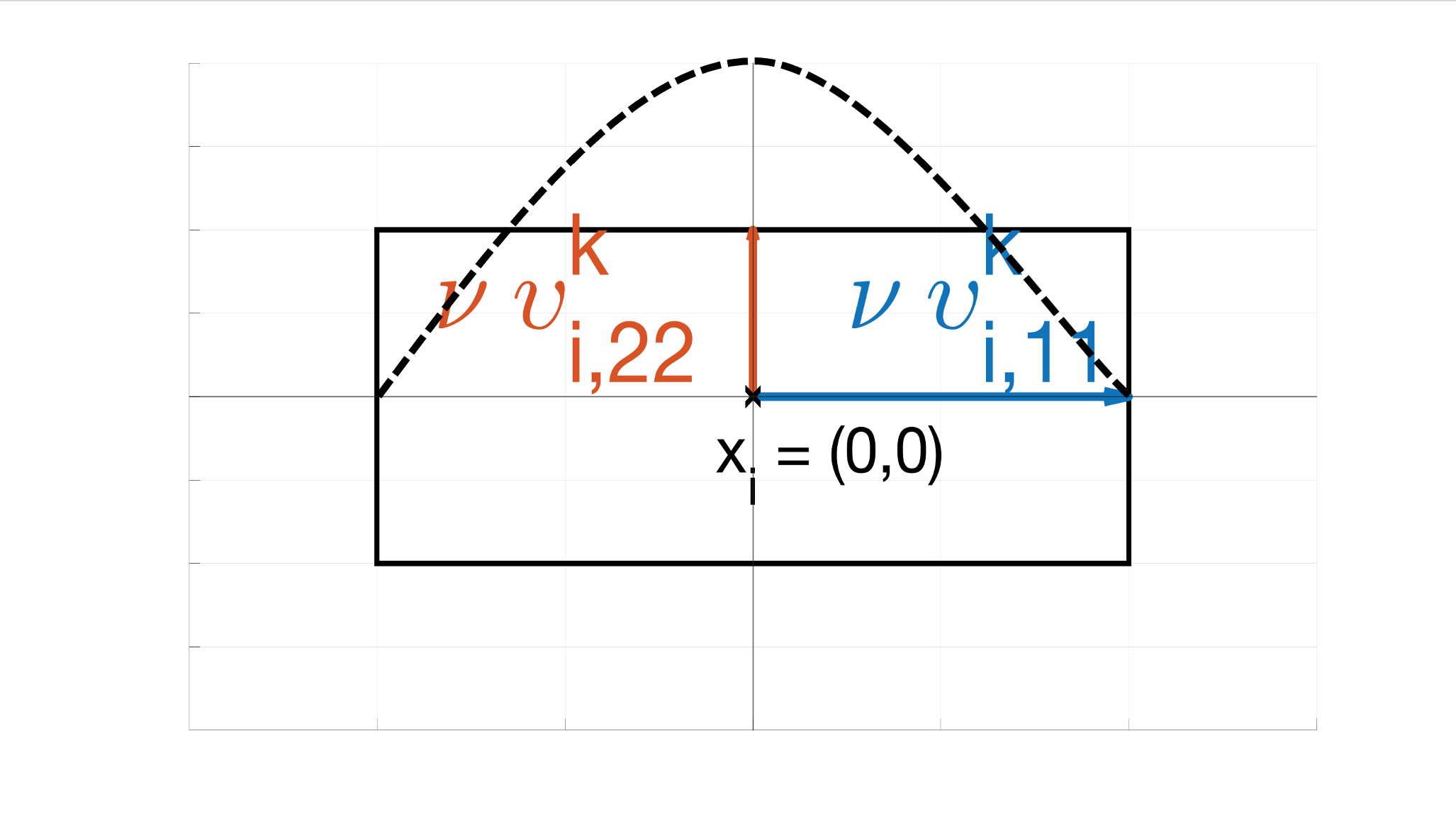}
  \caption{Density region, with spectrum of \\ covariance matrix scaled with $\nu$.}
  \label{subfig11:bi}
\end{subfigure}%
\begin{subfigure}{.45\textwidth}
  \centering
  \includegraphics[width=1.\linewidth]{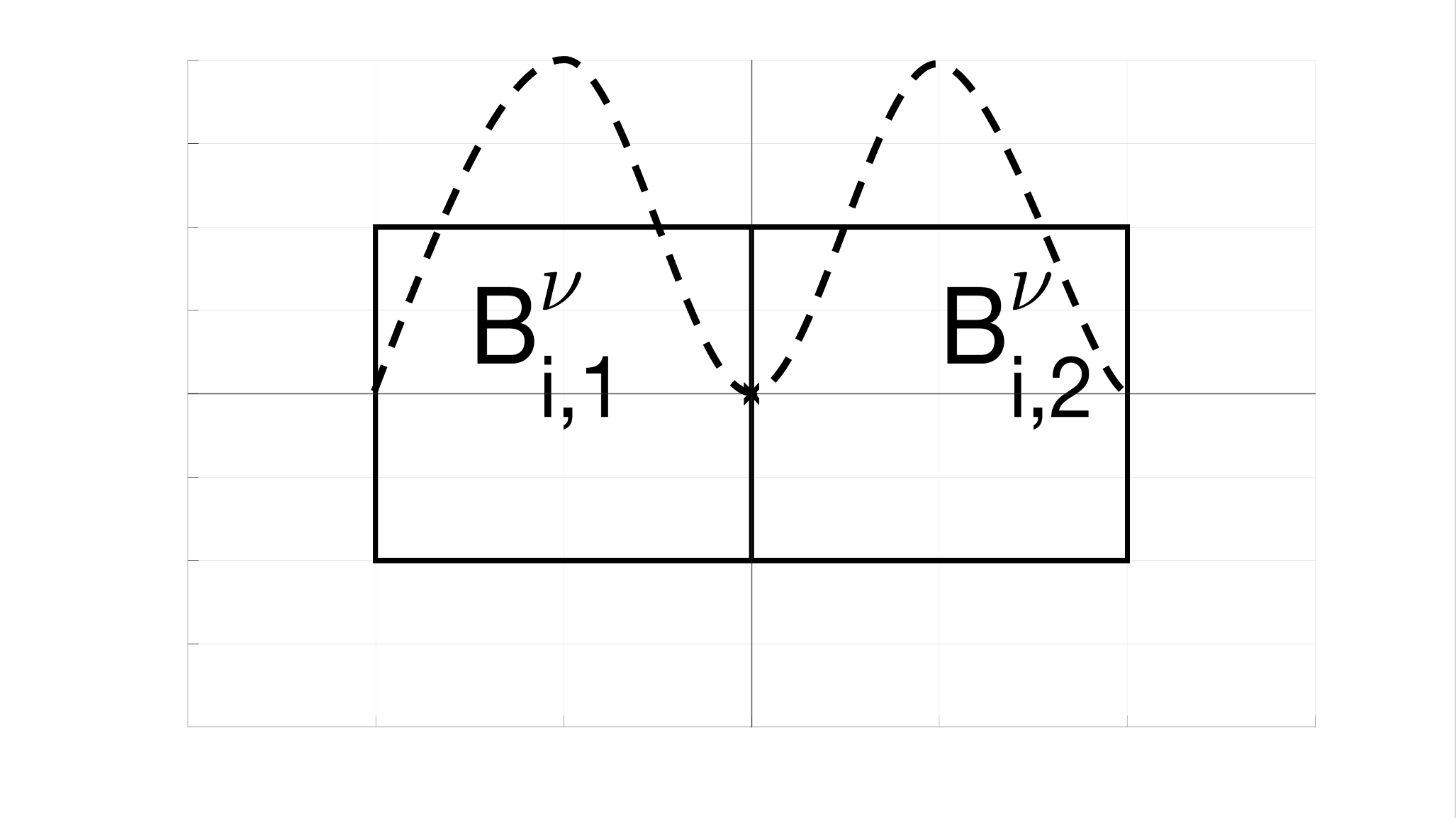}
  \caption{Splitting of density region along direction orthogonal to direction of trait divergence into two subregions.}
  \label{subfig12:bi2}
\end{subfigure}
  \caption{Splitting of example density region in 2D. Dotted line represents species distribution.}
\label{fig1:bi}
\end{figure}
\subsection{Multi-scale approach}\label{sec:micro_macro_coupling}
In this section we couple the PLM, \eqref{model1}--\eqref{modelIC}, and the SLM, \eqref{ODEmodel}--\eqref{ODEmodelIC}, in a multi-scale framework, following~\cite{weinan2011principles}. In particular, if a speciation event is detected for species $i$ at time step $k\geq 1$, we delegate its reconstructed density function to the PLM to be solved locally in the associated density region of trait space, while coupled to the SLM which governs the remaining $s-1$ species in the system. With this, we write the coupled multi-scale model as
\bse
\begin{align}
\nonumber &\textbf{Micro-scale: } \\
&\quad\begin{cases} 
\pt n^{\ell} - rn^\ell + \Phi(n^\ell)n^{\ell} - \nabla \cdot (g \nabla n^\ell) + n^\ell\Cbb^\ell(N^{\ell}) = f^\ell, &\textn{ in } B_{i,\nu}^k, \, \ell > k \label{HMM2} \\
n^k = \Dcal N_i^k, &\textn{ in } B_{i,\nu}^k. 
\end{cases}\\[2ex]
\nonumber &\textbf{Macro-scale: } \\
&\quad\begin{cases}
\dfrac{\textnormal{d}N^\ell}{\textnormal{d}t} = \Ncal(N^\ell) + \Dbb^\ell(n^\ell)N^\ell, &\ell > k \label{HMM1} \\[1ex]
N^k = (N_1^k,\cdots, N_{i-1}^k, 0, N_{i+1}^k, \cdots, N_{s}^k), &
\end{cases}
\end{align}
\ese
where $\pt n^{\ell}$ and $\dfrac{\textnormal{d}N^\ell}{\textnormal{d}t}$ denotes the discrete time derivatives for the PLM and SLM, respectively, and where $\Cbb^\ell : \real^{s(1+d+d^2)} \rightarrow \real$ and $\Dbb^\ell : H_{0,+}^1(B_{i,\nu}^k) \rightarrow \real^{s(1+d+d^2)} $ are the micro/macro couplings, approximating respectively the individual and species interactions. These are defined as follows: 
\bse
\begin{align}
(\Cbb^\ell (N^\ell))(x) &:= \sum_{\substack{j = 1 \\ j \neq i}}^{s} \int_{B_{j,\nu}^\ell} \alpha^\ell(x,y) (\Dcal N_j^\ell) (y)\dy, \\
\nonumber \Dbb^\ell(n^\ell) := (&n_1^\ell A_{1,i}^\ell, \ \nabla_1\alpha_{1,i}^\ell, \ \upsilon_{1}^\ell \nabla_{1}\nabla_{1}\alpha_{1,i}^\ell\upsilon_{1}^\ell/4, \\
\nonumber \cdots,&n_{i-1}A_{i-1,i}, \ \nabla_{i-1}\alpha_{i-1,i}, \ \upsilon_{i-1}^2 \nabla_{i-1}\nabla_{i-1}\alpha_{i-1,i}\upsilon_{i-1}^2/4,\\
\nonumber  0,\,&n_{i+1}^\ell A_{i+1,i}^\ell, \ \nabla_{i+1}\alpha_{i+1,i}^\ell, \ \upsilon_{i+1}^\ell \nabla_{i+1}\nabla_{i+1}\alpha_{i+1,i}^\ell\upsilon_{i+1}^\ell/4,\\
\cdots,&n_{s}^\ell A_{s,i}^\ell, \ \nabla_{s}\alpha_{s,i}^\ell, \ \upsilon_{s}^\ell \nabla_{s}\nabla_{s}\alpha_{s,i}^\ell\upsilon_{s}^\ell/4)|_{\Ccal_{B^k_{i,\nu}}(n^\ell)}\int_{B^k_{i,\nu}} n^\ell(x)\dx.
\end{align}
\ese
During the speciation event, we map the local abundance density function to $m$ `virtual' species, i.e.,
\begin{equation}
\hat{N}^\ell = (\hat{N}_1^\ell,\cdots,\hat{N}^\ell_m) := \Ccal^m_{B_{i,\nu}^k}(n^\ell),
\end{equation}
where $ \Ccal^m_{B_{i,\nu}^k}(n^\ell) : H_0^1(B_{i,\nu}^k) \rightarrow \real^{m(1+d+d^2)}$ is the $m$-species compression operator, defined by
\begin{equation}\label{compr}
\Ccal^m_{B_{i,\nu}^k}(n^\ell) := \argmin_{\substack{n_j, x_j, \upsilon_j, \\ 1\leq i \leq m}} \bnorm{n^\ell - \sum_{j=1}^m \Dcal(n_j,x_j,\upsilon_j) }^2_{B_{i,\nu}^k}.
\end{equation}
The speciation event is then complete when the distances between the mean trait coordinates of the virtual species are larger than some specified tolerance (e.g., some multiple of the largest trait standard deviation). At this point, we initialize the SLM with the new species configuration consisting of $s-1+m$ species, i.e.,
\begin{equation}
(N_1^\ell,\cdots,N_{i-1}^\ell,N_{i+1},\cdots,N_s^\ell) \cup \hat{N}^\ell,
\end{equation}
and decouple the micro and macro scales until the next speciation event is detected. 
\section{Application: distinguishing the modeling error}\label{sec:detection}
In this section we apply the results from Section~\ref{sec:eestimates} to derive an energy-type a posteriori error bound for  the heuristic and multi-scale methods. The error bound  distinguishes in particular the micro-to-macro modeling error, which will be  employed for the purposes of detecting speciation events in the SLM.
\subsection{Abundance density reconstruction and flux equilibration  \\ with smooth sources}\label{sub:reconstruction}
To apply the results of Theorem~\ref{first_estimate},  we reconstruct the species density with smooth distributions, and for the sake of simplicity we only consider  smooth data.  At the time step $k\geq0$, for the species $i\leq s$, we construct its density function by setting 
\begin{equation}\label{construct_pop_density}
s_{h,i}^k := \Dcal N^k_i. 
\end{equation}
Thus, the global ($s$-species) abundance density reconstruction is $H^1$-conforming  in space i.e., 
\begin{equation}\label{def:species}
s_h^k := \sum_{i=1}^s s^k_{h,i}\in C^{\infty}(\Omega)\cap X_k.
\end{equation}
What remains is to construct the equilibrated flux, $\sigmabo_h^k$ satisfying Definition~\ref{fluxequi}. The idea is to reconstruct this equilibrated flux as the sum of a discretization flux $\sigmabo_{h,\textn{disc}}^k$ and the remainder flux $\sigmabo_{h,\textn{rem}}^k$. In this section we assume the source term  $f$ is smooth enough to calculate   the discrete residual for single species, $\mathbf{r}^k_{h,i}$   explicitly,   i.e., for $1\leq i \leq s$, $1\leq k \leq M$, let
\begin{alignat}{2}
\mathbf{r}^k_{h,i}&:= \frac{1}{s}(\tau_k f^{k}_h + s_h^{k-1}) -(1 - \tau_kr_h^k)s^{k}_{h,i} -\tau_k\Phi_h^k(s_h^k)s^k_{h,i}+\tau_k\nabla\cdot (g_h^k\nabla s^k_{h,i}),\label{addlabel1}
\end{alignat}
with the global residual defined by $\mathbf{r}^k_{h} := \sum_{i=1}^s\mathbf{r}^k_{h,i}$. The equilibrated discretization flux for species $i$ is then given by
\begin{equation}\label{def:sigma_disc}
\sigmabo^{k}_{h,\textn{disc},i}:=- g_h^k\nabla s^k_{h,i}.
\end{equation}
The construction of the  micro-macro  misfit (remainder) flux for species $i$ is given component-wise by
\begin{align}\label{def:sigma_rem}
\sigmabo^{k}_{h,\textn{rem},i} :=
\frac{1}{d\tau_k}\Bigg[\int_{x_{i,1}}^{x_1}\mathbf{r}^k_{h,i}|_{(\zeta,x_2,\cdots,x_d)}\dzeta, 
\cdots,\int_{x_{i,d}}^{x_d}\mathbf{r}^k_{h,i}|_{(x_1,\cdots,x_{d-1},\zeta)}\dzeta \Bigg]^\top, 
\end{align}
Then,  for $1\leq k \leq M$, we let
\begin{equation}\label{def_discre_flux}
\sigmabo^{k}_{h}:=\sigmabo^{k}_{h,\textn{disc}}+\sigmabo^{k}_{h,\textn{rem}} := \sum_{i=1}^s \{\sigmabo^{k}_{h,\textn{disc},i}+\sigmabo^{k}_{h,\textn{rem},i}\}.
\end{equation}
The following key result shows that $\sigmabo_h^k$ from the above definition leads to an equilibrated flux in the sense of Definition~\ref{fluxequi}.
\begin{proposition} (Flux equilibration)\label{Prop:StrongEquilibration} Let the flux reconstruction $\sigmabo_h^k$, be defined by~\eqref{def_discre_flux}, where $\sigmabo^k_{h,\textn{disc},i}$ is defined by~\eqref{def:sigma_disc} and $\sigmabo^k_{h,\textn{rem},i}$  by~\eqref{def:sigma_rem}.  Then  $\sigmabo_h^k\in \Hdiv$ and we have the flux equilibration property \eqref{ass:equilibration} satisfied in the strong sense, i.e.,
\begin{align}\label{ass:equilibration2}
(1 - \tau_kr^k_h)s_h^k + \tau_k\Phi^k_{h}(s_h^k)s_h^k+ \tau_k\nabla \cdot \sigmabo_h^k = \tau_kf_{h}^k + s_h^{k-1}.
\end{align}
\end{proposition}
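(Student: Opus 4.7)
The strategy is to verify the strong equilibration \eqref{ass:equilibration2} by direct componentwise differentiation of the explicit formula \eqref{def_discre_flux}, exploiting the fact that $\sigmabo^k_{h,\textn{rem},i}$ is constructed as a coordinate-by-coordinate antiderivative of the single-species residual $\mathbf{r}^k_{h,i}$. The $\Hdiv$ regularity will then be a by-product: once the pointwise equilibration is established, $\nabla\cdot\sigmabo^k_h$ equals an $L^2(\Omega)$ function under the standing assumptions.

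First I would observe that, since $s^k_{h,i}=\Dcal N^k_i\in C^\infty(\Omega)$ and $g_h^k\in[L^\infty(\Omega)]^{d\times d}$, the discretization flux $\sigmabo^k_{h,\textn{disc},i}=-g_h^k\nabla s^k_{h,i}$ lies in $[L^2(\Omega)]^d$ with classical divergence $-\nabla\cdot(g_h^k\nabla s^k_{h,i})$. Next, fixing $i$ and $j\in\{1,\dots,d\}$, the $j$-th component of $\sigmabo^k_{h,\textn{rem},i}$ in \eqref{def:sigma_rem} is, by construction, a primitive with respect to $x_j$ of the smooth function $\mathbf{r}^k_{h,i}$ anchored at $x_j=x_{i,j}$, so the fundamental theorem of calculus gives
\begin{equation*}
\partial_{x_j}\bigl(\sigmabo^k_{h,\textn{rem},i}\bigr)_j=\frac{1}{d\tau_k}\mathbf{r}^k_{h,i}.
\end{equation*}
Summing over $j$ yields $\nabla\cdot\sigmabo^k_{h,\textn{rem},i}=\tau_k^{-1}\mathbf{r}^k_{h,i}$.

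Then I would add both contributions, sum over $i=1,\dots,s$, multiply by $\tau_k$, use \eqref{def:species} to collapse $\sum_i s^k_{h,i}=s_h^k$, and substitute the definition \eqref{addlabel1} to obtain
\begin{align*}
\tau_k\nabla\cdot\sigmabo^k_h
&=-\tau_k\nabla\cdot(g_h^k\nabla s_h^k)+\sum_{i=1}^s\mathbf{r}^k_{h,i}\\
&=-\tau_k\nabla\cdot(g_h^k\nabla s_h^k)+\tau_k f_h^k+s_h^{k-1}-(1-\tau_kr_h^k)s_h^k-\tau_k\Phi_h^k(s_h^k)s_h^k+\tau_k\nabla\cdot(g_h^k\nabla s_h^k),
\end{align*}
where the factor $1/s$ in \eqref{addlabel1} is what makes the source and previous-step contributions telescope to the global source. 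The diffusive terms cancel, leaving exactly \eqref{ass:equilibration2}. Since the right-hand side of the resulting identity lies in $L^2(\Omega)$ by \cref{assum:coeff} and the smoothness of $s_h^k$, we deduce $\nabla\cdot\sigmabo^k_h\in L^2(\Omega)$, hence $\sigmabo^k_h\in\Hdiv$.

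The main obstacle is essentially only bookkeeping: the constants $1/s$ in the definition of $\mathbf{r}^k_{h,i}$ and $1/(d\tau_k)$ in \eqref{def:sigma_rem} are tuned precisely so that the double sum over species and components collapses, and any mismatch would spoil the cancellation. Beyond this, no functional-analytic subtlety arises, because the smoothness of $\Dcal N_i^k$ and $f_h^k$ makes the componentwise primitives classical rather than merely weak. Note also that the strong identity \eqref{ass:equilibration2} immediately implies the element-integrated form \eqref{ass:equilibration}, without any need to invoke the orthogonality \eqref{ortho}, which is therefore not used in this proof.
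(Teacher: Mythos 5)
Your proposal is correct and follows essentially the same route as the paper: compute $\tau_k\nabla\cdot\sigmabo^k_{h,\textn{rem},i}=\mathbf{r}^k_{h,i}$ from the componentwise antiderivative construction, note that the diffusive contributions from $\sigmabo^k_{h,\textn{disc},i}$ cancel against those in $\mathbf{r}^k_{h,i}$, and sum over species using the $1/s$ weighting to recover the global identity, with $\Hdiv$ membership following from the resulting $L^2$ expression for the divergence. Your added remarks (the explicit fundamental-theorem-of-calculus step and the observation that the strong identity renders the orthogonality \eqref{ortho} unnecessary here) are accurate elaborations rather than a different argument.
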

\begin{proof}
By construction, from relations \eqref{def:species} and \eqref{addlabel1}, we have
\bse
\begin{align}
\tau_k\nabla \cdot \sigmabo^k_{h,\textn{disc},i}&= \frac{1}{s}(\tau_k f^{k}_h + s_h^{k-1}) -(1 - \tau_kr_h^k)s^{k}_{h,i} -\tau_k\Phi_h^k(s_h^k)s^k_{h,i} - \mathbf{r}^k_{h,i}, \\
\tau_k\nabla \cdot \sigmabo^k_{h,\textn{rem},i} &= \mathbf{r}^k_{h,i}.
\end{align}
\ese
Thus, from \eqref{def_discre_flux} and \eqref{def:species} we obtain
\begin{align}
\nonumber \tau_k\nabla \cdot \sigmabo_h^k &= \tau_k f^{k}_h + s_h^{k-1} - \sum_{i = 1}^s \left\{(1 - \tau_k r_h^k)s^{k}_{h,i} + \tau_k\Phi_h^k(s_h^k)s^k_{h,i} \right\} \\
&= \tau_kf^{k}_h + s_h^{k-1} - (1 - \tau_kr_h^k)s^{k}_{h} - \tau_k\Phi_h^k(s_h^k)s^k_{h} \in L^2(\Omega).
\end{align}

\end{proof}
\begin{remark}[Equilibrated flux]
The choice of the equilibrated discretization flux in \eqref{def:sigma_disc} is motivated by the fact that we wish to be as general as possible with regards to from where $s_{h,i}^k$ is obtained. In practice, this means we may overestimate the error. In principle, one could solve an optimization problem to get a better estimate (see e.g., \cite{becker2001optimal}), but then additional requirements on $s_{h,i}^k$ is needed. 
\end{remark}
\subsection{Distinguishing the error components} 
The preceding developments lead to the following result.
\begin{theorem}[Error components]\label{thm:errorcomponents}
For the species $i\leq s$, let $s_{k,i}^k$ be the  population density as given in~\eqref{construct_pop_density} and the equilibrated flux  as characterized in~\eqref{ass:equilibration2}.    We have the following a posteriori error bound distinguishing the error components
For $1\leq k \leq M$, the following estimate holds true 
\begin{align}\label{finalestimate}
\left\{\vertiii{n^k - s_h^k}^2_k + \Jcal^k(n^k,s_h^k)\right\}^{\frac{1}{2}} \leq \sum_{i=1}^s  (\eta_{\textn{rem},i}^{k} + \eta_{r,i}^{k} + \eta_{g,i}^{k} + \eta_{\Phi,i}^{k}+\frac{1}{s}\etaosch^{k}), 
\end{align}
where 
\begin{equation}
 \eta_{\star,i}^{k} =\left\{\sum_{K\in \Zcal_h}  \left(\eta_{\star,K,i}^{k}\right)^2\right\}^{\frac{1}{2}},
\end{equation}
with    the species-dependent  counterparts of  the estimators~\eqref{def:loc_estimators1} defined by
\bse
\begin{align}
\etarhj^k &:= \tau_k\tilde{\omega}^{k}_{K}\norm{(r^k - r_h^k)s_{h,i}^k}_K,\\
\etaghj^k &:= \tau_kG_k^{-\frac{1}{2}}\norm{(g^k - g_{h}^k) \nabla s_{h,i}^k}_K, \\
\etaphihj^k &:= \tau_k\vertiii{(\Phi^k(s_{h,i}^k) - \Phi_{h}^k(s_{h,i}^k))s_{h,i}^k}_{k^\ast}, 
\end{align}
and the modeling-remainder  estimator defined by
\begin{equation}
 \eta_{\textn{rem},K,i}^k := \tau_kG_k^{-\frac{1}{2}}\norm{\sigmabo^k_{h,\textn{rem},i}}_K \label{etarem}.
\end{equation}
\ese
\end{theorem}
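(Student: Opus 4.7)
My plan is to deduce Theorem~5.2 directly from the abstract bound \eqref{first_estimate} by exploiting the particular structure of the reconstructions defined in Section~5.1, and then splitting the resulting estimators species-by-species by Minkowski's inequality. The starting point is the observation that the strong equilibration \eqref{ass:equilibration2} (Proposition~5.1) kills the residual estimator entirely, so there is no need to revisit the residual identity.

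\textbf{Step 1: eliminate the residual and identify the flux misfit.} First I would apply \eqref{ass:equilibration2} to conclude that the integrand defining $\etaRK^k$ in \eqref{def:loc_estimators3} vanishes pointwise, hence $\etaRK^k=0$ for every $K\in\Zcal_h$. Next, using the decomposition $\sigmabo_h^k = \sigmabo_{h,\textn{disc}}^k + \sigmabo_{h,\textn{rem}}^k$ together with the definition $\sigmabo_{h,\textn{disc},i}^k = -g_h^k\nabla s_{h,i}^k$ and $s_h^k=\sum_i s_{h,i}^k$, I get
\begin{equation*}
g_h^k\nabla s_h^k + \sigmabo_h^k \;=\; \sigmabo_{h,\textn{rem}}^k \;=\; \sum_{i=1}^s \sigmabo_{h,\textn{rem},i}^k,
\end{equation*}
so that $\etaDF^k \le \sum_{i=1}^s \tau_k G_k^{-1/2}\|\sigmabo_{h,\textn{rem},i}^k\|_K = \sum_{i=1}^s \eta_{\textn{rem},K,i}^k$ by the triangle inequality in $L^2(K)$.

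\textbf{Step 2: split the data-oscillation estimators by species.} Using $s_h^k = \sum_i s_{h,i}^k$ together with the triangle inequality for the $L^2(K)$ and dual norms, I would bound
\begin{equation*}
\etarh^k \le \sum_{i=1}^s \etarhj^k, \qquad \etagh^k \le \sum_{i=1}^s \etaghj^k, \qquad \etaphih^k \le \sum_{i=1}^s \etaphihj^k,
\end{equation*}
where the last inequality follows from the sub-additivity of the dual norm applied to $\Phi^k(s_h^k)s_h^k - \Phi_h^k(s_h^k)s_h^k = \sum_i (\Phi^k - \Phi_h^k)(s_{h,i}^k) s_{h,i}^k$ (this is where I must be slightly careful: if $\Phi$ is not linear in its argument, I would replace $\Phi(s_{h,i}^k)s_{h,i}^k$ by the local species contribution and use the per-species definition in \eqref{etaphihj} as a bound term, which is consistent with how $\etaphihj^k$ is defined in the statement).

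\textbf{Step 3: take the square root and apply Minkowski to separate species.} Inserting Steps~1--2 into the bound \eqref{first_estimate} and taking square roots gives
\begin{equation*}
\left\{\vertiii{n^k - s_h^k}^2_k + \Jcal^k(n^k,s_h^k)\right\}^{1/2} \le \Bigl\{\sum_{K\in\Zcal_h}\Bigl[\sum_{i=1}^s (\eta_{\textn{rem},K,i}^k + \etarhj^k + \etaghj^k)\Bigr]^2\Bigr\}^{1/2} + \sum_{i=1}^s \etaphihj^k + \etaosch^k.
\end{equation*}
The discrete Minkowski inequality, applied to the $\ell^2$-norm over $K\in\Zcal_h$, permits me to commute the sum over species with the square root:
\begin{equation*}
\Bigl\{\sum_{K}\Bigl[\sum_{i}(\cdot)_i\Bigr]^2\Bigr\}^{1/2} \le \sum_{i=1}^s \Bigl\{\sum_{K}(\cdot)_i^2\Bigr\}^{1/2} \le \sum_{i=1}^s\bigl(\eta_{\textn{rem},i}^k + \eta_{r,i}^k + \eta_{g,i}^k\bigr).
\end{equation*}
Finally, writing $\etaosch^k = \sum_{i=1}^s \tfrac{1}{s}\etaosch^k$ and absorbing $\etaphihj^k$ into the species-indexed sum yields \eqref{finalestimate}.

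\textbf{Anticipated obstacle.} The only delicate point is the per-species splitting of the nonlinear interaction term in Step~2: the operator $\Phi^k$ is nonlinear and non-local in $n$, so breaking $(\Phi^k - \Phi_h^k)(s_h^k)s_h^k$ into clean per-species contributions requires interpreting $\etaphihj^k$ in \eqref{etaphihj} consistently with how the total interaction is assembled from the individual species densities $s_{h,i}^k$. Everything else is a routine application of the triangle and Minkowski inequalities to the already-established abstract bound \eqref{first_estimate} together with the strong equilibration from Proposition~5.1.
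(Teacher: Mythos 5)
Your proposal is correct and follows essentially the same route as the paper's proof: observe that strong equilibration (Proposition~\ref{Prop:StrongEquilibration}) makes $\etaRK^k$ vanish, note that $g_h^k\nabla s_h^k+\sigmabo_h^k=\sum_{i}\sigmabo_{h,\textn{rem},i}^k$ by the choice $\sigmabo_{h,\textn{disc},i}^k=-g_h^k\nabla s_{h,i}^k$, and then separate the species contributions with the triangle and Minkowski inequalities. The one delicate point you flag --- that the quadratic, non-local $\Phi^k$ does not split additively over $s_h^k=\sum_i s_{h,i}^k$, so $\etaphihj^k$ must be read as the per-species bound term rather than an exact decomposition --- is glossed over in the paper's own proof as well, so your treatment is, if anything, more careful.
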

\begin{proof} 
First, observe that $\etaRK^{k} = 0$   since the flux satisfies the strong equilibration property \eqref{ass:equilibration2}. Next, substitute \eqref{def:species} and \eqref{def_discre_flux} in \eqref{first_estimate}. Due to \eqref{def:sigma_disc}, this gives $$\etaDF^k = \tau_kG_k^{-\frac{1}{2}}\norm{\sum_{i=1}^s\sigmabo^k_{h,\textn{rem},i}}_K.$$ Finally, use the triangle inequality to separate error components for individual species to arrive at \eqref{finalestimate}.
\end{proof}
\subsection{Detection of speciation events}
Using Theorem \ref{thm:errorcomponents} we can now detect speciation events in the SLM as follows: At each time step $k > 1$, we reconstruct the density function of each species in the system using the reconstruction operator, $\Dcal$, and compute the a posteriori error bound~\eqref{finalestimate}. Then, if the estimate exceeds some given tolerance we infer that a speciation event is about to happen for species $i$.
\section{Numerical examples}
\label{sec:ex}
In this section we present two numerical examples where we employ both the heuristic and multi-scale methods. The examples are chosen such that the first example conforms Assumption~\ref{assum:analysis}, and thus all the theoretical results apply. The first example thus allows us to validate the applicability of the methodology. The second example is motivated by the biological setting of speciation in a predator-prey setting, for which Assumption~\ref{assum:analysis} do not hold. Most notably, $d<3$ in addition to the operator $\Phi$ not satisfying the monotonicity condition~\eqref{monotonicity}, thus invalidating the assertion of Lemma~\ref{lem:edist}. This example provides numerical evidence to the efficacy of the multi-scale algorithm outside of the parameter space where we have been able to prove error bounds. In this context, the bound~\eqref{finalestimate} must be considered more loosely as an error indicator. 

For both examples we assess the accuracy of the speciation methods by solving simultaneously the PLM globally and calculating the statistical moments which are then compared to the corresponding species parameters. Moments of the reference PLM solution are always shown as dotted lines, and indicated by the letter $\mu$ in the legends. For the multi-scale method, the start and end of the speciation event is indicated by two vertical dotted lines. For the heuristic method, speciation is indicated by a single vertical dotted line. 

We assume the discretization errors associated with solving the models are negligible compared to the modeling error which we are interested in, hence in practice we calculate only the remainder estimator $\eta_{\textn{rem},i}^k$ from \eqref{etarem} when estimating~\eqref{finalestimate}. Moreover, the density reconstruction operator $\Dcal$ is implemented as a normal distribution, i.e., for a species $N_i^k = (n_i^k,x_i^k,\upsilon_i^k)$ we have
\begin{equation}
(\Dcal N_i^k)(x) :=  \frac{1}{\Lambda(\upsilon_i^k)}\exp\left(-\frac{1}{2}(x-x_i^k)^\top(\upsilon_i^k)^{-1}(x-x_i^k)\right).
\end{equation}
Note that this does not satisfy the homogenous Dirichlet boundary conditions. However, in practice for a large enough domain, boundary conditions are still satisfied within working precision. An alternative approach would be to truncate the tails of the normal distribution, but for the problems considered herein, we do not expect this to make any difference in the results. Finally, the $m$-species compression operator, $\Ccal^m_{B_{i,\nu}^k}$, is implemented using a nonlinear least squares iteration (here for $m = 2$), where we iterate until convergence with a relative tolerance of $1e-3$. 
\subsection{Numerical approximations} 
We advance in time using a fourth order Runge-Kutta scheme (RK4). Numerical integration is by the midpoint rule, and the differential term of the PLM is approximated by the two-point flux approximation (TPFA) method on a regular Cartesian grid. All numerical examples are implemented in MatLab v. R2019b. 
\subsection{Example 1}
For the first numerical example we let $d = 3$, and consider a system initially consisting of $s=1$ species, and where one speciation event occurs. Here, we let the micro and macro time increments be given by $\tau_m = \tau_M = 5e-2$, respectively, and set $T=600$ as the final time. Spatial grid size is $\Delta x = (1,1,1)/20$. The residual relative tolerance is chosen as $\tol_{\textn{res}} = 5e1$, and the number of standard deviations for the trait space density regions (and tolerance distance between mean trait coordinates during speciation) as $\nu = 10$. When the computed error bound exceeds the tolerance, we backtrack 100 time units before initiating the multi-scale/heuristic algorithms. Furthermore, since the Assumption~\ref{assum:analysis} is fulfilled for this example, we calculate the difference between the global PLM solution and the reconstructed SLM solution in the energy norm, i.e., $\vertiii{n^k - s^k}_k$. The trait space domain is the unit cube, i.e., $\Omega = [0,1]^3$. Initial data is given by $N_0 = (n_{0}, x_{0}, \upsilon_0)$, where
\begin{equation}
n_{0} = 2e-1, \quad  x_{0} = (0.2,0.2,0.2), \quad \textn{ and } \quad \upsilon_{0} = 5e-3 \times \id.
\end{equation}
\subsubsection{Parameters}
We impose a speciation event on the system by having a time dependent growth-rate, where a single attractor point in trait space gradually transitions into two attractor points. In particular, for $\gamma > 0$ the growth rate is defined as
\begin{equation}
r(x,t) := 1 - \gamma(f_0(t)r_0(x) + f_{\infty}(t)r_{\infty}(x)),
\end{equation}
where 
\bse
\begin{align}
r_0(x) &:= \norm{x - (0.5,0.5,0.5)}^2, \label{attr1}\\
r_{\infty}(x) &:= \norm{x - (0.2,0.8,0.8)}^2\norm{x - (0.8,0.2,0.2)}^2,\label{attr2}
\end{align}
\ese
and where (for $\theta > 0$)
\begin{equation}
f_0(t) := 
\begin{cases}
1 - t/\theta, &t < \theta, \\
0, &t \geq \theta,
\end{cases}
\quad \textnormal{ and } \quad 
f_\infty(t) := 
\begin{cases}
t/\theta, &t < \theta, \\
1, &t \geq \theta.
\end{cases}
\end{equation}
Thus, $\gamma$ determines the speed at which the species moves towards the attractor points set by \eqref{attr1} or \eqref{attr2}, and $\theta$ is the transition time between $r_0$ and $r_{\infty}$. For the present situation we choose the following parameter values
\begin{equation}
\gamma = 2, \quad \textn{ and } \quad \theta = 300.
\end{equation}
The remaining coefficients are defined by constant values, i.e.,
\bse
\begin{align}
b(x,t) &= 1e-3,\\
\alpha(x, y,t) &= -1,\\
g(x,t) &= 5e-6 \times \id.
\end{align}
\ese
\subsubsection{Simulation}
With the parameters given in the previous section, we employ both the heuristic and multi-scale speciation algorithms. Figure~\ref{ex1:abundance} below shows the species abundance, figure~\ref{ex1:x1} the species mean trait coordinates, figure~\ref{ex1:variance} the largest eigenvalue of the trait covariance matrix, figure~\ref{ex1:residuals} the a posteriori error bound, and figure~\ref{ex1:edist} the error in energy norm, as functions of time. Note that due to the presence of $G^{-\frac{1}{2}}$ in the definition of $\eta_{rem,i}^k$ (eq. \eqref{etarem}), the magnitudes in figure~\ref{ex1:residuals} can not be directly compared to those of the preceding figures.

We observe from the results a very close match between the reference solution and the multi-scale method throughout the simulation time. Indeed, the error in the energy norm, as seen in Figure~\ref{ex1:edist} is lower during the multi-scale window than during the pure SLM simulation, indicating that the modeling error of the SLM in terms of capturing species dynamics dominates over the error associated with the speciation event. In terms of species-level parameters, as shown in the remaining figures, the qualitative match is also quite satisfactory for the multi-scale mathod outside the speciation event (these quantities are of course not defined during the speciation event itself). 

In contrast, while the heuristic method is somewhat acceptable in terms of capturing the initial dynamics and final state, it does not capture the speciation dynamics themselves as accurately as the multi-scale method, thus emphasizing the value of the multi-scale simulation framework. The lack of accuracy during the speciation event leads to later errors in the higher moments of the solution (i.e., velocities and covariances in trait space), as seen in particular in Figures \ref{ex1:x1} and \ref{ex1:variance}.

\begin{figure}[h]
  \centering
\begin{subfigure}{.45\textwidth}
  \centering
  \includegraphics[width=1.\linewidth]{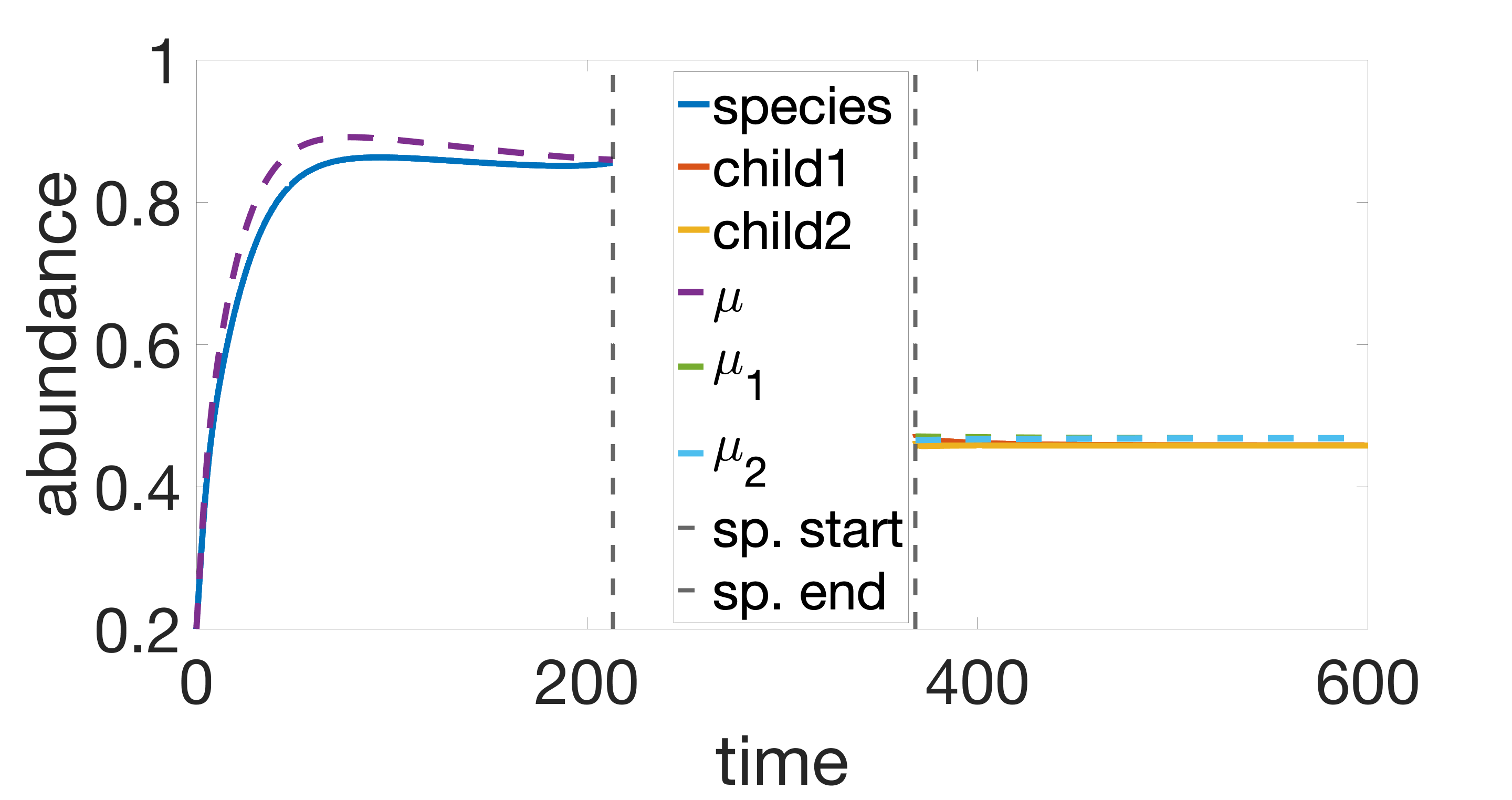}
  \caption{Multi-scale method.}
\end{subfigure}%
\hspace{.05\textwidth}
\begin{subfigure}{.45\textwidth}
  \centering
  \includegraphics[width=1.\linewidth]{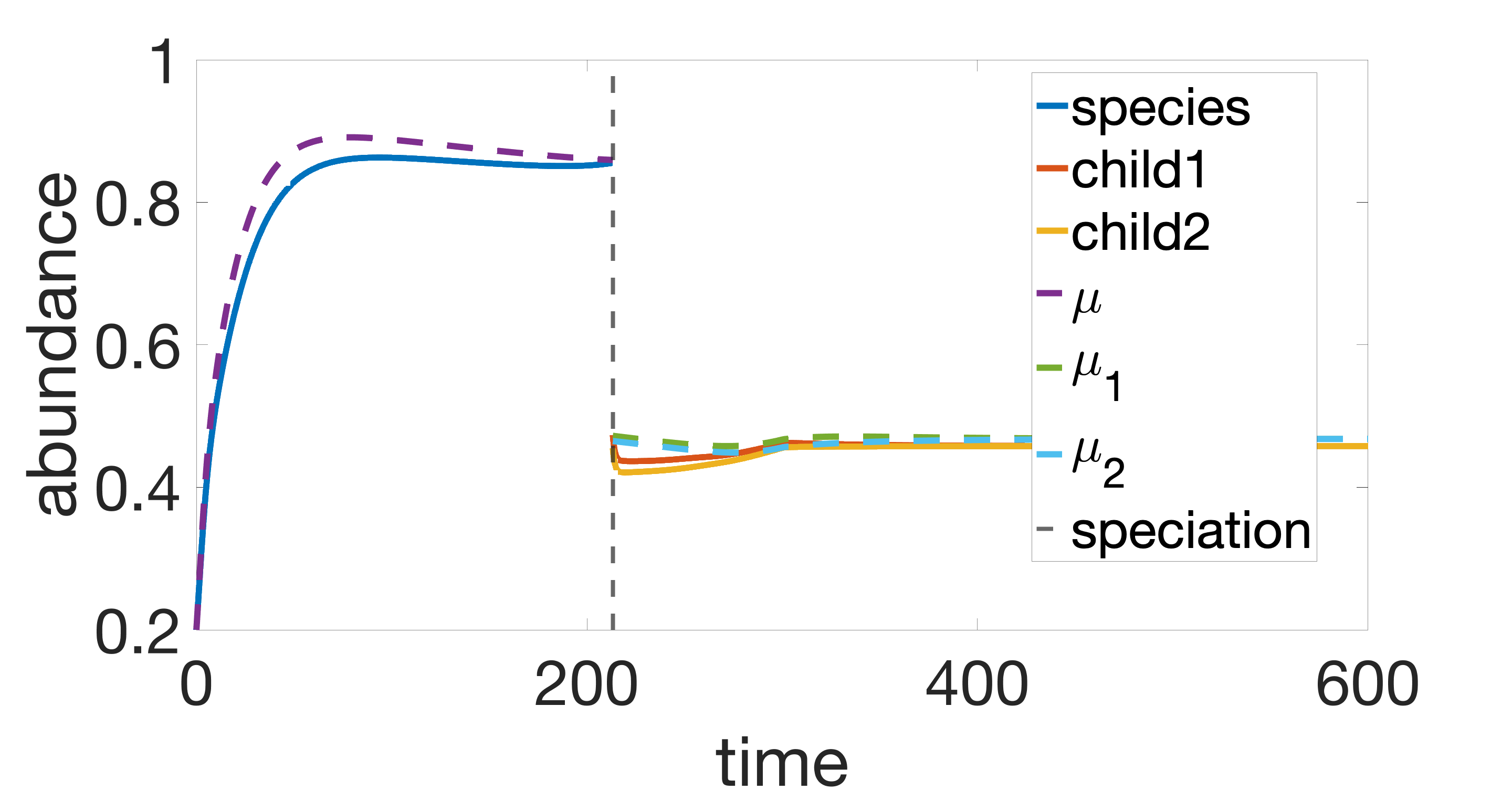}
  \caption{Heuristic method.}
\end{subfigure}  
  \caption{Species abundance as functions of time.}
  \label{ex1:abundance}
\end{figure}

\begin{figure}[h]
  \centering
\begin{subfigure}{.45\textwidth}
  \centering
  \includegraphics[width=1.\linewidth]{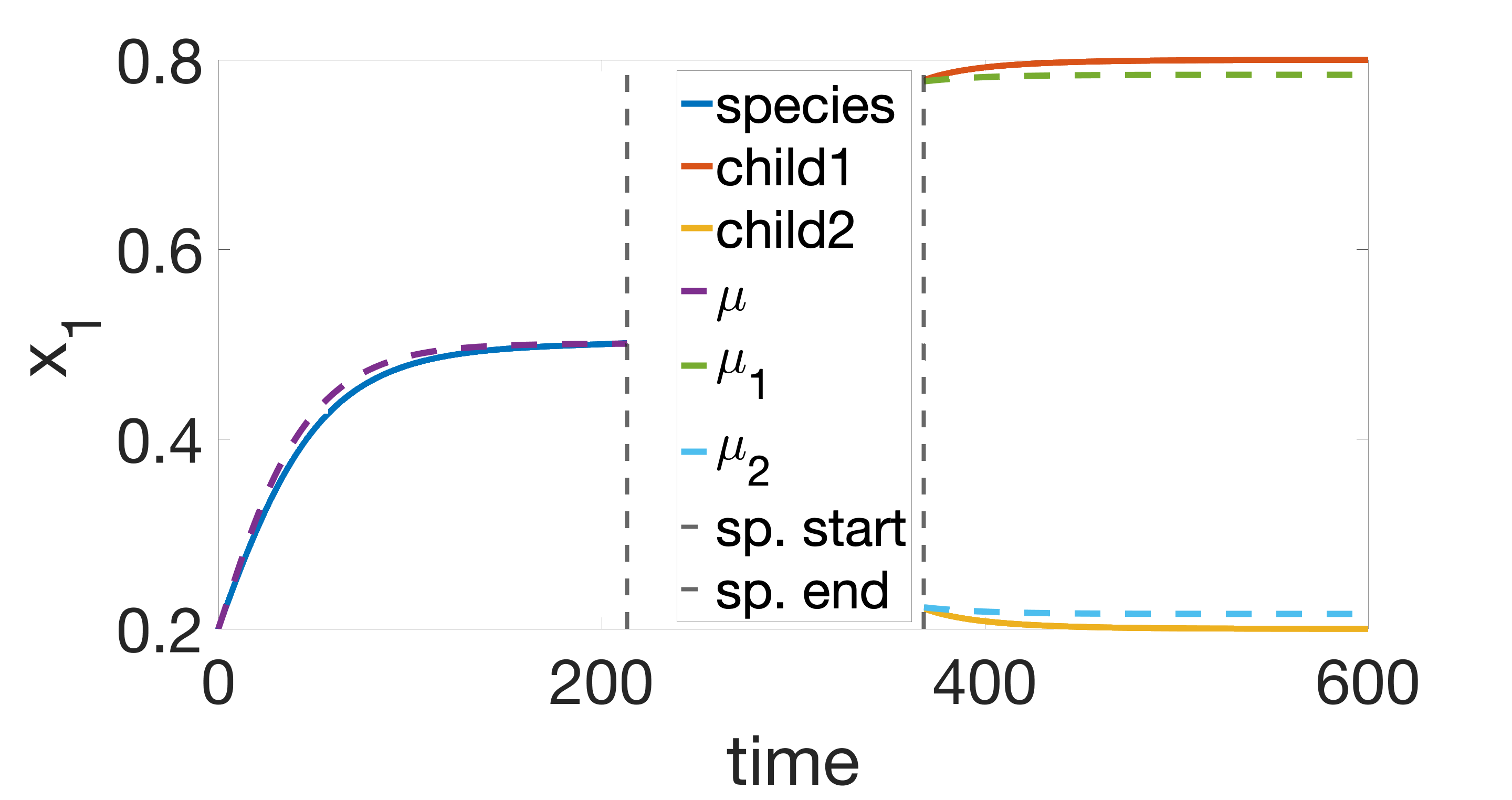}
  \caption{Multi-scale method.}
\end{subfigure}%
\hspace{.05\textwidth}
\begin{subfigure}{.45\textwidth}
  \centering
  \includegraphics[width=1.\linewidth]{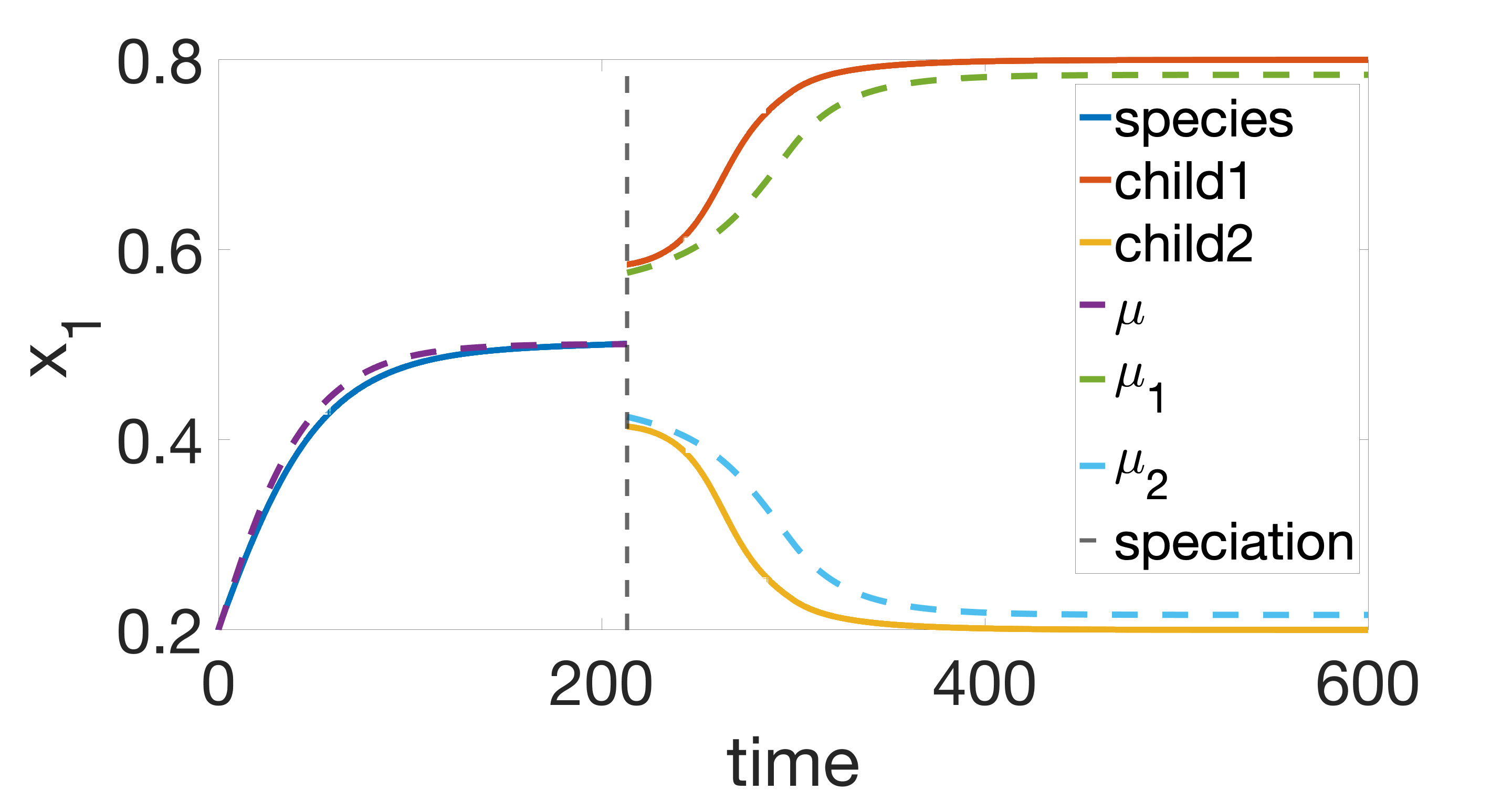}
  \caption{Heuristic method.}
\end{subfigure}  
  \caption{Species mean traits as functions of time (first component).}
  \label{ex1:x1}
\end{figure}

\begin{figure}[h]
  \centering
\begin{subfigure}{.45\textwidth}
  \centering
  \includegraphics[width=1.\linewidth]{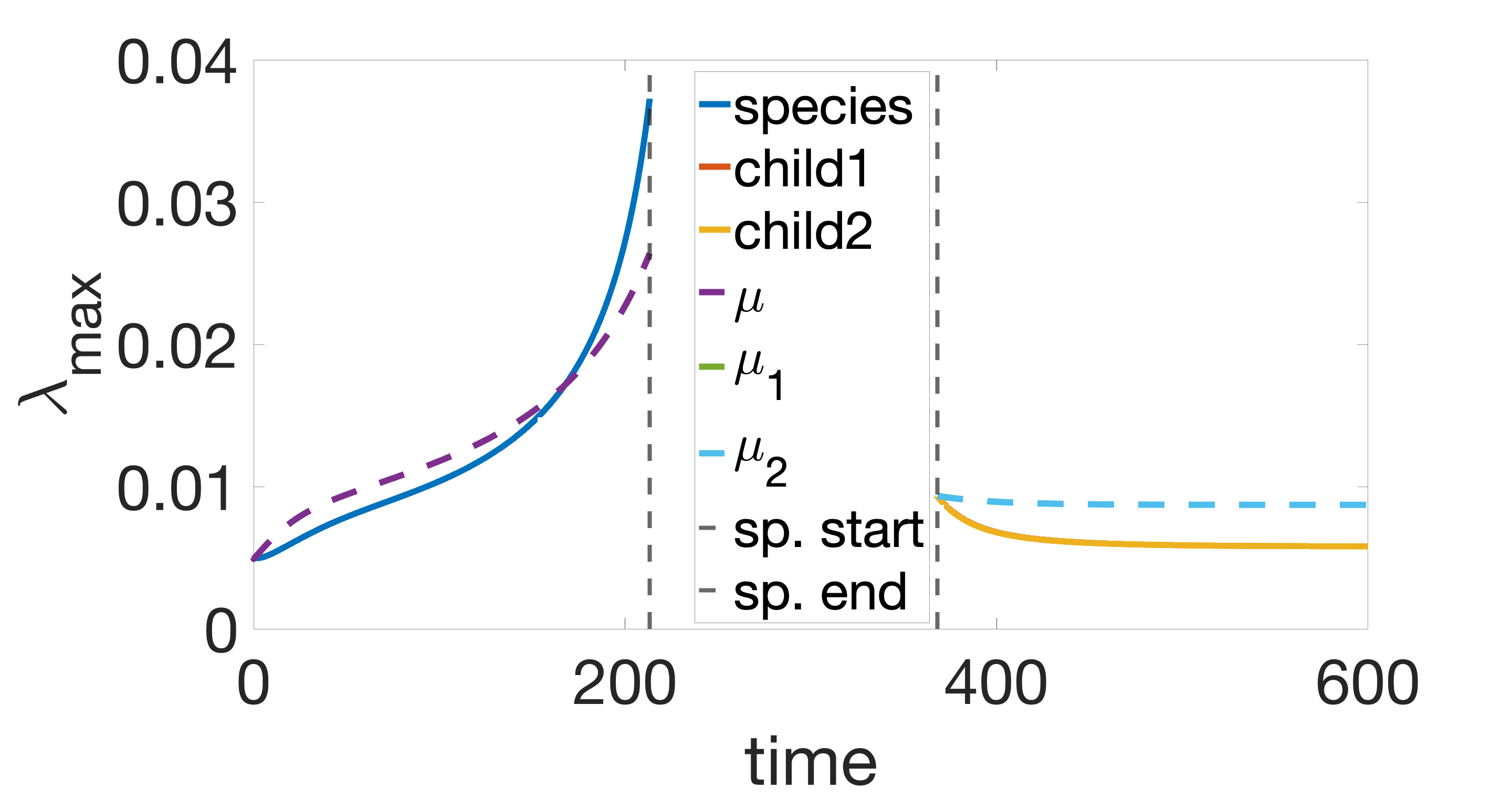}
  \caption{Multi-scale method.}
\end{subfigure}%
\hspace{.05\textwidth}
\begin{subfigure}{.45\textwidth}
  \centering
  \includegraphics[width=1.\linewidth]{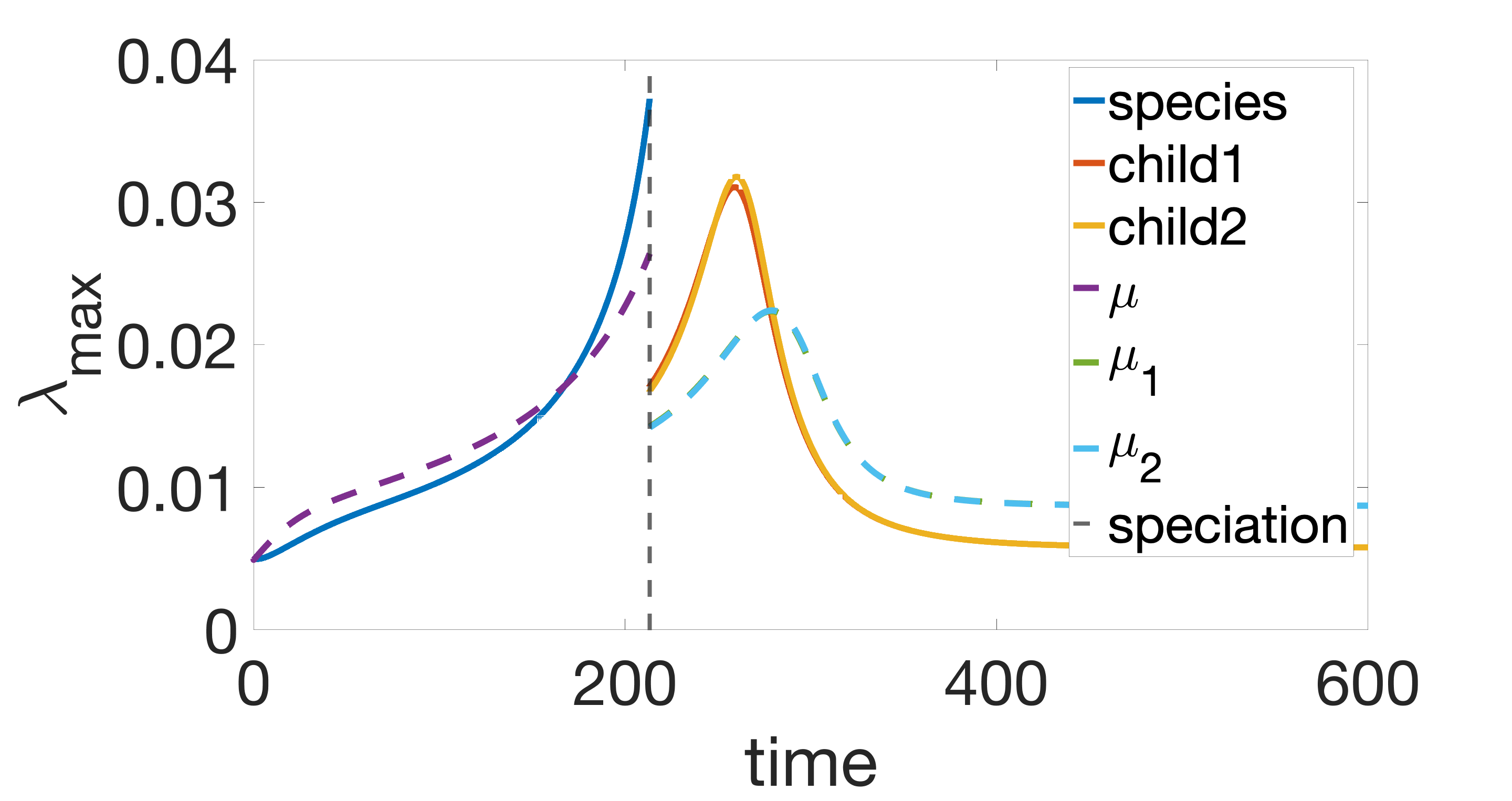}
  \caption{Heuristic method.}
\end{subfigure}  
  \caption{Maximum eigenvalue of trait covariance matrix as functions of time.}
  \label{ex1:variance}
\end{figure}

\begin{figure}[h]
  \centering
\begin{subfigure}{.45\textwidth}
  \centering
  \includegraphics[width=1.\linewidth]{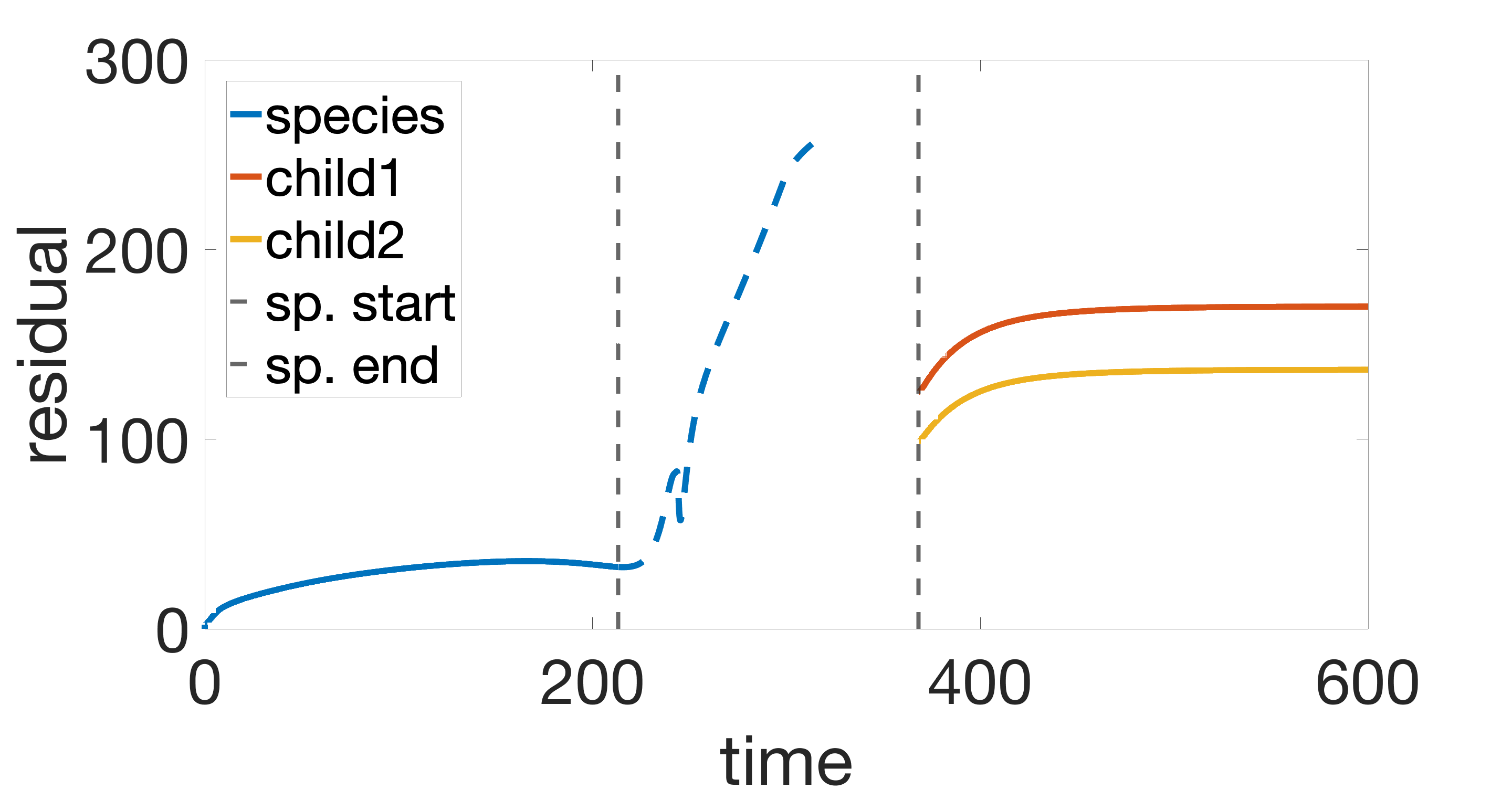}
  \caption{Multi-scale method.}
\end{subfigure}%
\hspace{.05\textwidth}
\begin{subfigure}{.45\textwidth}
  \centering
  \includegraphics[width=1.\linewidth]{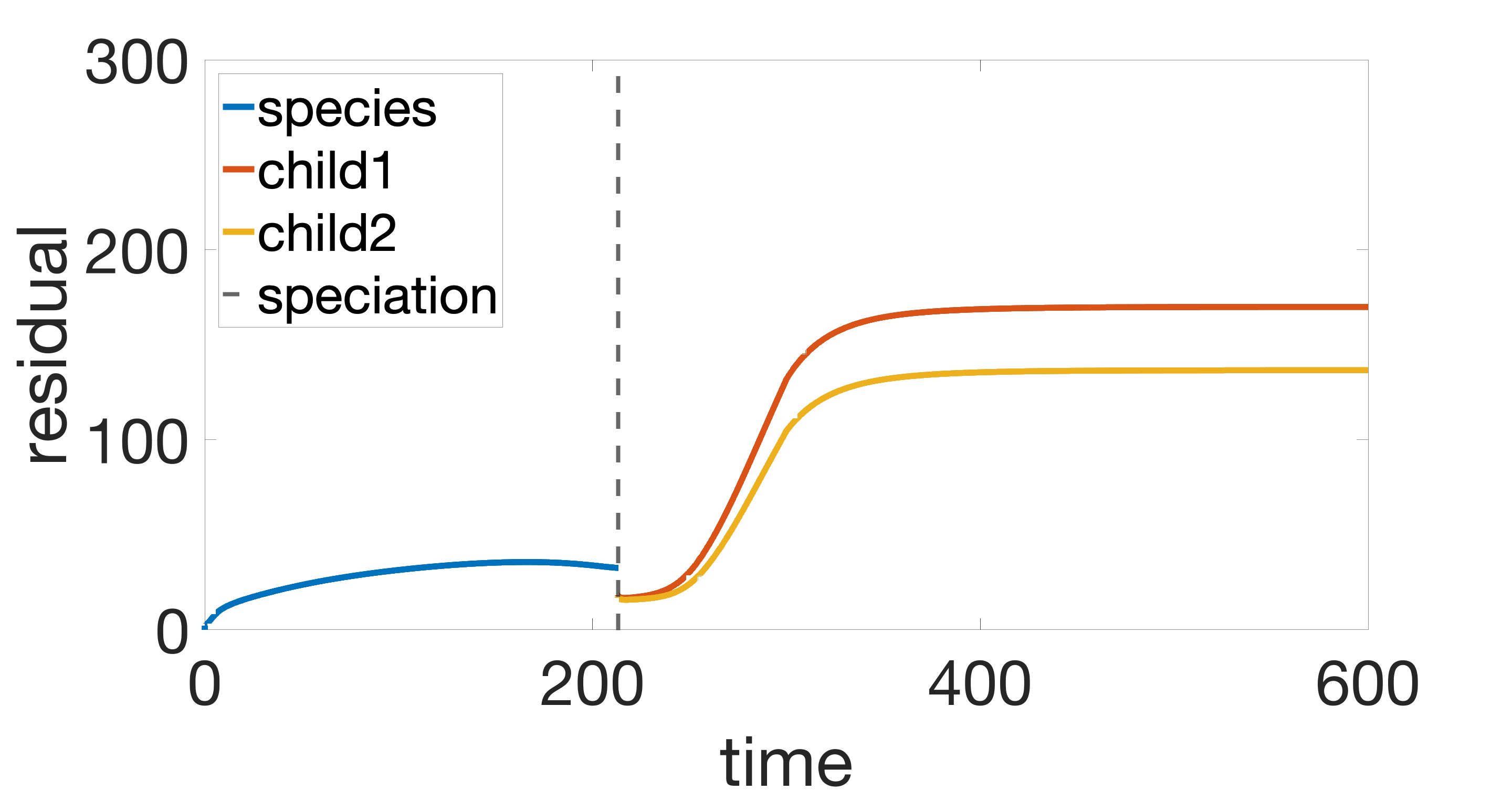}
  \caption{Heuristic method.}
\end{subfigure}  
  \caption{A posteriori modeling-remainder estimator, $\eta_{\textn{rem},i}^{k}$, as a function of time.}
  \label{ex1:residuals}
\end{figure}

\begin{figure}[h]
  \centering
\begin{subfigure}{.45\textwidth}
  \centering
  \includegraphics[width=1.\linewidth]{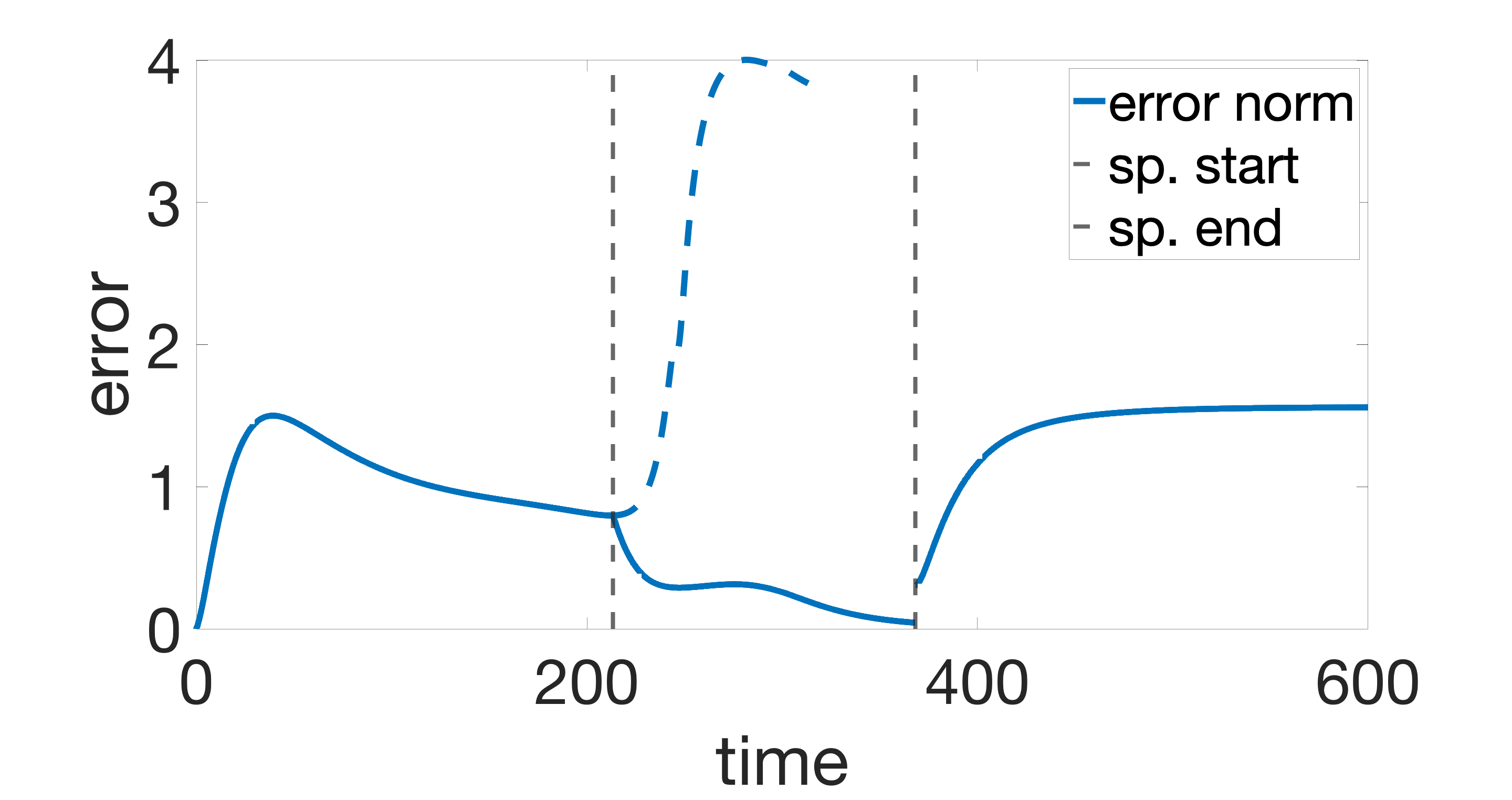}
  \caption{Multi-scale method.}
\end{subfigure}%
\hspace{.05\textwidth}
\begin{subfigure}{.45\textwidth}
  \centering
  \includegraphics[width=1.\linewidth]{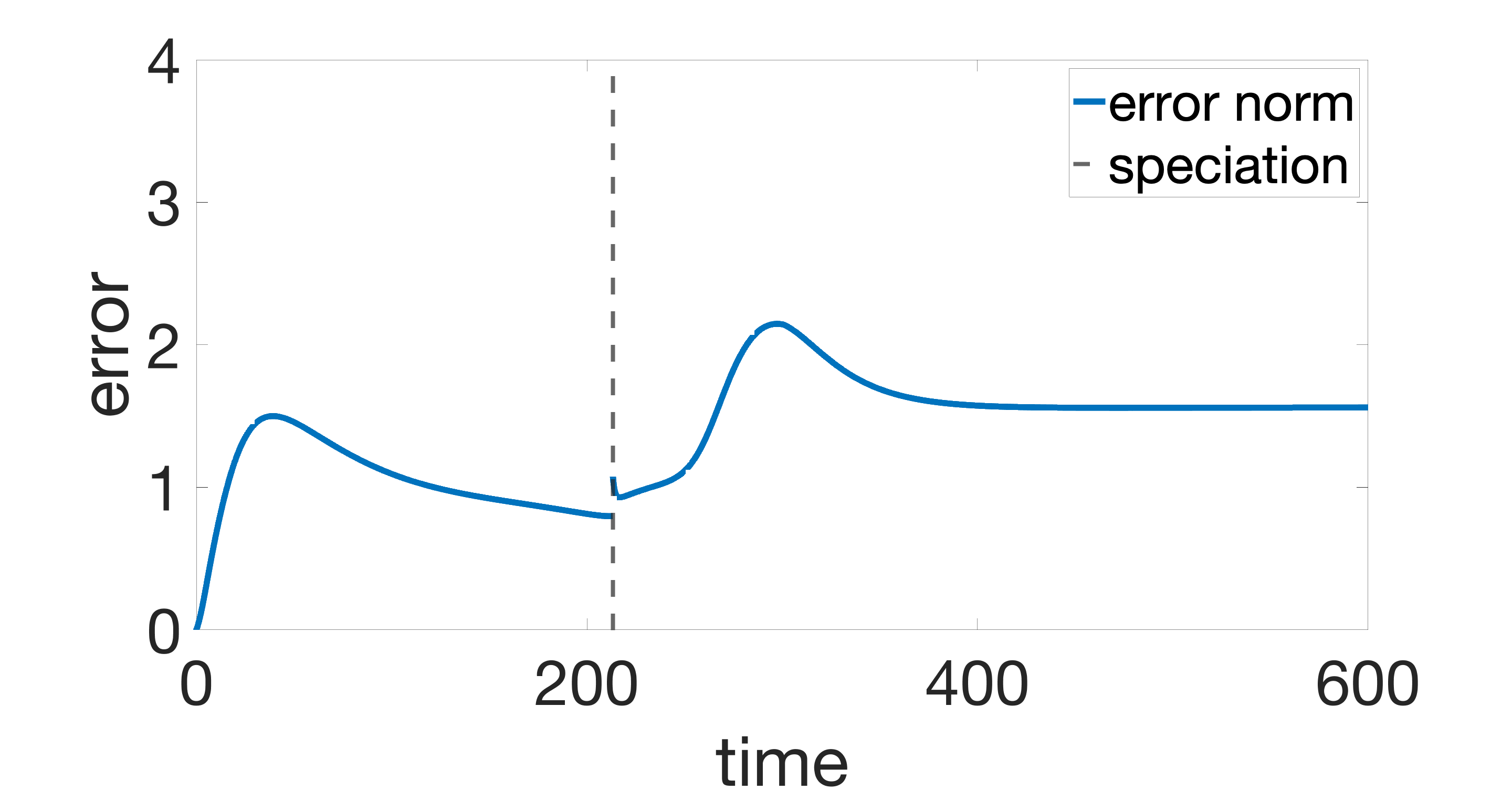}
  \caption{Heuristic method.}
\end{subfigure}  
  \caption{Global error measured in energy norm as function of time.}
  \label{ex1:edist}
\end{figure}
\subsection{Example 2}
For the second numerical example we let $d=2$, and consider a predator-prey system initially consisting of $s=2$ species, and where the `prey'-species undergoes a speciation event by traveling along a Y-shaped ridge in trait space. Hence, the divergence will happen at the branching point of the ridge. Here, we let the micro and macro time increments be given by $\tau_m = 1e-3$ and $\tau_M = 1e-2$, respectively, and set $T=600$ as the final time. Spatial grid size is $\Delta x = (1,1)/50$. The residual relative tolerance is chosen as $\tol_{\textn{res}} = 1e1$, and the number of standard deviations for the trait space density regions as $\nu = 10$. When the computed error bound exceeds the tolerance, we backtrack 250 time units before initiating the multi-scale/heuristic algorithms. The trait-space domain consists of two disjoint regions; $\Omega_1 = [0,1]^2$ and $\Omega_2 = [2,3]^2$, i.e., $\Omega = \Omega_1 \cup \Omega_2$, wherein the two species are located, respectively. Initial data is given by $N_0 = (N_{\textn{prey},0},N_{\textn{pred},0})$, where
\bse
\begin{align}
&n_{\textn{prey},0} = 2e-1, \quad  x_{\textn{prey},0} = (0.5,0.3), \quad \textn{ and } \quad \upsilon_{\textn{prey},0} = 5e-3 \times \id,\\
&n_{\textn{pred},0} = 2e-1, \quad  x_{\textn{pred},0} = (2.5,2.5), \quad \textn{ and } \quad \upsilon_{\textn{pred},0} = 5e-3 \times \id.
\end{align}
\ese
\subsubsection{Parameters}
We impose a speciation event upon the `prey'-species by initializing it at the foot of a ridge in trait space, and as the species travels along this ridge, the ridge splits into two branches. This branching ridge is incorporated in the growth-rate coefficient, which for $c_0, \delta, r_0 > 0$ and $\vecc \in \real^2$, is defined as
\begin{equation}
r(x) := 
\begin{cases}
c_0 + \delta \vecc \cdot x^\top - \varphi_{\epsilon}(x) \ast \dist(x, Y), \quad &x \in \Omega_1, \\
-r_0, &x \in \Omega_2,
\end{cases}
\end{equation}
where the set of points $Y\subset \Omega_1$, is the three line segments connecting the nodes $\{(0.5,0.3), (0.5,0.5), (0.2,0.7), (0.8,0.7)\}$ to form a Y-shape, and where $\varphi_\epsilon$ is the mollifier function centered on the origin with radius of support $\epsilon > 0$, and where $\dist(x, Y)$ is the distance from the point $x$ to the set $Y$. Hence, $c_0$ is the growth-rate of the prey, $\delta$ is the speed at which it travels along the ridge, $\vecc$ is a direction vector, $\epsilon$ is the steepness of the ridge, and $r_0$ is the loss rate of the predator. The interaction coefficient is defined by
\begin{equation}
\alpha(x,y) := 
\begin{cases}
0, \quad &x \in \Omega_1, y \in \Omega_1, \\
- \gamma, \quad &x \in \Omega_1, y \in \Omega_2, \\
\beta, \quad &x\in\Omega_2, y \in \Omega_1, \\
0, \quad &x \in \Omega_2, y \in \Omega_2,
\end{cases}
\end{equation}
where $\gamma > 0$ is the rate of predation upon the prey, and where $\beta > 0$ is the growth rate of the predator. For the present situation we choose the following parameter values
\begin{align}
\nonumber c_0 = 0, \quad \delta &= 0.8, \quad r_0 = 0.5, \quad \epsilon = 0.2, \quad \gamma = 3.0, \\
\beta &= 8.0, \text{ and } \vecc = (0,1)^\top.
\end{align}
The remaining coefficients are defined by constant values, i.e.,
\bse
\begin{align}
b(x,t) &:= 0, \\
g(x,t) &:= 2e-6\times\id.
\end{align}
\ese
\subsubsection{Simulation}
With the parameters given in the previous section, we employ both the heuristic and multi-scale algorithms. Figures~\ref{ex2:prey_abundance}--\ref{ex2:pred_abundance} below shows the species abundance, figures~\ref{ex2:prey_x1}--\ref{ex2:prey_x2} the species mean trait coordinates, figure~\ref{ex2:prey_variance} the largest eigenvalue of the trait covariance matrix, and figures~\ref{ex2:prey_residuals}--\ref{ex2:pred_residuals} the a posteriori error bounds, as functions of time. 


We recall that for this example, Assumption~\ref{assum:analysis} is not satisfied, and the results are thus not expected to be as strong as for the previous example. Indeed, we observe that while the multi-scale method performs fairly overall, the a posteriori error bound, now being only an error indicator, is not precise enough to identify the speciation event early enough, necessitating a larger backtrack window than the previous example. Note also that the abundance plots, figures \ref{ex2:prey_abundance}--\ref{ex2:pred_abundance}, indicate that the predator-prey cycles have been shifted out of phase, but the correct structure is still retained. However, the multi-scale method still clearly outperforms the heuristic method, which due to the reliance on the same error indicator also suffers from a somewhat delayed speciation event. On the other hand, the heuristic method models the speciation event less accurately than the multi-scale method, and thus the errors after the speciation event are significantly larger when seen in terms of the species-level parameters. 

\begin{figure}[h]
  \centering
\begin{subfigure}{.45\textwidth}
  \centering
  \includegraphics[width=1.\linewidth]{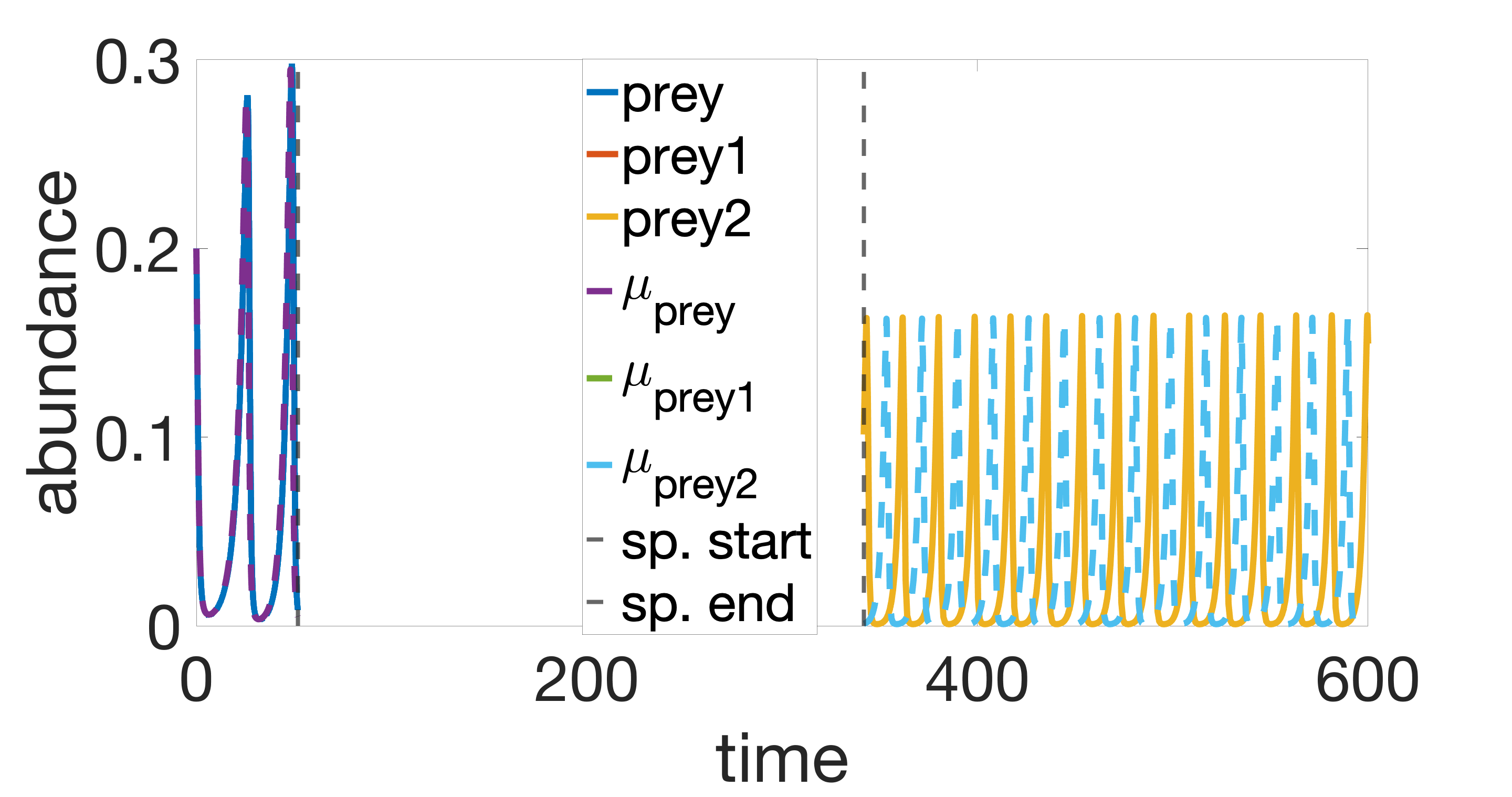}
  \caption{Multi-scale method.}
\end{subfigure}%
\hspace{.05\textwidth}
\begin{subfigure}{.45\textwidth}
  \centering
  \includegraphics[width=1.\linewidth]{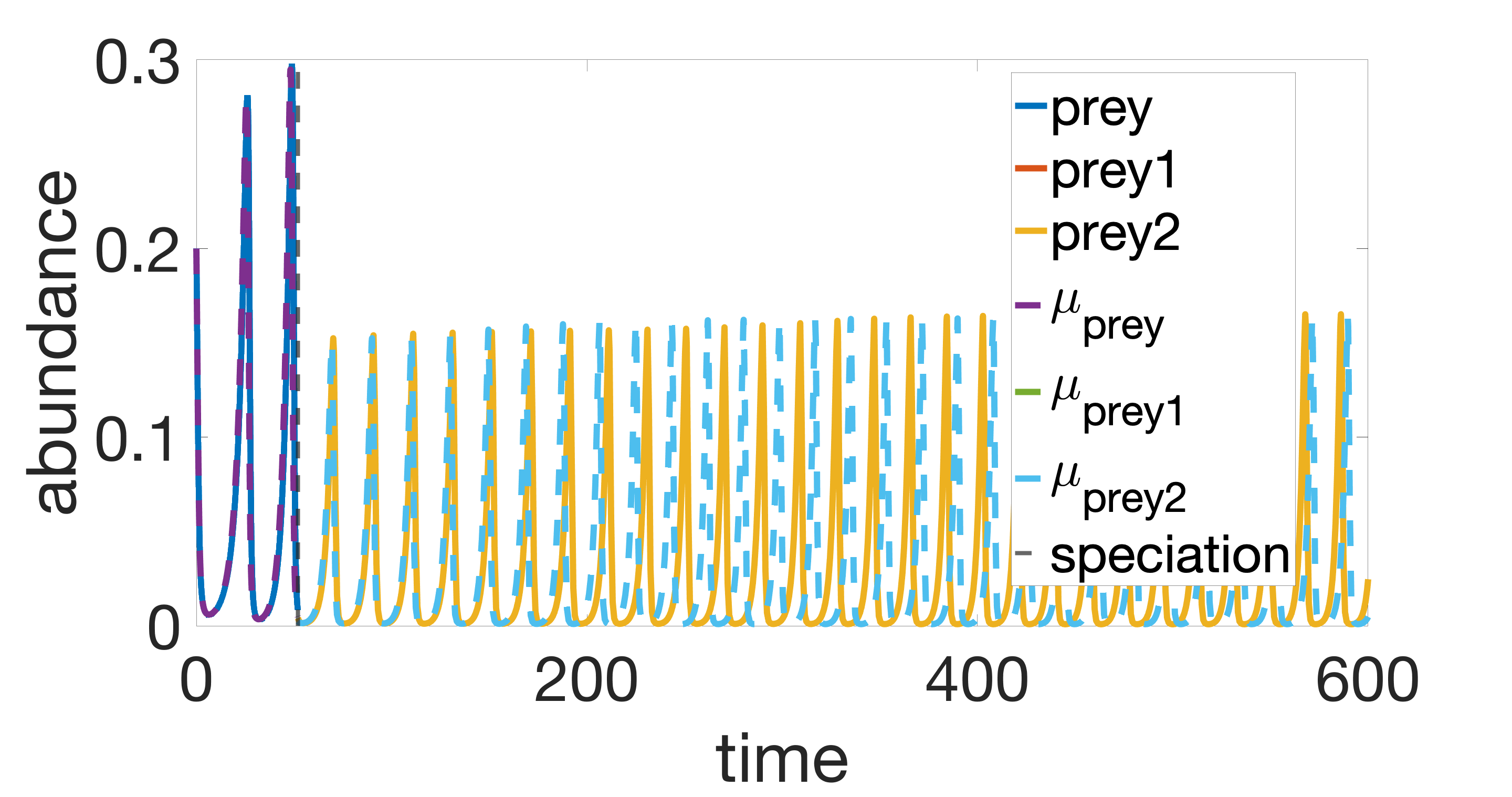}
  \caption{Heuristic method.}
\end{subfigure}  
  \caption{Prey abundance as function of time. Curves for child-species are overlapping.}
  \label{ex2:prey_abundance}
\end{figure}

\begin{figure}[h]
  \centering
\begin{subfigure}{.45\textwidth}
  \centering
  \includegraphics[width=1.\linewidth]{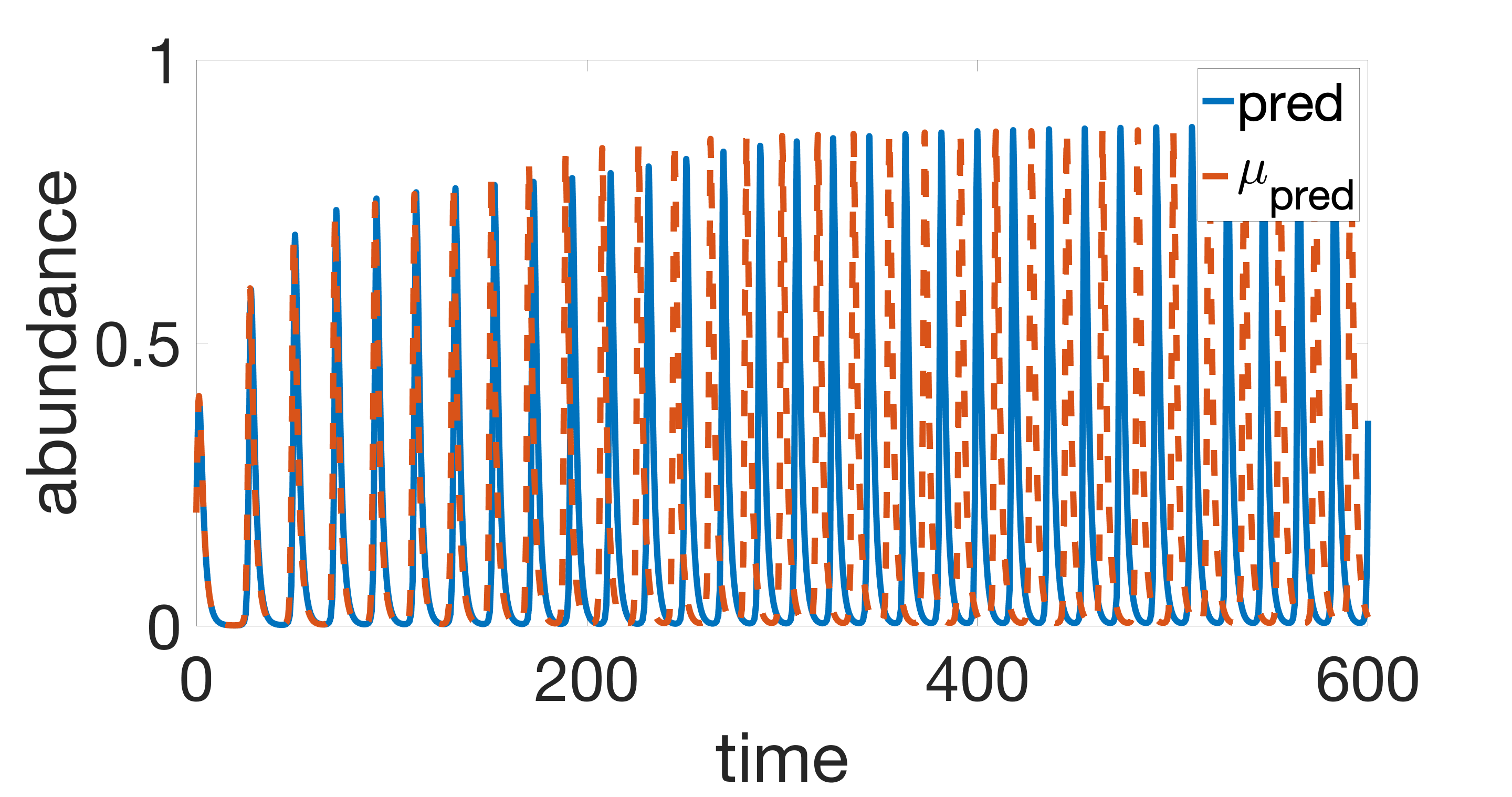}
  \caption{Multi-scale method.}
\end{subfigure}%
\hspace{.05\textwidth}
\begin{subfigure}{.45\textwidth}
  \centering
  \includegraphics[width=1.\linewidth]{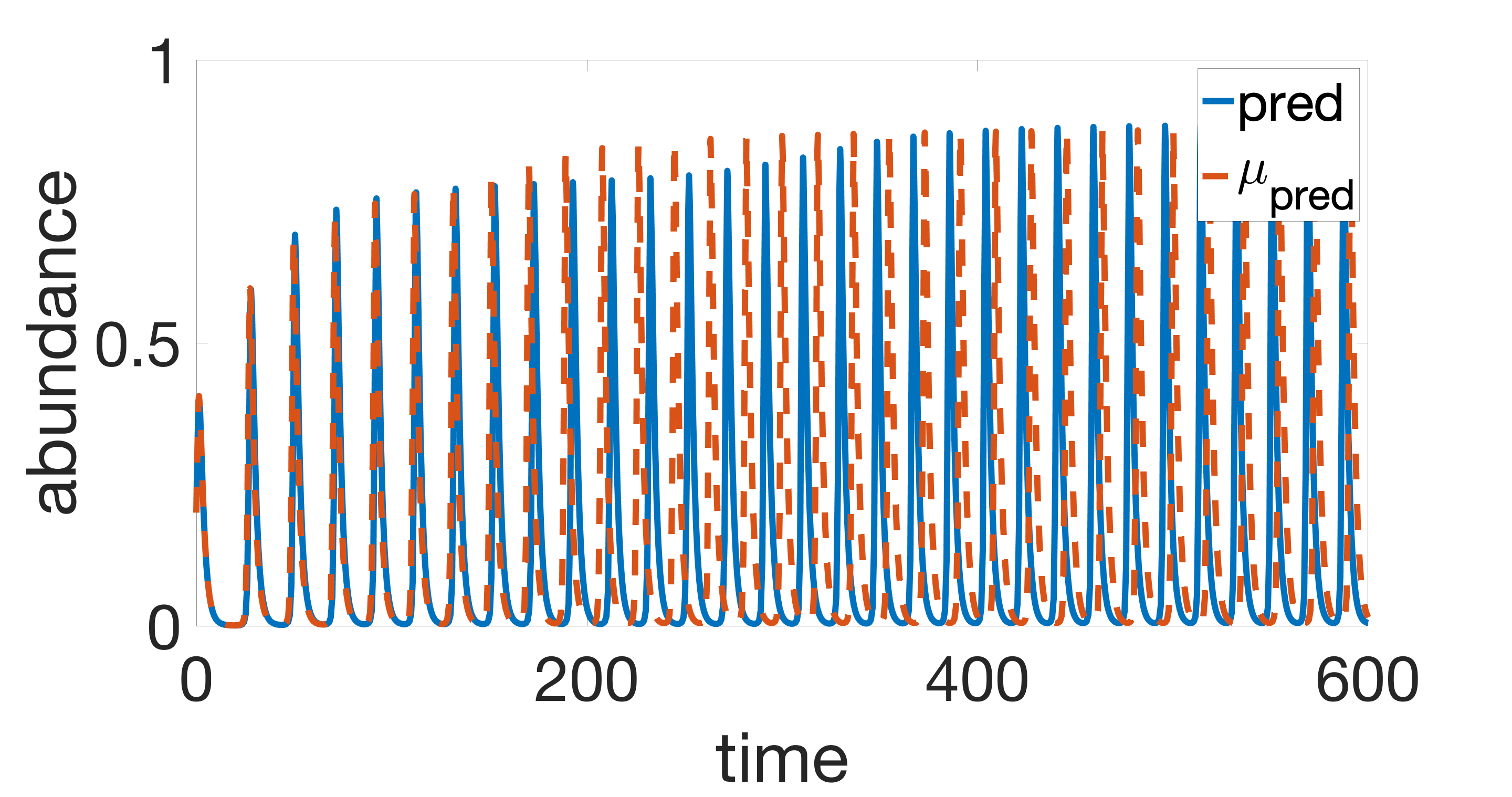}
  \caption{Heuristic method.}
\end{subfigure}  
  \caption{Predator abundance as function of time. }
  \label{ex2:pred_abundance}
\end{figure}

\begin{figure}[h]
  \centering
\begin{subfigure}{.45\textwidth}
  \centering
  \includegraphics[width=1.\linewidth]{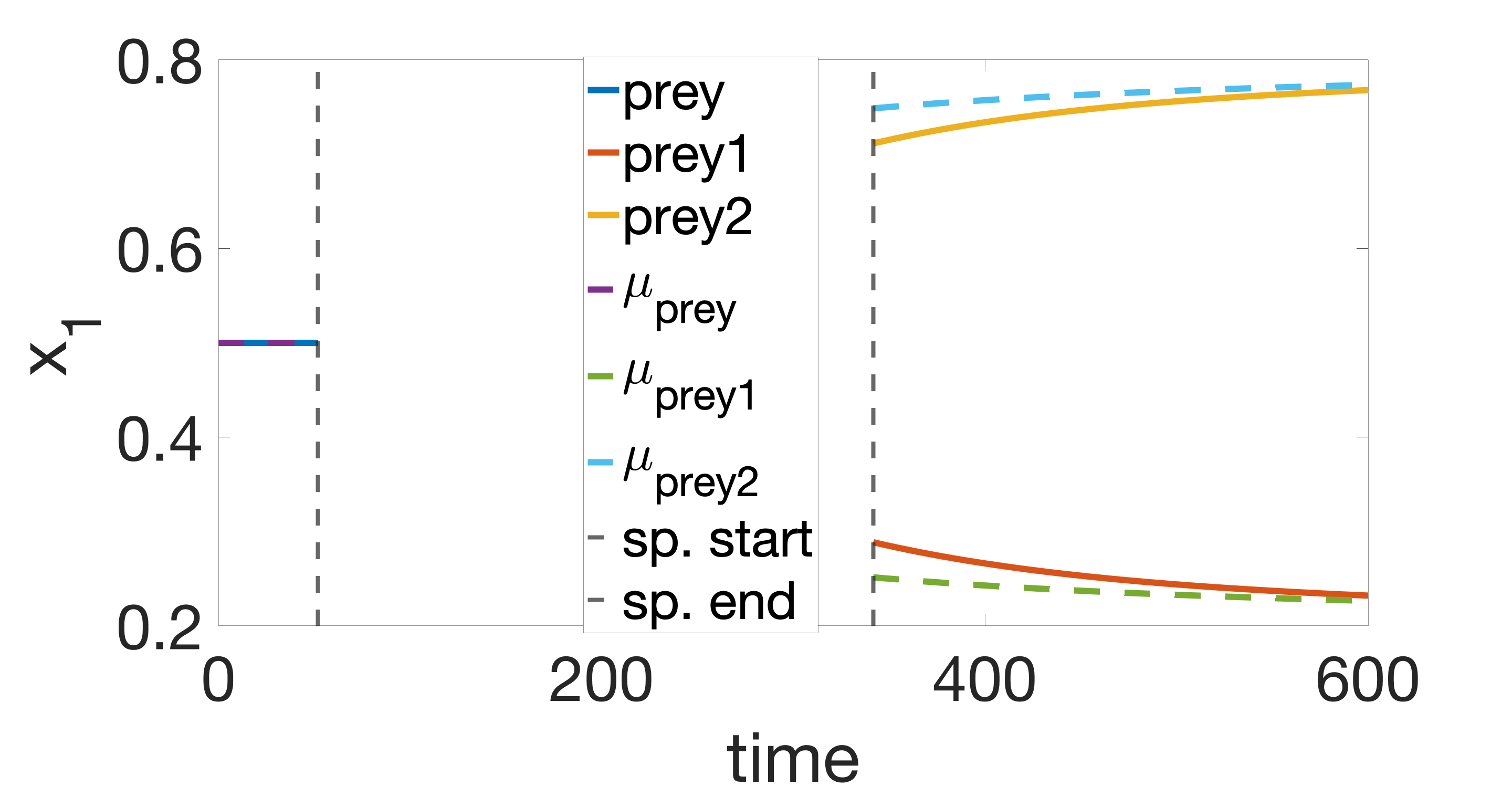}
  \caption{Multi-scale method.}
\end{subfigure}%
\hspace{.05\textwidth}
\begin{subfigure}{.45\textwidth}
  \centering
  \includegraphics[width=1.\linewidth]{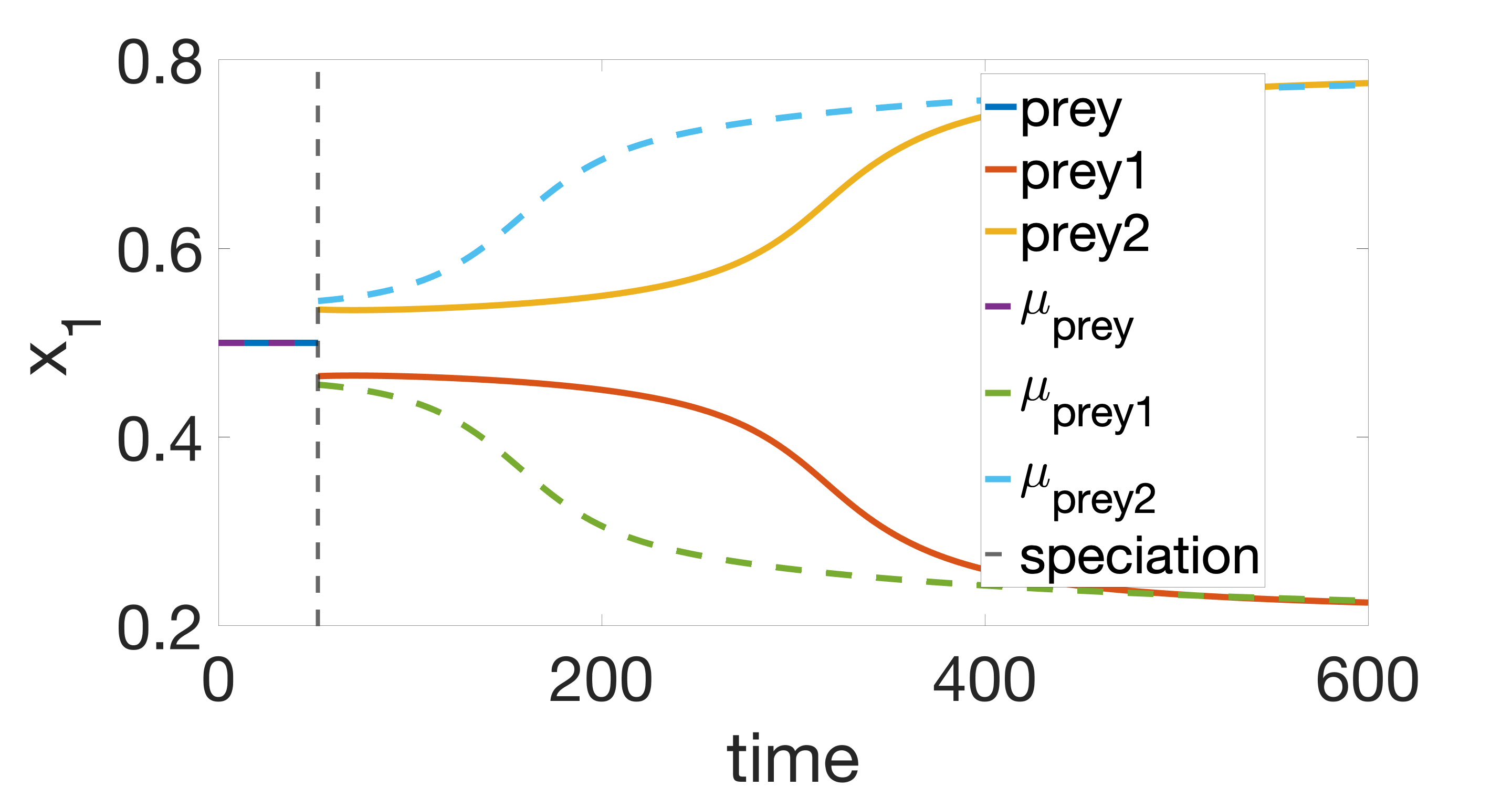}
  \caption{Heuristic method.}
\end{subfigure}  
  \caption{Prey mean traits coordinate as functions of time (first component).}
  \label{ex2:prey_x1}
\end{figure}


\begin{figure}[h]
  \centering
\begin{subfigure}{.45\textwidth}
  \centering
  \includegraphics[width=1.\linewidth]{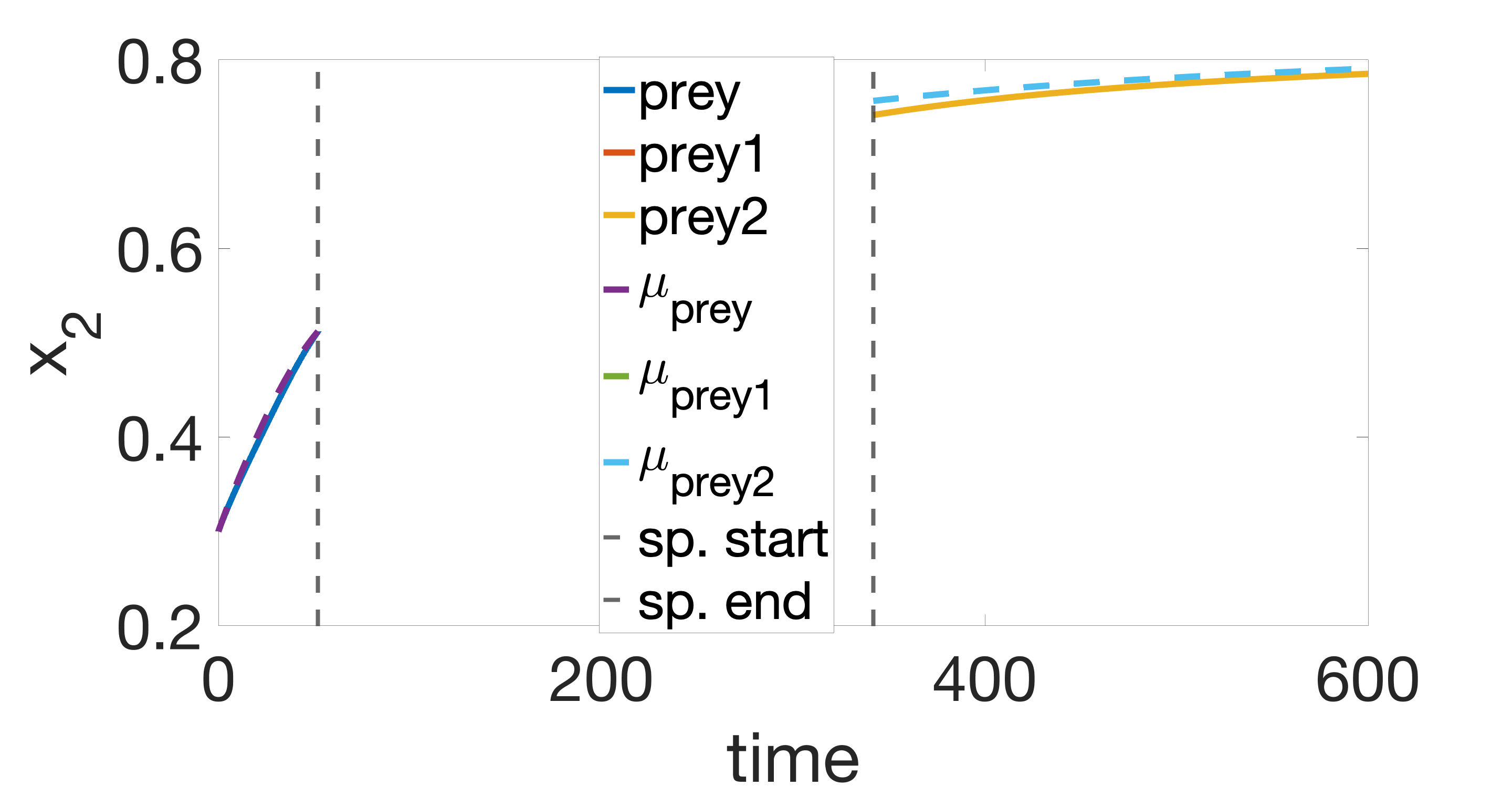}
  \caption{Multi-scale method.}
\end{subfigure}%
\hspace{.05\textwidth}
\begin{subfigure}{.45\textwidth}
  \centering
  \includegraphics[width=1.\linewidth]{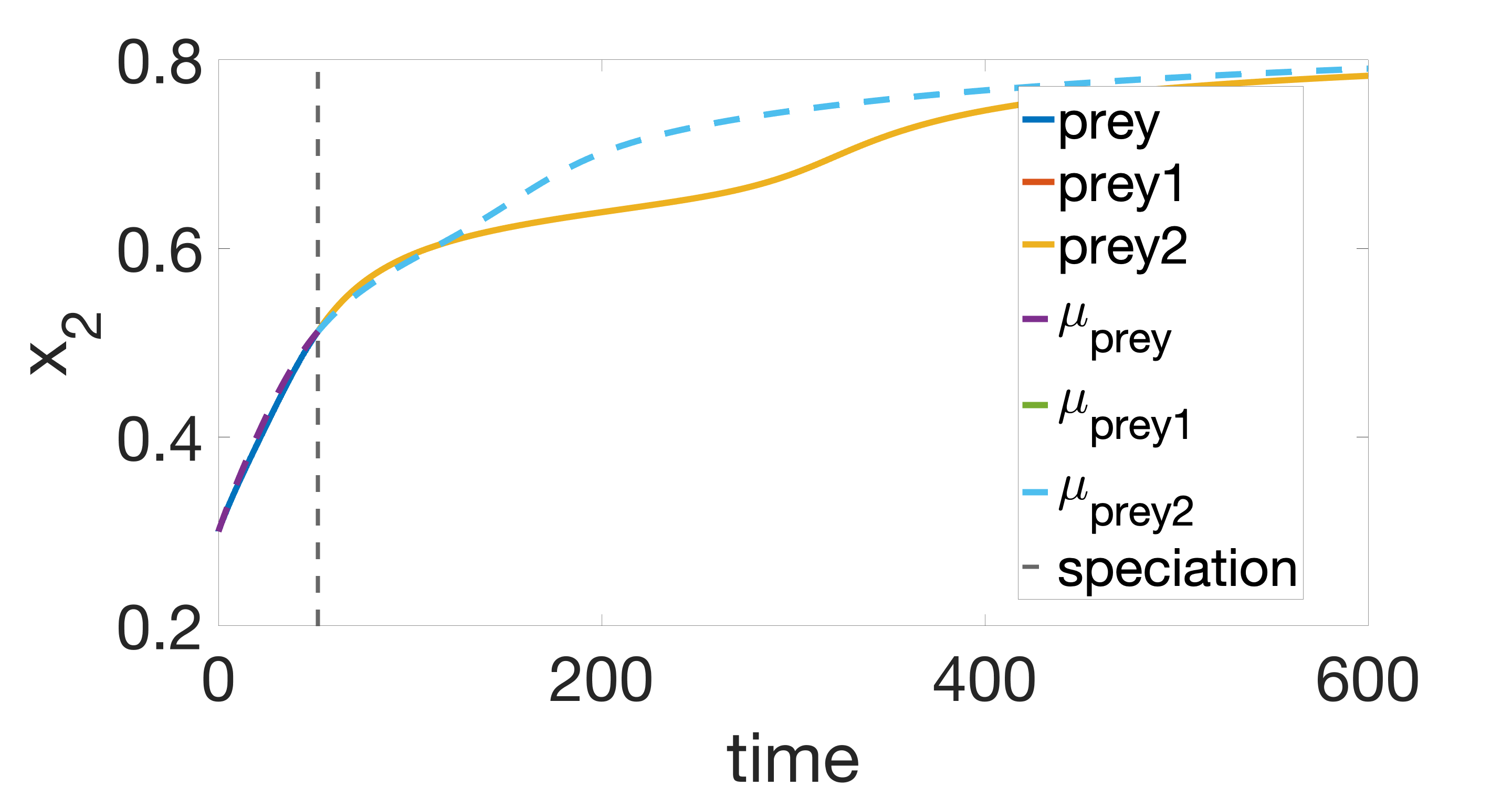}
  \caption{Heuristic method.}
\end{subfigure}  
  \caption{Prey mean traits coordinate as functions of time (second component). Curves for child-species are overlapping}
  \label{ex2:prey_x2}
\end{figure}


\begin{figure}[h]
  \centering
\begin{subfigure}{.45\textwidth}
  \centering
  \includegraphics[width=1.\linewidth]{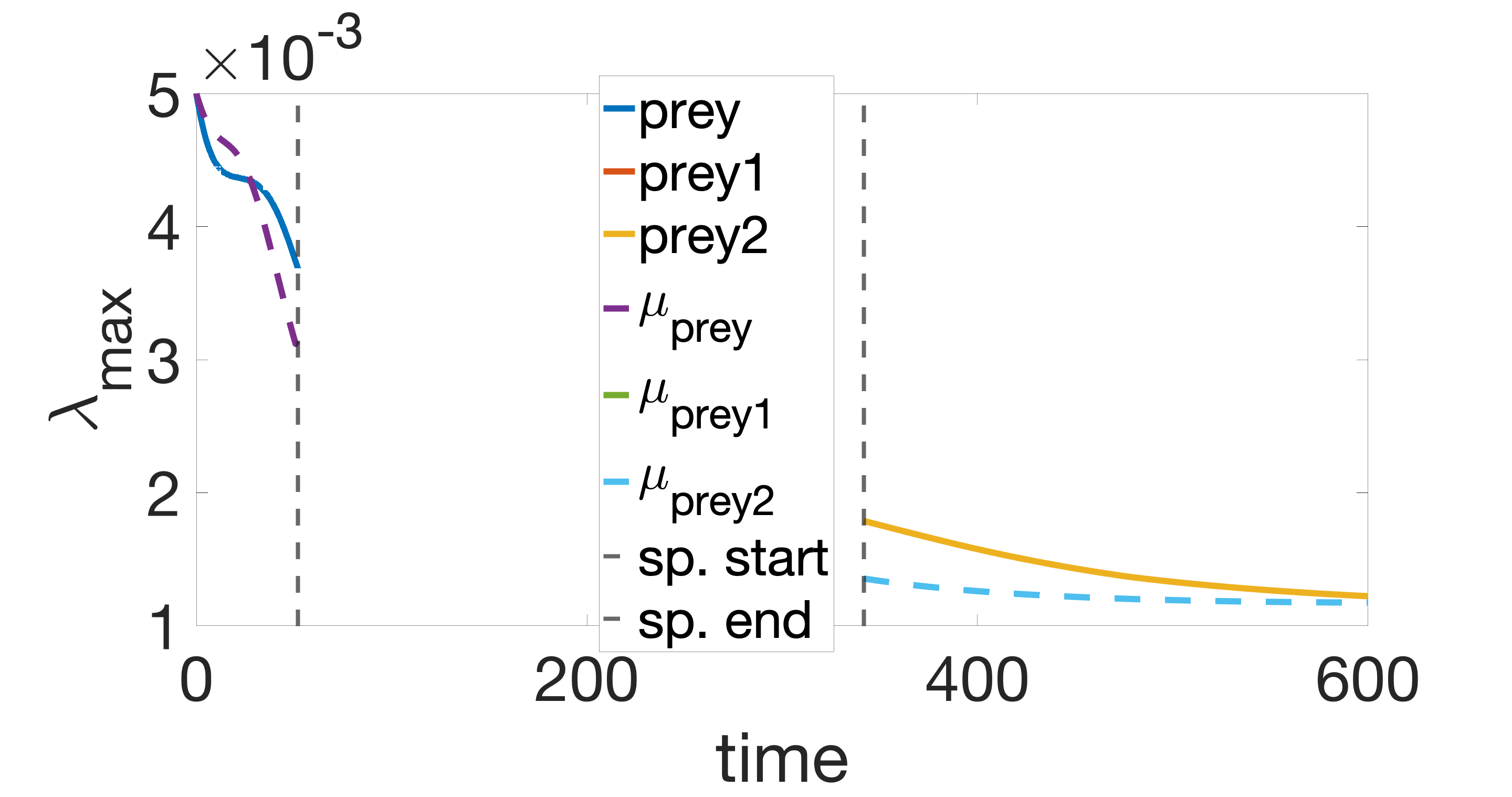}
  \caption{Multi-scale method.}
\end{subfigure}%
\hspace{.05\textwidth}
\begin{subfigure}{.45\textwidth}
  \centering
  \includegraphics[width=1.\linewidth]{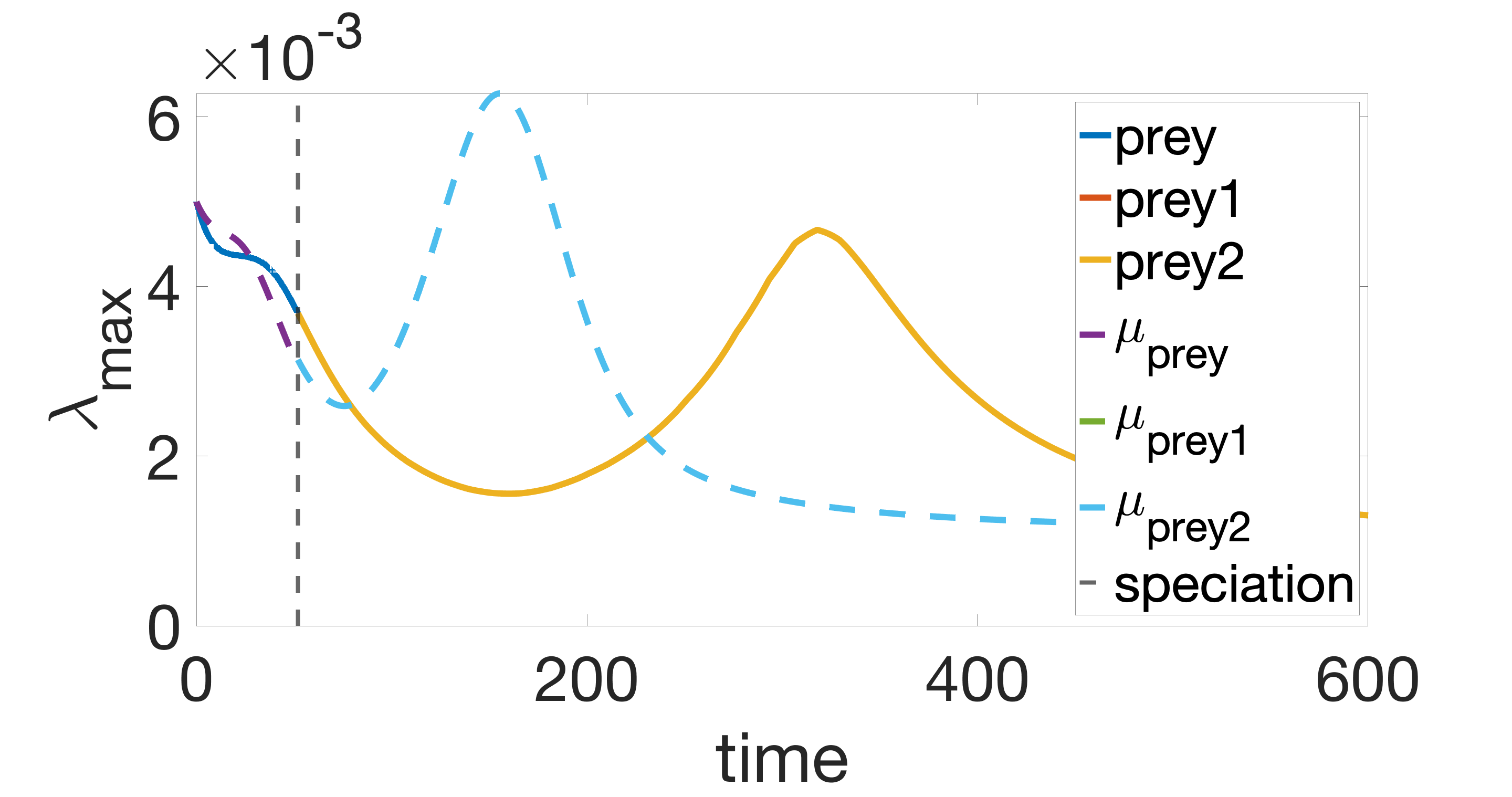}
  \caption{Heuristic method.}
\end{subfigure}  
  \caption{Prey maximum eigenvalue of trait covariance matrix as functions of time. Curves for child-species are overlapping.}
  \label{ex2:prey_variance}
\end{figure}


\begin{figure}[h]
  \centering
\begin{subfigure}{.45\textwidth}
  \centering
  \includegraphics[width=1.\linewidth]{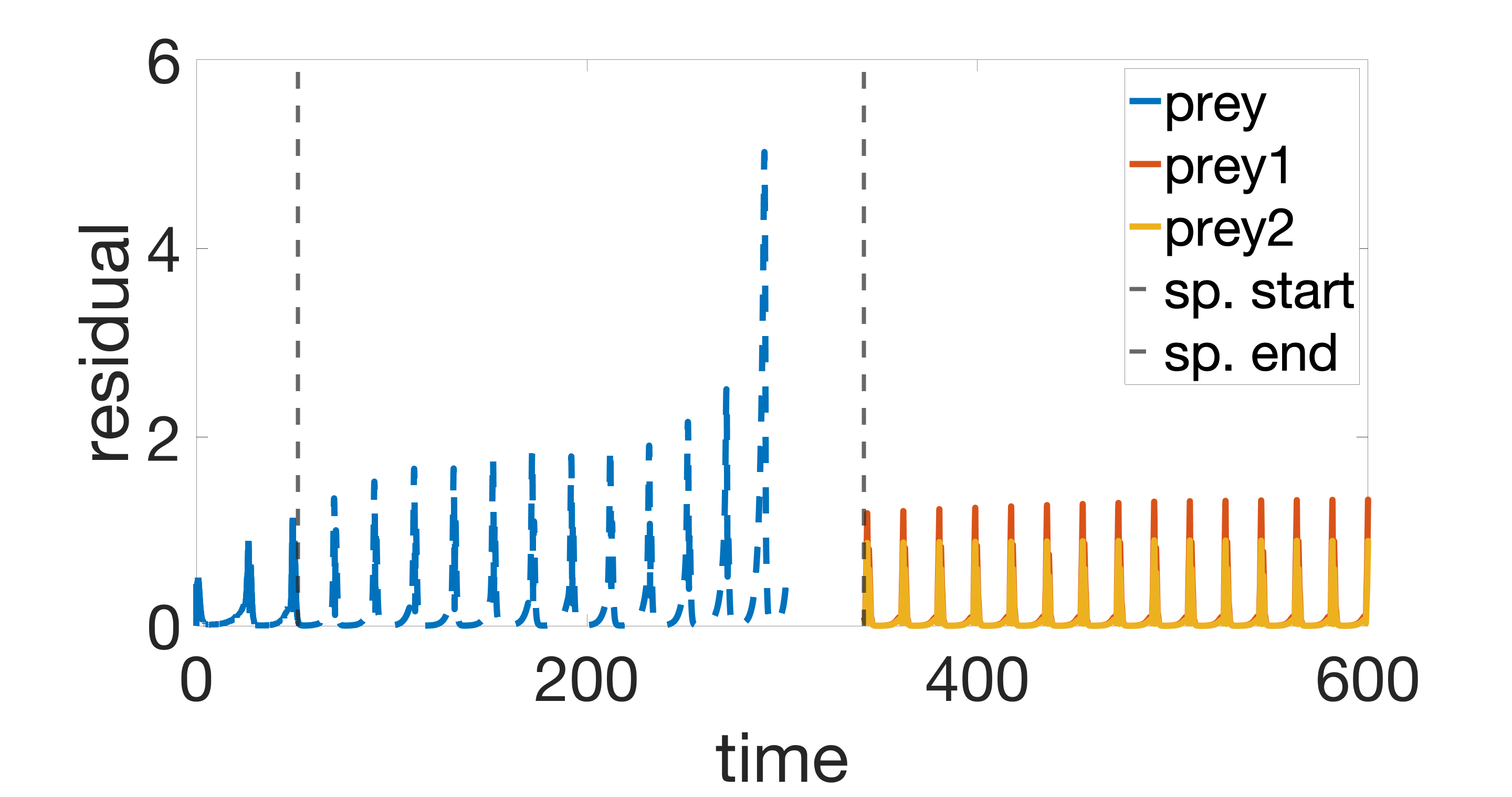}
  \caption{Multi-scale method.}
\end{subfigure}%
\hspace{.05\textwidth}
\begin{subfigure}{.45\textwidth}
  \centering
  \includegraphics[width=1.\linewidth]{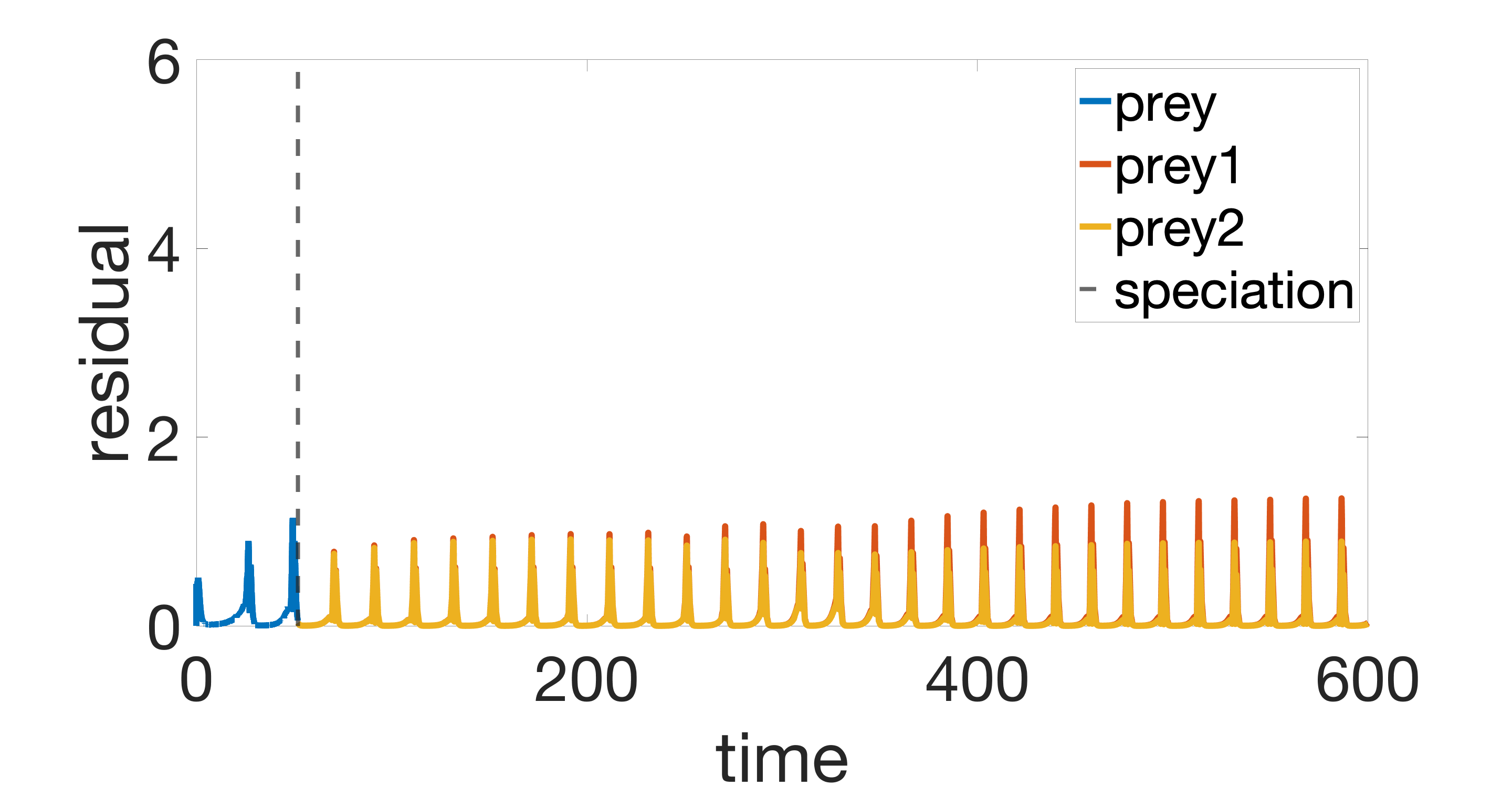}
  \caption{Heuristic method.}
\end{subfigure}  
  \caption{A posteriori modeling-remainder estimator, $\eta_{\textn{rem},i}^{k}$, for `prey'-species as functions of time.}
  \label{ex2:prey_residuals}
\end{figure}

\begin{figure}[h]
  \centering
\begin{subfigure}{.45\textwidth}
  \centering
  \includegraphics[width=1.\linewidth]{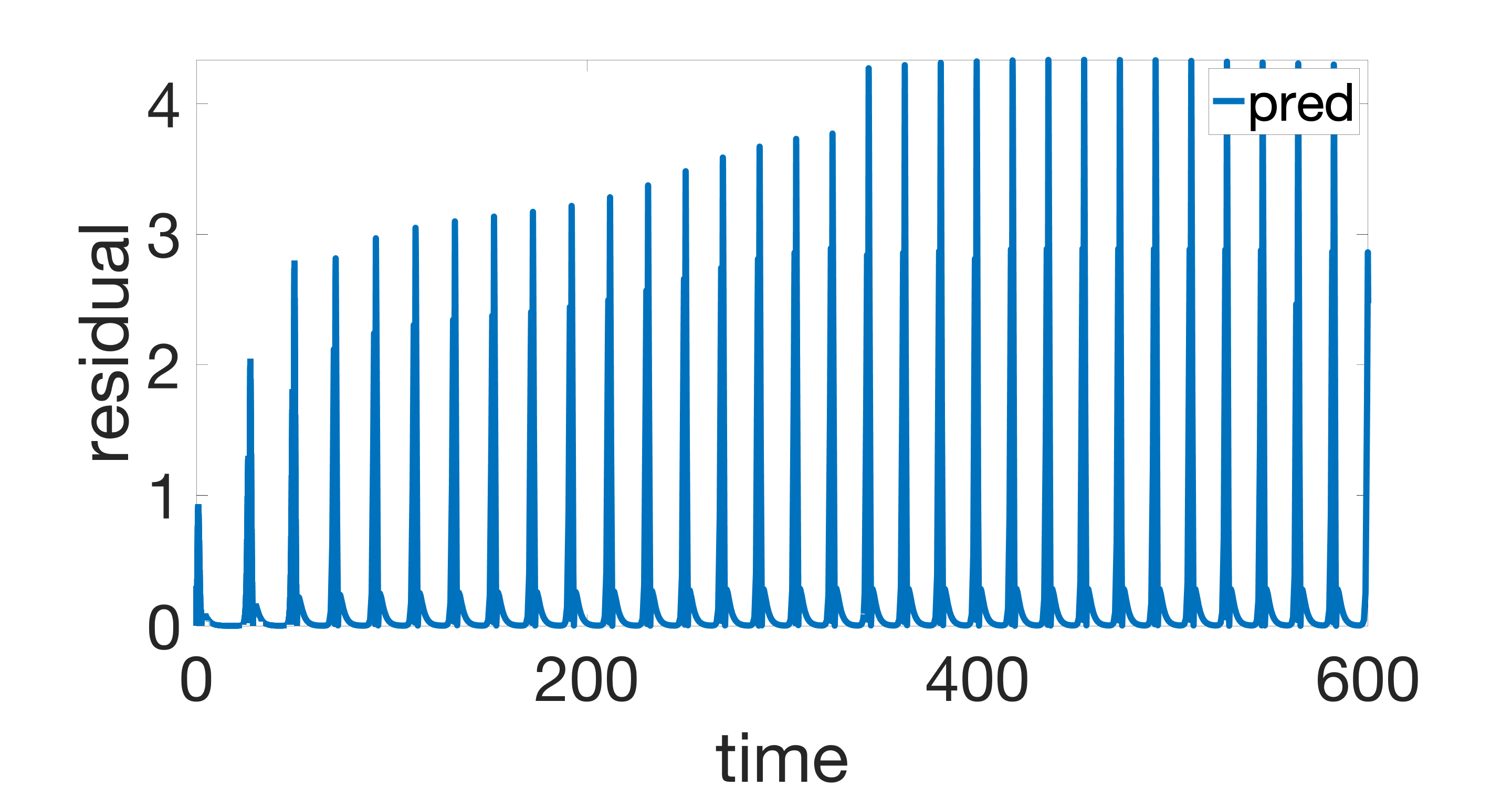}
  \caption{Multi-scale method.}
\end{subfigure}%
\hspace{.05\textwidth}
\begin{subfigure}{.45\textwidth}
  \centering
  \includegraphics[width=1.\linewidth]{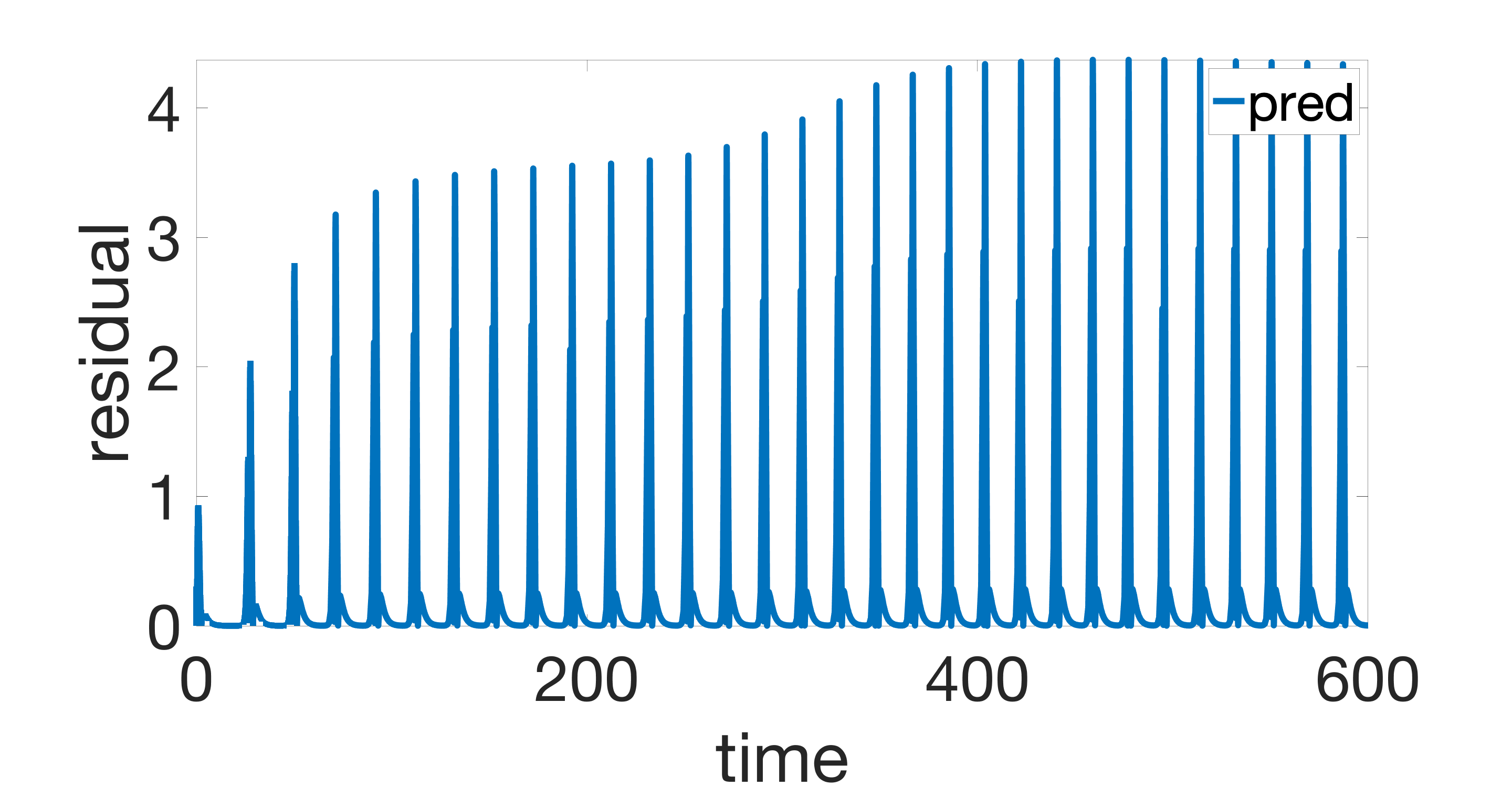}
  \caption{Heuristic method.}
\end{subfigure}  
  \caption{A posteriori modeling-remainder estimator, $\eta_{\textn{rem},i}^{k}$, for `predator'-species as a function of time.}
  \label{ex2:pred_residuals}
\end{figure}
\begin{remark}[Computation times]
The heuristic method will have a computation time approximately equal to that of solving only the species level model for the same time interval. On the other hand, the multi-scale method will have a computation time higher than that of solving the population level model for a time interval equal to the speciation interval (with the same grid size as the local density region) since the compression operator is implemented as an iteration procedure. Thus, the heuristic method will in general be orders of magnitude faster than the multi-scale method. 
\end{remark}

\section{Conclusions}
\label{sec:conclusions}
We have developed two strategies for modeling speciation \\ events within the context of species interaction models. The first, \emph{heuristic approach}, is based on splitting the diverging species according to the spectrum of the trait covariance matrix. The second, \emph{multi-scale approach}, is based on resolving the diverging species as a population density distribution using a fine-scale population level model for the duration of the speciation event (i.e., until `child'-species are sufficiently separated in trait space). Crucial to both these approaches is the connection between the species scale and the population scale, i.e., the ability to view the species either as an abundance-trait-covariance tuple, or as a population density distribution. This allows for defining a multi-scale framework in which these two scales are coupled, and to calculate the a posteriori error bound of the reconstructed macro-scale solution which then indicates the modeling error. We have also given conditions on the nonlinearities of the micro-scale model for which well-posedness of the time-discrete problem is guaranteed. Using explicit equilibrated flux and density reconstructions, we presented a posteriori error estimates for an error measure composed of an energy $H^1$-norm and a semi-metric in terms of a  residual monotone operator. In particular, the dual norm of the residual is found to be equal to the error between the exact and approximate solutions, again given conditions on the nonlinearities. Even when the theoretical conditions are not fulfilled, our framework provides a working algorithm in practice, as our second numerical example shows. Finally, regarding the heuristic and multi-scale methods, by comparison of the species parameters with the corresponding moments from the reference (global) PLM solution, it is clear that the multi-scale approach is superior to the heuristic approach. In fact, from our experiments it appears that a multi-scale approach to speciation is indeed required for eco-evolutionary modeling at the species level, in which speciation events are allowed. We propose that this multi-scale approach might serve as a productive way of integrating ecological processes and evolutionary processes.

\section*{Acknowledgments}
The research was funded in part by Norwegian Research Council project no. 263149.

\bibliographystyle{siamplain}
\bibliography{references}
\end{document}